\documentclass{article}

\usepackage[T1]{fontenc}
\usepackage{graphicx}
\usepackage{tikz}
\usetikzlibrary{arrows.meta,positioning,matrix,calc}
\usepackage{quantikz}
\usepackage{amsmath}
\usepackage{amssymb}
\usepackage{amsthm}
\usepackage{mathtools}
\usepackage{booktabs}
\usepackage{array}
\usepackage{tabularx}
\usepackage{multirow}
\usepackage{makecell}
\usepackage{microtype}
\usepackage{needspace}
\usepackage{float}
\usepackage{placeins}
\usepackage{yhmath}
\usepackage[margin=1in]{geometry}
\usepackage[hidelinks]{hyperref}
\usepackage{xurl}
\usepackage{authblk}

\newtheorem{theorem}{Theorem}[section]
\newtheorem{lemma}[theorem]{Lemma}
\newtheorem{proposition}[theorem]{Proposition}
\newtheorem{corollary}[theorem]{Corollary}

\theoremstyle{definition}
\newtheorem{definition}[theorem]{Definition}

\newtheorem{example}{Example}[section]

\theoremstyle{remark}
\newtheorem{remark}[theorem]{Remark}
\numberwithin{equation}{section}
\newcolumntype{Y}{>{\raggedright\arraybackslash}X}
\newcolumntype{L}[1]{>{\raggedright\arraybackslash}p{#1}}

\usepackage[
  backend=biber,
  style=numeric-comp,
  sorting=none,
  sortcites=true
]{biblatex}
\AtBeginBibliography{\small\sloppy\setlength{\emergencystretch}{3em}}

\DeclareMathOperator{\Cl}{Cl}
\DeclareMathOperator{\GL}{GL}

\DeclareMathOperator{\Hom}{Hom}

\DeclareMathOperator{\rank}{rank}

\DeclareMathOperator{\Sp}{Sp}

\newcommand{\ZZ}{\mathbb{Z}}
\newcommand{\FF}{\mathbb{F}}
\newcommand{\CC}{\mathbb{C}}
\newcommand{\TT}{\mathbb{T}}
\newcommand{\UU}{\mathrm{U}}
\newcommand{\bH}{\mathcal{H}}
\newcommand{\Pauli}{\mathcal{P}}
\newcommand{\transpose}{\mathsf{T}}

\title{Hybrid Qubit--Rotor Quantum Systems: Clifford Structure, Universal Control, and Applications}
\author[1]{Dengyao Luo\thanks{dluo6@ncsu.edu}}
\author[1,2,3]{Arvin Kushwaha\thanks{research@arvinsk.org}}
\author[1,2]{Mastawal Tirfe}
\author[1]{Bojko N. Bakalov\thanks{bnbakalo@ncsu.edu}}
\date{August 20, 2026}
\affil[1]{Department of Mathematics, North Carolina State University,
Raleigh, North Carolina 27695, USA}
\affil[2]{Department of Physics, North Carolina State University,
Raleigh, North Carolina, 27695, USA}
\affil[3]{Department of Physics, Universit\"{a}t Stuttgart,
Stuttgart, Baden-W\"{u}rttemberg, 70569, Germany}
\begin{document}

\maketitle

\begin{abstract}
A $U(1)$ quantum rotor pairs a periodic angle with an integer-valued conjugate
momentum, and occurs in molecular rotation, superconducting phase--charge
circuits, and compact gauge fields. Coupling such a rotor coherently to qubits
gives a hybrid register whose control structure is not inherited from either
the oscillator--qubit or the qudit case. We develop a Clifford
theory, together with a universal-control result, for registers of $n$ qubits
and $r$ rotors. We classify all automorphisms of the hybrid phase space
$\FF_2^{2n}\times\ZZ^r\times\TT^r$ that preserve the Weyl commutation
relations, and give an explicit finite Clifford circuit for each one. The
classification is directional: rotor momentum parity may control qubit Pauli
operations within the Clifford group, while every nonzero qubit-controlled
rotor momentum shift is non-Clifford. It also yields normal forms for the
mixed qubit--rotor couplings and the exact minimum number of elementary mixed
gates needed to synthesize them. Adding a rotor cosine potential and one fixed
qubit--rotor conditional phase to the local Clifford operations gives
universal control on the full Hilbert space in the strong operator topology.
We then apply this structure in three settings: an exact controlled-shift
realization of gauge-covariant matter hopping, which is necessarily
non-Clifford; rotor phase estimation with direct angle readout and probe
optimization under momentum-support and energy constraints; and finite Fourier
transforms on rotor momentum codes, where the one-rotor transform for $d=2^s$
compiles into $O(s)$ momentum-selective and controlled-shift instructions and
each cross-register Fourier factor is implemented by one quadratic rotor
Clifford gate.
\end{abstract}

\begingroup
\small
\tableofcontents
\endgroup
\clearpage

\section{Introduction}

Quantum rotors arise in molecular rotation, rotational motion of trapped
particles, superconducting phase--charge degrees of freedom, and compact
$U(1)$ gauge fields
\cite{chou2017preparation,koch2019molecular,urban2019coherent,
glikin2025rotational,leibscher2025planar,martinetz2020electromechanics,
alcainecuervo2026compact}. Their Hilbert spaces also support
finite-dimensional logical encodings
\cite{raynal2010rotor,vuillot2024homological}, while planar-rotor models
can instead be digitized into qubit registers \cite{moeed2025qubit}. Here we
retain the rotor itself as a computational degree of freedom and allow it to
interact coherently with qubits.

Such hybrid qubit--rotor systems occur in several contexts. In compact $U(1)$
gauge--matter theory, gauge links carry integer electric flux while matter is
represented by finite-dimensional degrees of freedom \cite{kogut1975wilson}.
Recent quantum-simulation proposals likewise combine finite-dimensional matter
registers with infinite-dimensional gauge degrees of freedom
\cite{crane2024fermions,ale2026electrodynamics,alcainecuervo2026compact}, and
direct qubit--rotor couplings appear in hybrid quantum rotor devices
\cite{leitch2024thermodynamics}. Rotor quantum phase estimation uses a rotor
as the phase register for a qubit target
\cite{kemper2025continuousdiscrete}, while hybrid processor architectures
require coherent interfaces between finite- and infinite-dimensional registers
\cite{liu2026instruction,bierman2026statetransfer}. In each case, control is
required both within the individual registers and across the qubit--rotor
interface.

These examples share a common control problem. The two registers have
incompatible phase spaces, and once they are coupled, the transformations
preserving their joint commutation relations are no longer determined by
either one alone. Neither is the set of operations needed to go beyond them.

Much of hybrid quantum processing has been developed for oscillator modes
\cite{vanloock2008hybrid,andersen2015discrete,krastanov2015universal,
eickbusch2022fast,liu2026instruction}, and oscillator--qudit systems admit a
corresponding stabilizer and symplectic theory
\cite{chakraborty2026stabilizer}. Unlike an oscillator, which has two
continuous phase-space directions, a $U(1)$ rotor pairs a compact
configuration variable with a discrete momentum \cite{albert2017phasespaces}:
its angle $\theta\in\TT=\mathbb R/2\pi\ZZ$ and integer-valued angular momentum
$\ell\in\ZZ$ form the Pontryagin-dual pair
\begin{equation*}
  \ZZ\longleftrightarrow\TT .
\end{equation*}
For $n$ qubits and $r$ rotors, the Weyl operators are parametrized by the
Abelian group
\begin{equation*}
  K_{n,r}
  =
  \FF_2^{2n}\times\ZZ^r\times\TT^r .
\end{equation*}

The first consequence is an asymmetry in the mixed Clifford action. Rotor
momentum parity can control a qubit Pauli operation, but no Clifford operation
can make a qubit control a nonzero rotor momentum shift, since every
homomorphism from $\FF_2^{2n}$ to $\ZZ^r$ is trivial. This has no counterpart
in oscillator--qubit systems, where the mixed blocks vanish in both
directions, or in finite qudit systems, where they can be nonzero in both.

Normalizer circuits over Abelian groups provide the general setting for this
structure. Ref.~\cite{bermejovega2016normalizer} develops a generalized
stabilizer formalism for such groups, including the configuration group
$\FF_2^n\times\ZZ^r$ underlying $K_{n,r}$, and establishes the corresponding
Gottesman--Knill simulation result, which the hybrid Clifford circuits
constructed below inherit. Ref.~\cite{bauer2026quadratic} gives a broader
quadratic formulation of Clifford-type structures over Abelian groups, and the
pure-rotor Clifford structure is described in Ref.~\cite{xu2024multimode}. In
these treatments group automorphisms enter as normalizer-gate primitives, as
the simulation theorem requires.

We work out the qubit--rotor case at the level of circuits and cost. Every
automorphism preserving the Weyl commutator has the unique block form of
Theorem~\ref{thm:hybrid-phase-space-classification}, and each one is realized
by an explicit finite circuit assembled from qubit Clifford gates, rotor
Clifford gates, and elementary mixed gates
(Theorem~\ref{thm:explicit-hybrid-realization}). The qubit--rotor coupling is
carried entirely by one binary matrix, the mixed block $C$, which determines
both the parity-controlled Pauli action and the antisymmetric part of the
shear matrix $\Theta$. Under Clifford operations acting separately on the two
registers, $C$ is completely classified by the pair
$k=\rank_{\FF_2}C$ and
$h=\tfrac12\rank_{\FF_2}\!\bigl(C^{\mathsf T}J_nC\bigr)$
(Theorem~\ref{thm:classification-mixed-blocks}). Here $k$ counts the
independent qubit Pauli directions controlled by rotor momentum parity and
equals the minimum number of elementary mixed gates in the cost model of
Corollary~\ref{cor:minimum-mixed-gate-count}, while $h$ counts the independent
anticommuting pairs among those directions.

The same analysis applies directly to parameterized gates.
Sec.~\ref{sec:gate-families} determines the Clifford parameter values of the
momentum- and angle-dependent operations that arise on rotor platforms:
momentum-diagonal phases, momentum-dependent qubit gates, periodic angle
potentials, and qubit-controlled rotor displacements. In particular, every
nonzero qubit-controlled rotor momentum shift is non-Clifford. Beyond the
Clifford group, the local Clifford operations $\Cl_{n,r}^{\mathrm{loc}}$, the
cosine-potential family $K_1(t)$ on one rotor, and the fixed conditional phase
$C_{c_1,1}(\pi/4)$ generate a subgroup of $\UU(\bH_{n,r})$ that is dense in
the strong operator topology for $n,r\geq1$. The single-rotor step uses
spectral controllability results for infinite-dimensional bilinear
Schr\"odinger systems
\cite{chambrion2009discretespectrum,boscain2012spectral}. This is an existence
result: it gives no bound on circuit length or control time.

The first application is compact $U(1)$ gauge--matter dynamics. Rotor links
carry integer electric flux and qubits encode matter occupations
\cite{kogut1975wilson,crane2024fermions}; related implementations use
qubit--qumode or superconducting phase--charge degrees of freedom
\cite{ale2026electrodynamics,alcainecuervo2026compact}. Gauge-covariant
hopping transfers a matter excitation between neighboring sites together with
a unit change of the intervening link flux. We realize its time evolution
exactly on the full rotor Hilbert space by conjugating a qubit exchange with a
qubit-controlled rotor shift. The shift is non-Clifford, so gauge-covariant
hopping lies outside the Clifford group: every hopping term in a rotor-based
compact $U(1)$ simulation carries an irreducible non-Clifford cost. This
construction is exact only on the full rotor space; we therefore study
hard-wall electric-flux truncation through ground-state and real-time
calculations on an open two-plaquette model.

A second application uses the rotor as the phase register in quantum phase
estimation \cite{kemper2025continuousdiscrete}. The momentum label $\ell$
indexes the controlled powers $U^{-\ell}$ of the target unitary. Phase
kickback produces the character $e^{-i\ell\varphi}$ of $\ZZ$, which translates
the conjugate angle distribution under $\ZZ\longleftrightarrow\TT$. The
eigenphase can therefore be read directly from an angle measurement, without a
coherent inverse QFT on the phase register. The initial rotor probe determines
both the estimation accuracy and the required range of controlled powers. At
fixed momentum support, the optimum is the cosine-window probe
\cite{buzek1999optimal,imai2009fourier}; at fixed mean kinetic energy, the
optimum is a Mathieu probe \cite{hayashi2023special}. Under the
circular-variance criterion, both optimized probes give $O(E_R^{-1/2})$
angular-error scaling for $E_R=\langle\hat\ell^2\rangle$, compared with
$O(E_R^{-1/4})$ for the uniform finite-momentum probe.

The third application uses finite rotor codes for Fourier processing. We
construct the same finite Fourier matrix in two forms, related to hybrid
state-transfer and oscillator-assisted Fourier methods
\cite{liu2025mixed,bierman2026statetransfer}. One construction maps an angle
code to a momentum code using local angle-to-momentum Fourier operations and
mixed code-space phases. The other remains within a rotor momentum code and
compiles the transform using the hybrid gate set developed earlier. Under the
logical-instruction convention stated in
Sec.~\ref{subsec:qft-logical-resource-comparison}, the one-rotor compilation
for $d=2^s$ uses $O(s)$ instructions. Each cross-register base-$d$ Fourier
phase is implemented by the rotor--rotor quadratic Clifford gate
\begin{equation*}
  \mathrm{CPHS}_{jk}(\alpha)
  :=
  e^{i\alpha\hat\ell_j\hat\ell_k},
\end{equation*}
while the corresponding factor in the binary decomposition contains $s^2$
controlled-phase gates.

These constructions are formulated at the level of logical operations.
Sec.~\ref{sec:physical-implementation-considerations} connects them to
native-rotor and finite-code implementations. Superconducting phase--charge
circuits provide the quadratic momentum and cosine terms used in the control
construction
\cite{koch2007chargeinsensitive,alcainecuervo2026compact}. Finite-code
implementations can instead use spectrally selective qubit control,
conditional displacements, and nonlinear interactions on a chosen subspace of
another infinite-dimensional system
\cite{heeres2015cavity,eickbusch2022fast,diringer2024conditional,
rainaldi2026trigonometric}. For the compiled QFT, the main requirements are
resolution of individual binary digits of the momentum label and, for
$r\geq2$, a cross-register interaction angle $2\pi/d^r$.

\clearpage
\section{Clifford structure of hybrid qubit--rotor systems}
\label{sec:clifford-structure-hybrid-qubit-rotor-systems}
We begin by constructing the Weyl and Clifford structure of the hybrid
qubit--rotor register. The commutator pairing on $K_{n,r}=\FF_2^{2n}\times\ZZ^r\times\TT^r$ fixes the allowed phase-space transformations. 
Their block form leads to finite Clifford circuit realizations. 
Local changes of basis then reduce the mixed block to two rank invariants and give the elementary mixed-gate cost.

\subsection{Hybrid register and phase space}
\label{subsec:hybrid-register-phase-space}

Throughout this paper, a rotor means a $U(1)$ quantum rotor with configuration
group $\TT= \mathbb{R}/2\pi\ZZ \cong S^1$ and momentum group $\ZZ$. Consider
$n$ qubits and $r$ rotors,

\begin{equation}
  \bH_{n,r} = (\CC^2)^{\otimes n}\otimes\ell^2(\ZZ^r) \simeq (\CC^2)^{\otimes n}\otimes L^2(\TT^r).
\end{equation}

The rotor phase space pairs the discrete group $\ZZ^r$ with the compact group
$\TT^r$. We use the joint basis

\begin{equation}
  |u,\ell\rangle, \qquad u\in\FF_2^n, \quad \ell\in\ZZ^r.
\end{equation}

For one rotor, the generalized angle states are related to momentum states by
the Fourier relation

\begin{equation}
  |\theta\rangle = \frac{1}{\sqrt{2\pi}}\sum_{\ell \in \ZZ}e^{-i\ell\theta}|\ell\rangle.
\label{eq:rotor-angle-momentum-fourier-relation}
\end{equation}

All angle-dependent operators below are defined through periodic functions,
such as $e^{im\hat\theta}$ and $\cos\hat\theta$.

The joint qubit computational basis and rotor momentum basis are indexed by
the Abelian group
$
  A_{n,r}=\FF_2^n\times\ZZ^r.
$
Its Pontryagin dual,
$
  \widehat A_{n,r}=\FF_2^n\times\TT^r,
$
indexes the corresponding phase operators.

We define the hybrid qubit--rotor phase space by
\begin{equation}
  K_{n,r}
  =A_{n,r}\times\widehat A_{n,r}
  \cong
  \FF_2^{2n}\times\ZZ^r\times\TT^r.
\end{equation}
The Weyl commutator pairing introduced below equips $K_{n,r}$ with its
symplectic structure.

Write 

\begin{equation}
  x=(q,m,\phi),
  \qquad
  q=\begin{pmatrix}a\\ b\end{pmatrix}\in\FF_2^{2n},
  \quad
  m\in\ZZ^r,
  \quad
  \phi\in\TT^r,
\end{equation}

and define

\begin{equation}
  J_n=
  \begin{pmatrix}
    0&I_n\\
    I_n&0
  \end{pmatrix}.
\end{equation}

An overbar on an integer vector or matrix denotes entrywise reduction modulo two.
For a matrix $M$ over $\FF_2$, $\pi M$ denotes the corresponding matrix
with entries in $\TT$, under the entrywise identification
$0\mapsto0$ and $1\mapsto\pi$.

\subsection{Weyl multiplication and commutation relation}

Define the qubit Weyl operators by 

\begin{equation}
  X_Q(a)|u\rangle=|u+a\rangle,
  \qquad
  Z_Q(b)|u\rangle=(-1)^{b^{\transpose}u}|u\rangle,
\end{equation}

and define the rotor momentum shift $X_R(m)$ and rotor phase operator
$Z_R(\phi)$ by

\begin{equation}
  X_R(m)|\ell\rangle=|\ell+m\rangle,
  \qquad
  Z_R(\phi)|\ell\rangle=e^{i\phi^{\transpose}\ell}|\ell\rangle.
\end{equation}

We write the corresponding rotor Weyl displacement as
\begin{equation}
  D_R(m,\phi):=X_R(m)Z_R(\phi).
  \label{eq:rotor-weyl-displacement}
\end{equation}

For $x=(q,m,\phi)\in K_{n,r}$, define the associated hybrid Weyl operator by

\begin{equation}
  W(x)=W(q,m,\phi):=X_Q(a)Z_Q(b)\otimes X_R(m)Z_R(\phi).
\end{equation}

For $y=(q',m',\phi')\in K_{n,r}$, direct multiplication gives
\begin{equation}
  W(x)W(y)=\sigma(x,y)W(x+y),
  \qquad
  \sigma(x,y)=(-1)^{b^{\transpose}a'}e^{i\phi^{\transpose}m'}.
\end{equation}

Reversing the product order gives
\begin{equation}
  W(x)W(y)=\omega(x,y)W(y)W(x),
\end{equation}
where
\begin{equation}
\omega(x,y)
=\frac{\sigma(x,y)}{\sigma(y,x)}
=(-1)^{q^{\transpose}J_nq'}
\exp\!\left(i[\phi^{\transpose}m'-{\phi'}^{\transpose}m]\right).
  \label{eq:hybrid-commutator}
\end{equation}
The Weyl commutator pairing $\omega$ is continuous and multiplicative in
each phase-space argument.

\begin{lemma}[Perfectness of the commutator pairing]
\label{lemma:perfect}
The map
\begin{equation}
  D_\omega: K_{n,r} \longrightarrow \widehat K_{n,r}, \qquad y\longmapsto\omega(y,\,\cdot\,),
\end{equation}
is an isomorphism. Thus every continuous character of $K_{n,r}$ can be
written uniquely in the form $\omega(y,\,\cdot\,)$ for some
$y\in K_{n,r}$.
\end{lemma}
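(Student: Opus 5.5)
We will prove Lemma~\ref{lemma:perfect} by reducing to the standard Pontryagin dualities $\widehat{\FF_2}\cong\FF_2$, $\widehat{\ZZ}\cong\TT$ and $\widehat{\TT}\cong\ZZ$, together with the fact that the dual of a finite product is the product of the duals. The plan has three steps.

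\textbf{Step 1: $D_\omega$ is a well-defined continuous homomorphism.} By \eqref{eq:hybrid-commutator}, $\omega$ is continuous and multiplicative in each argument. Hence each $\omega(y,\cdot)$ is a continuous character of $K_{n,r}$, and $y\mapsto\omega(y,\cdot)$ is a homomorphism. Continuity of $D_\omega$ with respect to the compact-open topology on $\widehat K_{n,r}$ is routine. On the $\FF_2^{2n}\times\ZZ^r$ part the topology is discrete. On the $\TT^r$ part, the map $\phi'\mapsto e^{-i\phi'^{\transpose}m}$ converges uniformly on compact subsets of $\ZZ^r$, i.e.\ on finite sets.

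\textbf{Step 2: describe $\widehat K_{n,r}$ explicitly.} Since $K_{n,r}$ is a finite product, every continuous character $\chi$ factors as a product of characters on $\FF_2^{2n}$, $\ZZ^r$ and $\TT^r$.
\begin{itemize}
\item A character on $\FF_2^{2n}$ has the form $q'\mapsto(-1)^{c^{\transpose}q'}$ with $c\in\FF_2^{2n}$ unique, since characters of $\FF_2$ are $\pm1$-valued.
\item A character on $\ZZ^r$ has the form $m'\mapsto e^{i\psi^{\transpose}m'}$ with $\psi\in\TT^r$ unique; it is determined by the values on the basis vectors.
\item A continuous character on $\TT^r$ has the form $\phi'\mapsto e^{i k^{\transpose}\phi'}$ with $k\in\ZZ^r$ unique. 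Continuous homomorphisms $\TT\to\TT$ lift to additive continuous maps $\mathbb R\to\mathbb R$, which are linear and must vanish on $2\pi\ZZ$ modulo $2\pi$; this forces an integer slope.
\end{itemize}
Thus $\widehat K_{n,r}\cong\FF_2^{2n}\times\TT^r\times\ZZ^r$ as topological groups.

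\textbf{Step 3: solve for $y$.} For given data $(c,\psi,k)$, set
\[
y=(J_nc,\,-k,\,\psi).
\]
Then
\[
\omega(y,x')=(-1)^{(J_nc)^{\transpose}J_nq'}\,e^{i[\psi^{\transpose}m'+k^{\transpose}\phi']},
\]
which equals $\chi(x')$ because $J_n^{\transpose}J_n=I$ over $\FF_2$. Uniqueness follows from injectivity. If $\omega(y,\cdot)\equiv1$, then:
\begin{itemize}
\item testing against $q'$ gives $J_nq=0$, hence $q=0$;
\item testing against $m'$ gives $\phi=0$;
\item testing against $\phi'$ gives $m=0$.
\end{itemize}
So $D_\omega$ is a continuous bijective homomorphism.

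\textbf{Main obstacle.} The delicate point is the claim that $D_\omega$ is an isomorphism of topological groups, and not merely a bijection; in this section only the algebraic statement is actually used. To get the topological statement, we will observe that in the coordinates above $D_\omega$ is a product of three maps. Two of them are automorphisms of discrete groups: $J_n$ on $\FF_2^{2n}$, and negation on $\ZZ^r$. The third is the identity $\TT^r\to\TT^r$, which is a homeomorphism because the compact-open topology on $\widehat{\ZZ^r}$ is the topology of pointwise convergence, i.e.\ the product topology on $\TT^r$. The inverse of $D_\omega$ is therefore continuous.
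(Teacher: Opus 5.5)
Your proof is correct, but it is organized differently from the paper's. The paper does not split $K_{n,r}$ by group type. It writes $K_{n,r}=\mathcal A\times\widehat{\mathcal A}$ with $\mathcal A=A_{n,r}=\FF_2^n\times\ZZ^r$, so $\FF_2^{2n}$ is cut into its $X$- and $Z$-halves. It then notes $\omega\bigl((u,\chi),(v,\psi)\bigr)=\chi(v)\psi(u)^{-1}$, hence $D_\omega(u,\chi)=(\chi,-\iota_{\mathcal A}(u))$ under $\widehat{K_{n,r}}\cong\widehat{\mathcal A}\times\widehat{\widehat{\mathcal A}}$, and finishes in one line by Pontryagin biduality.

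You instead split $K_{n,r}=\FF_2^{2n}\times\ZZ^r\times\TT^r$ and identify each dual by hand. You find that $D_\omega$ is $(q,m,\phi)\mapsto(J_nq,\phi,-m)$, with explicit inverse $(c,\psi,k)\mapsto(J_nc,-k,\psi)$. This is exactly the paper's swap-with-sign written in coordinates, and your character computations amount to a direct proof of biduality for $\FF_2$, $\ZZ$, and $\TT$.

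The two routes trade off as follows.
\begin{itemize}
\item The paper's route is shorter and coordinate-free. It applies verbatim to $\mathcal A\times\widehat{\mathcal A}$ for any locally compact Abelian $\mathcal A$, which is relevant to the compact Abelian generalization in the outlook. The price is citing Pontryagin's theorem.
\item Your route is self-contained, gives the preimage $y$ of a character in closed form, and makes the topology explicit. You do, however, assert rather than prove that $\widehat{\TT^r}$ is discrete. This is standard, and also follows directly from $\sup_\phi\bigl|e^{ik^{\transpose}\phi}-e^{i{k'}^{\transpose}\phi}\bigr|=2$ for $k\neq k'$.
\end{itemize}

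One small slip: in Step~1, the map $\phi'\mapsto e^{-i{\phi'}^{\transpose}m}$ is the contribution of the discrete component $m$ of $y$. The continuity question for the $\TT^r$ component of $y$ concerns $\phi\mapsto e^{i\phi^{\transpose}m'}$ as an element of $\widehat{\ZZ^r}$. This does not affect the result, since your final paragraph proves bicontinuity of $D_\omega$ directly.
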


\begin{proof}
Set $\mathcal A=A_{n,r}$ and write
$K_{n,r}=\mathcal A\times\widehat{\mathcal A}$. For
$x=(u,\chi)$ and $y=(v,\psi)$,
Eq.~\eqref{eq:hybrid-commutator} becomes
\begin{equation}
  \omega(x,y)=\chi(v)\psi(u)^{-1}.
\end{equation}
Let
$
  \iota_{\mathcal A}:
  \mathcal A
  \longrightarrow
  \widehat{\widehat{\mathcal A}}
$
be the evaluation map. Under the natural identification
$
  \widehat{K_{n,r}}
  \cong
  \widehat{\mathcal A}
  \times
  \widehat{\widehat{\mathcal A}},
$
the map $D_\omega$ is
\begin{equation}
  D_\omega(u,\chi) = \bigl( \chi, -\iota_{\mathcal A}(u) \bigr).
\end{equation}
Pontryagin biduality identifies $\mathcal A$ with
$\widehat{\widehat{\mathcal A}}$. Hence $D_\omega$ is a topological group
isomorphism, which proves the claim
\cite[Example~1.1]{prasad2010symplectic}.
\end{proof}

\subsection{Hybrid Clifford group and induced phase-space action}

Adjoining scalar phases to the hybrid Weyl operators gives the hybrid
Pauli group
\begin{equation}
  \Pauli_{n,r}
  =\{zW(x):z\in\UU(1),\ x\in K_{n,r}\}.
\end{equation}
Since the commutator pairing is nondegenerate, the center of
$\Pauli_{n,r}$ consists of the scalar phases:
$
  Z(\Pauli_{n,r})=\UU(1)
$.
Moreover, the map $zW(x)\mapsto x$ induces a group isomorphism
$
  \Pauli_{n,r}/\UU(1)I\cong K_{n,r}.
$

The hybrid Clifford group is defined as the unitary normalizer of $\Pauli_{n,r}$:
\begin{equation}
  \Cl_{n,r}
  =N_{\UU(\bH_{n,r})}(\Pauli_{n,r}).
\end{equation}

\begin{lemma}[Clifford action on phase space]
For every $U\in\Cl_{n,r}$, conjugation induces a unique group automorphism
$S_U:K_{n,r}\to K_{n,r}$. For the chosen Weyl representatives, there is a
unique phase function $\alpha_U:K_{n,r}\to\UU(1)$ such that
\begin{equation}
  UW(x)U^\dagger
  =
  \alpha_U(x)W(S_Ux).
\end{equation}
Moreover, $S_U$ preserves the commutator pairing:
\begin{equation}
  \omega(S_Ux,S_Uy)=\omega(x,y),
  \qquad x,y\in K_{n,r}.
\end{equation}
\end{lemma}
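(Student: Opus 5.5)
Since $U$ normalizes $\Pauli_{n,r}$, for each $x\in K_{n,r}$ the conjugate $UW(x)U^\dagger$ lies in $\Pauli_{n,r}$. It therefore has the form $zW(y)$ for some $z\in\UU(1)$ and $y\in K_{n,r}$. The first step is to show that this pair is unique. Suppose $zW(y)=z'W(y')$. Then $W(y)W(y')^{-1}$ is a scalar. Using $W(y')^{-1}=\sigma(y',-y')^{-1}W(-y')$ and the multiplication rule, $W(y-y')$ is a scalar, so it commutes with every Weyl operator. Hence $\omega(y-y',\cdot)\equiv1$, and Lemma~\ref{lemma:perfect} (injectivity of $D_\omega$) gives $y=y'$ and then $z=z'$. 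Alternatively, one can use the isomorphism $\Pauli_{n,r}/\UU(1)I\cong K_{n,r}$ already noted, which encodes the same fact. We then define $S_Ux:=y$ and $\alpha_U(x):=z$; uniqueness of both follows immediately.

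The second step is to show that $S_U$ is a group homomorphism. Conjugate the relation $W(x)W(x')=\sigma(x,x')W(x+x')$ by $U$. The left side becomes $\alpha_U(x)\alpha_U(x')\sigma(S_Ux,S_Ux')W(S_Ux+S_Ux')$. The right side becomes $\sigma(x,x')\alpha_U(x+x')W(S_U(x+x'))$. Uniqueness of the label modulo scalars gives $S_U(x+x')=S_Ux+S_Ux'$; as a byproduct we obtain a cocycle identity for $\alpha_U$.

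Bijectivity comes from $U^\dagger$, which also lies in the normalizer since the normalizer is a group. By uniqueness, $S_{U^\dagger}$ is a two-sided inverse of $S_U$.

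The third step is preservation of $\omega$. Conjugating $W(x)W(y)=\omega(x,y)W(y)W(x)$ by $U$ gives
\[
\alpha_U(x)\alpha_U(y)\,W(S_Ux)W(S_Uy)=\omega(x,y)\,\alpha_U(x)\alpha_U(y)\,W(S_Uy)W(S_Ux).
\]
Comparing with the commutation relation for $S_Ux,S_Uy$ yields $\omega(S_Ux,S_Uy)=\omega(x,y)$, because $W(S_Uy)W(S_Ux)$ is invertible.

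The main subtlety is whether ``group automorphism'' must mean a \emph{topological} automorphism, given that $K_{n,r}$ contains the compact factor $\TT^r$. If continuity is required, I would first express $S_U$ through its dual action. For fixed $y$, the map $x\mapsto\omega(S_Ux,y)=\omega(x,S_U^{-1}y)$ is a continuous character. Lemma~\ref{lemma:perfect} then identifies $S_U$ with the adjoint of $S_U^{-1}$ under $D_\omega$, that is, $S_U=D_\omega^{-1}\circ(S_U^{-1})^{\wedge}\circ D_\omega$. Continuity would follow if we knew $(S_U^{-1})^\wedge$ is continuous; but that is circular unless we know that $S_U$ is at least measurable. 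So I would try a different route: argue strong continuity of $x\mapsto UW(x)U^\dagger$ on the $\TT^r$ component, which holds because $\phi\mapsto Z_R(\phi)$ is strongly continuous and conjugation preserves this. Then show that the label map from $\Pauli_{n,r}/\UU(1)$, with the quotient strong topology, to $K_{n,r}$ is continuous. This would use expectation values in momentum states to read off $\phi$ and the shift $m$. If the statement only intends abstract automorphisms, this step can be omitted; I expect it to be the only nontrivial point.
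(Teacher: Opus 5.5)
Your proof is correct and follows the paper's route. Conjugation by $U$ fixes the scalars, so it induces an automorphism of $\Pauli_{n,r}/\UU(1)I\cong K_{n,r}$ that defines $S_U$ and $\alpha_U$, and because conjugation preserves Weyl commutators, $S_U$ preserves $\omega$; you spell out the uniqueness, additivity, and invertibility via $U^\dagger$ that the paper leaves implicit. The continuity worry can be dropped: the lemma and the definition of $\Sp(K_{n,r},\omega)$ use abstract automorphisms, and Theorem~\ref{thm:hybrid-phase-space-classification} later shows that every $\omega$-preserving automorphism is automatically continuous, since the torus--integer commutators force the torus block to equal $A^{-\transpose}$.
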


\begin{proof}
Conjugation by $U$ induces an automorphism of
$\Pauli_{n,r}/\UU(1)I\cong K_{n,r}$, which defines $S_U$.
The remaining scalar defines $\alpha_U$. Since conjugation preserves
operator commutators, $S_U$ preserves $\omega$.
\end{proof}

Define
\begin{equation}
  \Sp(K_{n,r},\omega) = \left\{ S\in\operatorname{Aut}(K_{n,r}): \omega(Sx,Sy)=\omega(x,y)
  \text{ for all }x,y\in K_{n,r} \right\}.
\end{equation}
The preceding lemma therefore defines a homomorphism
$\pi:\Cl_{n,r}\longrightarrow\Sp(K_{n,r},\omega)$ by $U\longmapsto S_U$.

\subsection{Classification of hybrid phase-space transformations}

\begin{theorem}[Hybrid phase-space classification]
\label{thm:hybrid-phase-space-classification}
Every group automorphism of $K_{n,r}$ preserving $\omega$ is
automatically continuous and has the unique form
\begin{equation}
  S_{M,A,C,\Theta}(q,m,\phi) =\left( Mq+C\bar m, Am, A^{-\transpose} [\phi+\Theta m+\pi C^{\transpose}J_nMq]
  \right),
  \label{eq:general-hybrid-map}
\end{equation}

where
$M\in\Sp(2n,\FF_2),
  A\in\GL(r,\ZZ),
  C\in\operatorname{Mat}_{2n\times r}(\FF_2),
  \Theta\in\operatorname{Mat}_r(\TT),$
and

\begin{equation}
  \Theta-\Theta^{\transpose}
  =\pi C^{\transpose}J_nC.
  \label{eq:theta-constraint}
\end{equation}
Conversely, every such tuple defines an element of $\Sp(K_{n,r},\omega)$.
\end{theorem}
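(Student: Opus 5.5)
The plan is to decompose an arbitrary $\omega$-preserving automorphism $S$ of $K_{n,r}=\FF_2^{2n}\times\ZZ^r\times\TT^r$ into blocks, using only abstract group structure and then the pairing identity. Continuity is not assumed, so the first step is to show that the torsion and divisibility structure forces most blocks. The subgroup $\TT^r$ is exactly the maximal divisible subgroup of $K_{n,r}$, because $\FF_2^{2n}\times\ZZ^r$ has no nonzero divisible elements. Hence any automorphism maps $\TT^r$ onto itself, and $S$ has block lower-triangular form with respect to the filtration $\TT^r\subset K_{n,r}$. Similarly, the components $\FF_2^{2n}\to\ZZ^r$ and $\TT^r\to\ZZ^r$ vanish: the first because $\ZZ^r$ is torsion-free, the second because $\TT^r$ is divisible and $\ZZ^r$ has no nonzero divisible elements. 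The component $\TT^r\to\FF_2^{2n}$ vanishes for the same divisibility reason. I therefore write
\[
S(q,m,\phi)=\bigl(Mq+C\bar m,\;Am,\;\Lambda\phi+\beta(q)+\gamma(m)\bigr),
\]
where the $\ZZ^r\to\FF_2^{2n}$ block factors through reduction mod $2$, giving the matrix $C$. Here $A\in\operatorname{Mat}_r(\ZZ)$, $\gamma:\ZZ^r\to\TT^r$ is given by a matrix in $\operatorname{Mat}_r(\TT)$, $\beta:\FF_2^{2n}\to\TT^r$ lands in the $2$-torsion $\{0,\pi\}^r$, and $\Lambda$ is an abstract endomorphism of $\TT^r$.

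The second step imposes $\omega(Sx,Sy)=\omega(x,y)$ on each pair of block types.
\begin{itemize}
\item Taking $x=(0,m,0)$ and $y=(0,0,\phi)$ gives $e^{-i\phi^\transpose\Lambda^\transpose Am}=e^{-i\phi^\transpose m}$ for all $\phi$, so $\Lambda^\transpose A=I$ as maps on $\ZZ^r$. Since $A$ is integral and the automorphism forces $A$ to be invertible on the quotient $\ZZ^r$, we get $A\in\GL(r,\ZZ)$ and $\Lambda=A^{-\transpose}$. In particular $\Lambda$ is continuous, which settles automatic continuity.
\item Two qubit vectors give $M^\transpose J_nM=J_n$ over $\FF_2$, so $M\in\Sp(2n,\FF_2)$.
\item Pairing $q$ with $m$ gives $(-1)^{q^\transpose M^\transpose J_nC\bar m}\,e^{-i\beta(q)^\transpose Am}=1$. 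Writing $\beta(q)=A^{-\transpose}\beta'(q)$, this forces $\beta'(q)=\pi C^\transpose J_nMq$, up to the sign convention for $\pi$-valued entries.
\item Pairing $m$ with $m'$, with $\gamma=A^{-\transpose}\Theta$, gives $(-1)^{\bar m^\transpose C^\transpose J_nC\bar m'}e^{i(m^\transpose\Theta^\transpose m'-m'^\transpose\Theta^\transpose m)}=1$, which is exactly $\Theta-\Theta^\transpose=\pi C^\transpose J_nC$ in $\TT$.
\end{itemize}
Uniqueness follows because each block is read off from $S$ on generators.

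For the converse, I check that $\omega$ is preserved by expanding the same four pairing types, which is the computation above run backwards. Bijectivity follows from the triangular structure: the diagonal blocks $M$, $A$, $A^{-\transpose}$ are invertible, so an inverse can be solved for successively.

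I expect the main obstacle to be the non-continuous case, namely ruling out wild endomorphisms of $\TT^r$ and wild homomorphisms $\TT^r\to\FF_2^{2n}$ or $\TT^r\to\ZZ^r$. The divisibility argument handles the latter two. For $\Lambda$, the pairing identity $\Lambda^\transpose A=I$ must be read carefully, since a priori $\Lambda$ has no matrix. I would instead use that, for each fixed $m$, $\phi\mapsto e^{i(\Lambda\phi)^\transpose Am}$ equals the continuous character $e^{i\phi^\transpose m}$. Taking $m$ to range over $A^{-1}e_j$ pins down each coordinate of $\Lambda\phi$ as a continuous function, so $\Lambda=A^{-\transpose}$. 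This is the one place where Lemma~\ref{lemma:perfect}-style nondegeneracy does the real work.
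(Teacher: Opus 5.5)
Your proposal is correct and follows essentially the same route as the paper's appendix proof. Divisibility and torsion kill the $\TT^r\to\FF_2^{2n}$, $\TT^r\to\ZZ^r$, and $\FF_2^{2n}\to\ZZ^r$ blocks, and the four pairing types (qubit--qubit, torus--integer, qubit--integer, integer--integer) then force $M\in\Sp(2n,\FF_2)$, $\Lambda=A^{-\transpose}$ (which gives automatic continuity), the term $\pi C^{\transpose}J_nMq$, and the constraint $\Theta-\Theta^{\transpose}=\pi C^{\transpose}J_nC$. The only cosmetic difference is that you pin down $\Lambda$ by testing on $m=A^{-1}e_j$, whereas the paper tests on $e_j$ to get $A^{\transpose}\alpha(\phi)=\phi$ in $\TT^r$ and then inverts $A^{\transpose}$.
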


The matrix $M$ is the qubit symplectic action familiar from the stabilizer
formalism \cite{gottesman1999heisenberg}. The matrix $A$ preserves the rotor
momentum lattice, and $\Theta$ is the shear matrix. The binary matrix $C$ records
the mixed qubit--rotor action $ q\longmapsto q+C\bar m$.

Writing $c_j$ for the $j$th column of $C$, each
$c_j\in\FF_2^{2n}$ labels a qubit Pauli operator $P_{c_j}$ up to phase. We
call the nonzero columns the Pauli directions controlled by rotor momentum
parity. Their pairwise commutation relations are recorded by
$C^{\transpose}J_nC$: its $(i,j)$ entry is zero when $P_{c_i}$ and $P_{c_j}$
commute and one when they anticommute. These mixed commutators leave the
antisymmetric part of the shear no freedom. Eq.~\eqref{eq:theta-constraint}
fixes it as $\Theta-\Theta^{\transpose}=\pi C^{\transpose}J_nC$, so only the
symmetric part of $\Theta$ remains an independent parameter.

\begin{figure}[!ht]
  \centering
  \begin{tikzpicture}[
    formula/.style={
      draw,
      rounded corners=2pt,
      fill=black!2,
      align=left,
      inner xsep=12pt,
      inner ysep=8pt
    },
    block/.style={
      draw,
      rounded corners=2pt,
      fill=black!2,
      align=center,
      inner xsep=5pt,
      inner ysep=5pt,
      text width=29mm,
      minimum height=15mm,
      font=\small
    },
    subblock/.style={
      draw,
      rounded corners=2pt,
      align=center,
      inner xsep=4pt,
      inner ysep=4pt,
      text width=31mm,
      minimum height=13mm,
      font=\scriptsize
    },
    arr/.style={
      -{Latex[length=1.8mm]},
      line width=0.45pt
    }
  ]

    \node[formula] (S) {$
      \begin{array}{r@{}l}
        q'&=Mq+C\bar m,\\
        m'&=Am,\\
        \phi'&=A^{-\transpose}
        \bigl(\phi+\Theta m+\pi C^{\transpose}J_nMq\bigr).
      \end{array}$};

    \matrix (blocks) [
      below=10mm of S,
      matrix of nodes,
      nodes={block},
      column sep=3.5mm
    ] {
      |(M)| {$M$\\qubit symplectic action}
      &
      |(A)| {$A$\\rotor lattice action}
      &
      |(C)| {$C$\\mixed block}
      &
      |(Th)| {$\Theta$\\shear matrix}
      \\
    };

    \node[subblock] (Tsym)
      at ([xshift=-18mm,yshift=-19mm]Th.south)
      {symmetric additions to $\Theta$\\
       rotor quadratic gates};

    \node[subblock] (Tasym)
      at ([xshift=18mm,yshift=-19mm]Th.south)
      {$\Theta-\Theta^{\transpose}
        =\pi C^{\transpose}J_nC$\\
       fixed by the mixed block $C$};

    \draw[arr] ([xshift=-36mm]S.south) -- (M.north);
    \draw[arr] ([xshift=-12mm]S.south) -- (A.north);
    \draw[arr] ([xshift=12mm]S.south) -- (C.north);
    \draw[arr] ([xshift=36mm]S.south) -- (Th.north);

    \draw[arr] ([xshift=-5mm]Th.south) |- (Tsym.north);
    \draw[arr] ([xshift=5mm]Th.south) |- (Tasym.north);

  \end{tikzpicture}

  \caption{Block structure of the hybrid symplectic map. The matrices $M$ and
  $A$ describe the qubit symplectic and rotor lattice actions, and $C$ is the
  mixed block. The difference
  $\Theta-\Theta^{\transpose}=\pi C^{\transpose}J_nC$
  is fixed by the mixed block; symmetric additions to $\Theta$
  correspond to rotor quadratic gates.}
\end{figure}

\begin{proof}[Proof sketch]
A homomorphism first has the block form
$
  q'=Mq+C\bar m,
  m'=Am,
  \phi'=\alpha(\phi)+Bm+\pi Gq.
$
The vanishing homomorphism groups and the uniqueness of the remaining
  blocks are established in
  Appendix~\ref{app:block-classification-hybrid-phase-space-transformations}.
The pure qubit commutators give $M^{\transpose}J_nM=J_n$.
The torus--integer commutators force
$\alpha(\phi)=A^{-\transpose}\phi$, which also removes discontinuous
torus automorphisms.  The finite--integer and integer--integer
commutators then determine $G$ and
Eq.~\eqref{eq:theta-constraint}.  Direct substitution proves the
converse.  The appendix also gives the composition law and inverse.
\end{proof}

As a first specialization of
Theorem~\ref{thm:hybrid-phase-space-classification}, set $n=0$. Then $q$ and
$M$ are absent, and the $0\times r$ mixed matrix $C$ vanishes. Hence
\begin{equation}
  S_{A,\Theta}(m,\phi)
  =\left(Am,A^{-\transpose}[\phi+\Theta m]\right),
  \qquad
  \Theta=\Theta^{\transpose}.
\end{equation}
Define the phase-space transformations
$
  L_A(m,\phi)=(Am,A^{-\transpose}\phi),
  N_\Theta(m,\phi)=(m,\phi+\Theta m).
$
Then $S_{A,\Theta}=L_AN_\Theta$, and a direct conjugation gives
\begin{equation}
  L_AN_\Theta L_A^{-1}
  =N_{A^{-\transpose}\Theta A^{-1}}.
\end{equation}
Hence the symmetric shear parameters form a normal subgroup on which
$\GL(r,\ZZ)$ acts by
$\Theta\mapsto A^{-\transpose}\Theta A^{-1}$.  Since a symmetric
$r\times r$ matrix with entries in $\TT$ has $r(r+1)/2$ independent entries, the
pure-rotor symplectic group is
\begin{equation}
  \operatorname{Sym}_r(\TT)\rtimes\GL(r,\ZZ)
  \simeq
  \UU(1)^{r(r+1)/2}\rtimes\GL(r,\ZZ).
\end{equation}
This recovers the known phase-space action of the pure-rotor Clifford group.
Representative rotor gates are described in Ref.~\cite{xu2024multimode}.
In the pure-qubit case, $r=0$ removes $A,C,$ and $\Theta$, leaving the
standard qubit symplectic group $\Sp(2n,\FF_2)$.

\subsection{Unitary realization of hybrid symplectic maps}

For $c=(a,b)\in\FF_2^{2n}$, define the Hermitian Pauli operator associated with $c$ by
$
  P_c=i^{a^{\transpose}b}X_Q(a)Z_Q(b),
  P_c=P_c^\dagger,
  P_c^2=I.
$
For rotor $j$, define
\begin{equation}
  \Gamma_{c,j}
  =\sum_{\ell\in\ZZ^r}
    P_c^{\,\ell_j\bmod2}\otimes|\ell\rangle\langle\ell|.
  \label{eq:elementary-mixed-gate}
\end{equation}
This gate applies $P_c$ when $\ell_j$ is odd and the identity when
$\ell_j$ is even.
Its induced mixed block is
$
  C=c e_j^{\transpose},
$
which has rank one when $c\neq0$.

When $n,r\geq1$, take $c=(0,e_1)$, so that $P_c=Z_1$, and set $j=1$.
Then
\begin{equation}
  \Gamma_{c,1} = C_\pi = e^{i\pi|1\rangle\langle1|_{q_1}\otimes\hat\ell_1}.
  \label{eq:Cpi-main}
\end{equation}

For $M\in\Sp(2n,\FF_2)$, choose a qubit Clifford $V_M$ inducing
$q\longmapsto Mq$.  For $A\in\GL(r,\ZZ)$, define the rotor
automorphism unitary by
\begin{equation}
  U_A|\ell\rangle=|A\ell\rangle.
  \label{eq:rotor-automorphism-unitary}
\end{equation}
For $R=R^{\transpose}\in\operatorname{Mat}_r(\TT)$, set
\begin{equation}
  f_R(\ell)
  =\sum_{j=1}^{r}R_{jj}\frac{\ell_j(\ell_j+1)}{2}
   +\sum_{1\leq i<j\leq r}R_{ij}\ell_i\ell_j,
  \label{eq:rotor-quadratic-phase}
\end{equation}
\begin{equation}
  Q_R|\ell\rangle
  =e^{if_R(\ell)}|\ell\rangle.
  \label{eq:rotor-quadratic-gate}
\end{equation}
Their induced actions on rotor phase space are
\begin{equation}
  \begin{split}
    U_A:\quad(m,\phi)
    &\longmapsto(Am,A^{-\transpose}\phi),
    \\
    Q_R:\quad(m,\phi)
    &\longmapsto(m,\phi+Rm).
  \end{split}
\end{equation}
Consequently, the combined unitary
$V_M\otimes(U_AQ_R)$ induces
\begin{equation}
  L_{M,A,R}(q,m,\phi) = \bigl( Mq,\, Am,\, A^{-\transpose}(\phi+Rm) \bigr).
  \label{eq:local-clifford-phase-space-action}
\end{equation}

The next theorem realizes every symplectic automorphism in
Theorem~\ref{thm:hybrid-phase-space-classification} by a finite circuit.
The chosen order of the mixed gates produces a triangular contribution to
the shear $\Theta$, and a quadratic gate acting only on the rotor register
supplies the remaining symmetric contribution.

\begin{theorem}[Explicit hybrid Clifford realization]
\label{thm:explicit-hybrid-realization}
Let $S=S_{M,A,C,\Theta}$ be a map classified in
Theorem~\ref{thm:hybrid-phase-space-classification}.  
Set
\begin{equation}
  D=M^{-1}C=(d_1,\ldots,d_r), \qquad \Gamma_D := \Gamma_{d_r,r}\cdots\Gamma_{d_1,1},
\end{equation}
so that the gates $\Gamma_{d_1,1},\ldots,\Gamma_{d_r,r}$ act in increasing order of the rotor index. 

Define
$\Theta_D^\triangle\in\operatorname{Mat}_r(\TT)$ by
\begin{equation}
  (\Theta_D^\triangle)_{ji}
  =
  \begin{cases}
    \pi d_j^{\transpose}J_nd_i,
      & i<j,\\
    0,
      & i\geq j,
  \end{cases}
\end{equation}
and set $\Theta_{\mathrm{loc}}:=\Theta-\Theta_D^\triangle$.
Then $\Theta_{\mathrm{loc}}$ is symmetric, and
\begin{equation}
  U_S
  :=
  \bigl[V_M\otimes(U_AQ_{\Theta_{\mathrm{loc}}})\bigr]\Gamma_D,
\end{equation}
\begin{equation}
  U_SW(x)U_S^\dagger
  \propto W(Sx), \qquad x\in K_{n,r}.
\end{equation}
\end{theorem}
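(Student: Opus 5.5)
The plan is to compute the phase-space action of each factor of $U_S$, compose the actions, and compare the result with Eq.~\eqref{eq:general-hybrid-map}. Overall phases can be ignored throughout, since the claim is only $U_SW(x)U_S^\dagger\propto W(Sx)$, and by the Clifford-action lemma each factor induces a well-defined automorphism. The local factor $V_M\otimes(U_AQ_R)$ has already been computed: it induces $L_{M,A,R}$ of Eq.~\eqref{eq:local-clifford-phase-space-action}. So the new work concerns the elementary mixed gates $\Gamma_{c,j}$ of Eq.~\eqref{eq:elementary-mixed-gate} and the symmetry of $\Theta_{\mathrm{loc}}$.

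\textbf{Step 1 (one mixed gate).} Conjugate each type of Weyl generator by $\Gamma_{c,j}$.
\begin{itemize}
\item $Z_R(\phi)$ commutes with $\Gamma_{c,j}$, since both are diagonal in the rotor momentum basis.
\item For $X_R(m)$, compute $\Gamma_{c,j}X_R(m)\Gamma_{c,j}^\dagger|u,\ell\rangle$. The qubit factor is $P_c^{(\ell_j+m_j)\bmod 2}P_c^{\ell_j\bmod 2}=P_c^{\bar m_j}$, using $P_c^2=I$. Hence $(0,m,0)\mapsto(c\,\bar m_j,m,0)$ up to phase.
\item For a qubit Weyl operator $W(q,0,0)$, conjugation by $P_c^{\ell_j}$ produces the sign $(-1)^{\ell_j\,c^{\transpose}J_nq}$. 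This sign is the rotor phase $Z_R(\pi e_j c^{\transpose}J_nq)$, so $(q,0,0)\mapsto(q,0,\pi e_jc^{\transpose}J_nq)$.
\end{itemize}
Together these show that $\Gamma_{c,j}$ induces $S_{I,I,ce_j^{\transpose},0}$. This is consistent with Eq.~\eqref{eq:theta-constraint}, because $c^{\transpose}J_nc=2a^{\transpose}b=0$ over $\FF_2$.

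\textbf{Step 2 (composing the mixed gates).} Compose $\Gamma_{d_1,1}$ first, then $\Gamma_{d_2,2}$, and so on. I would do this by induction on the number of gates, or via the composition law in the appendix. Applying $\Gamma_{d_j,j}$ after the earlier gates adds $d_j\bar m_j$ to the qubit coordinate. It also adds $\pi e_j d_j^{\transpose}J_n q_{\text{current}}$ to $\phi$, where
\[
q_{\text{current}}=q+\sum_{i<j}d_i\bar m_i .
\]
The terms involving $\bar m_i$ produce exactly $(\Theta_D^\triangle)_{ji}=\pi d_j^{\transpose}J_nd_i$ for $i<j$. Here $\pi\bar m_i$ may be replaced by $\pi m_i$ in $\TT$. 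The composite $\Gamma_D$ therefore induces
\[
(q,m,\phi)\mapsto\bigl(q+D\bar m,\;m,\;\phi+\Theta_D^\triangle m+\pi D^{\transpose}J_nq\bigr).
\]
This bookkeeping step is the main obstacle: the strict lower-triangular pattern depends on the gate order, and it must match the definition of $\Theta_D^\triangle$ exactly.

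\textbf{Step 3 (local factor and comparison).} Applying $L_{M,A,\Theta_{\mathrm{loc}}}$ to the map from Step 2 gives
\[
\bigl(Mq+MD\bar m,\;Am,\;A^{-\transpose}[\phi+(\Theta_D^\triangle+\Theta_{\mathrm{loc}})m+\pi D^{\transpose}J_nq]\bigr).
\]
Three identifications then give $S_{M,A,C,\Theta}$:
\begin{itemize}
\item $MD=C$ by the definition of $D$.
\item $\Theta_D^\triangle+\Theta_{\mathrm{loc}}=\Theta$ by the definition of $\Theta_{\mathrm{loc}}$.
\item $D^{\transpose}J_n=C^{\transpose}M^{-\transpose}J_n=C^{\transpose}J_nM$, which follows from $M^{\transpose}J_nM=J_n$.
\end{itemize}

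\textbf{Step 4 (symmetry of $\Theta_{\mathrm{loc}}$).} This is needed so that $Q_{\Theta_{\mathrm{loc}}}$ is defined. Since $M$ is symplectic, $C^{\transpose}J_nC=D^{\transpose}J_nD$, so Eq.~\eqref{eq:theta-constraint} gives $\Theta-\Theta^{\transpose}=\pi D^{\transpose}J_nD$. The diagonal of $D^{\transpose}J_nD$ vanishes over $\FF_2$, and $-\pi=\pi$ in $\TT$. Hence $\Theta_D^\triangle-(\Theta_D^\triangle)^{\transpose}$ has off-diagonal entries $\pi d_j^{\transpose}J_nd_i$ and zero diagonal, so it also equals $\pi D^{\transpose}J_nD$. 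Subtracting shows that $\Theta_{\mathrm{loc}}$ is symmetric.
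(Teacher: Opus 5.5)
Your proposal is correct and follows essentially the same route as the paper. You derive the action of a single $\Gamma_{c,j}$, iterate in increasing rotor order to get $S_{I,I,D,\Theta_D^\triangle}$, check that $\Theta_D^\triangle-(\Theta_D^\triangle)^{\transpose}=\pi D^{\transpose}J_nD=\pi C^{\transpose}J_nC$, and factor $S_{M,A,C,\Theta}=L_{M,A,\Theta_{\mathrm{loc}}}\circ S_{I,I,D,\Theta_D^\triangle}$ using $D^{\transpose}J_n=C^{\transpose}J_nM$. Your explicit conjugation of the Weyl generators, and your checks of the vanishing diagonal of $D^{\transpose}J_nD$ and of $-\pi=\pi$ in $\TT$, supply details that the paper's proof only asserts.
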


\begin{proof}
First isolate the mixed part by setting $M=A=I$:
\begin{equation}
  S_{I,I,D,\Xi}(q,m,\phi) = \left( q+D\bar m,\, m,\, \phi+\Xi m+\pi D^{\transpose}J_nq \right).
\end{equation}
The elementary mixed gate satisfies
\begin{equation}
  \Gamma_{c,j}: (q,m,\phi) \longmapsto \left( q+c\bar m_j,\, m,\, \phi+\pi e_jc^{\transpose}J_nq \right).
  \label{eq:elementary-mixed-gate-action}
\end{equation}

Applying the factors in
$\Gamma_D=\Gamma_{d_r,r}\cdots\Gamma_{d_1,1}$ in increasing rotor order and
iterating Eq.~\eqref{eq:elementary-mixed-gate-action} gives
\begin{equation}
  \Gamma_D:(q,m,\phi)\longmapsto
  \left(q+D\bar m,\,m,\,
  \phi+\Theta_D^\triangle m+\pi D^{\transpose}J_nq\right).
\end{equation}
Thus $\Gamma_D$ implements $S_{I,I,D,\Theta_D^\triangle}$.
By construction,
\begin{equation}
  \Theta_D^\triangle-(\Theta_D^\triangle)^{\transpose}
  =\pi D^{\transpose}J_nD.
\end{equation}
Since $D=M^{-1}C$ and $M$ is symplectic,
$ 
  D^{\transpose}J_nD=C^{\transpose}J_nC.
$
Together with Eq.~\eqref{eq:theta-constraint}, this implies
$
  \Theta_{\mathrm{loc}}-\Theta_{\mathrm{loc}}^{\transpose}=0.
$
Hence $\Theta_{\mathrm{loc}}$ is symmetric, and
$Q_{\Theta_{\mathrm{loc}}}$ is a rotor quadratic gate.
The unitary
$V_M\otimes(U_AQ_{\Theta_{\mathrm{loc}}})$ implements
$L_{M,A,\Theta_{\mathrm{loc}}}$ from
Eq.~\eqref{eq:local-clifford-phase-space-action}. 

Using
$D=M^{-1}C$, $\Theta_D^\triangle+\Theta_{\mathrm{loc}}=\Theta$, and
$D^{\transpose}J_n=C^{\transpose}J_nM$ gives
\begin{equation}
  S_{M,A,C,\Theta}
  =L_{M,A,\Theta_{\mathrm{loc}}}\circ S_{I,I,D,\Theta_D^\triangle},
  \qquad
  U_SW(x)U_S^\dagger\propto W(Sx).
\end{equation}
This completes the construction of a unitary lift for every element of
$\Sp(K_{n,r},\omega)$.
\end{proof}

\begin{corollary}[Clifford--symplectic exact sequence]
\label{cor:clifford-symplectic-exact-sequence}
Conjugation on Weyl operators induces a surjective homomorphism
$
  \pi_{n,r}:\Cl_{n,r}\longrightarrow\Sp(K_{n,r},\omega),
$ by
$
  U\longmapsto S_U,
$
with kernel $\Pauli_{n,r}$.  Consequently,
\begin{equation}
  1 \longrightarrow \Pauli_{n,r} \longrightarrow \Cl_{n,r} \xrightarrow{\pi_{n,r}} \Sp(K_{n,r},\omega)
  \longrightarrow 1
\end{equation}
is a short exact sequence, and
$
  \Cl_{n,r}/\Pauli_{n,r}
  \simeq
  \Sp(K_{n,r},\omega).
$
\end{corollary}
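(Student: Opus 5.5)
The plan has three parts: the homomorphism property of $\pi_{n,r}$, its surjectivity, and the identification of its kernel.

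\emph{Homomorphism.} The lemma on the Clifford action on phase space already gives a well-defined automorphism $S_U\in\Sp(K_{n,r},\omega)$ for each $U\in\Cl_{n,r}$. For $U,V\in\Cl_{n,r}$ I would compute
\[
UVW(x)V^\dagger U^\dagger=\alpha_V(x)\,\alpha_U(S_Vx)\,W(S_US_Vx).
\]
Uniqueness of the induced automorphism then gives $S_{UV}=S_US_V$.

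\emph{Surjectivity.} This is immediate from the two theorems above. Theorem~\ref{thm:hybrid-phase-space-classification} says every $S\in\Sp(K_{n,r},\omega)$ has the form $S_{M,A,C,\Theta}$. Theorem~\ref{thm:explicit-hybrid-realization} builds a unitary $U_S$ with $U_SW(x)U_S^\dagger\propto W(Sx)$ for all $x$. Hence $U_S$ normalizes $\Pauli_{n,r}$, so $U_S\in\Cl_{n,r}$, and $\pi_{n,r}(U_S)=S$.

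\emph{Kernel.} One inclusion is easy. If $U=zW(y)$, the commutation relation \eqref{eq:hybrid-commutator} gives $UW(x)U^\dagger=\omega(y,x)W(x)$, so $S_U=\mathrm{id}$. Conversely, suppose $S_U=\mathrm{id}$, so that $UW(x)U^\dagger=\alpha_U(x)W(x)$ for all $x$. I would carry out the following steps in order.

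\begin{enumerate}
\item From $W(x)W(y)=\sigma(x,y)W(x+y)$, conjugation shows that $\alpha_U$ is a character of $K_{n,r}$.
\item Continuity of $\alpha_U$ needs an argument, since the $\TT^r$ factor makes $K_{n,r}$ non-discrete. The map $\phi\mapsto UZ_R(\phi)U^\dagger$ is strongly continuous, and $\alpha_U(0,0,\phi)$ can be read off as $\langle\psi,UZ_R(\phi)U^\dagger\psi\rangle/\langle\psi,Z_R(\phi)\psi\rangle$ for a suitable $\psi$, for instance $U$ applied to a momentum eigenstate. This shows the restriction of $\alpha_U$ to $\TT^r$ is continuous. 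The factors $\FF_2^{2n}\times\ZZ^r$ are discrete, so nothing more is needed there.
\item By Lemma~\ref{lemma:perfect}, $\alpha_U=\omega(y,\cdot)$ for a unique $y\in K_{n,r}$.
\item Set $V=W(y)^\dagger U$. Then $V$ commutes with every $W(x)$.
\end{enumerate}

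It remains to show $V$ is a scalar, which follows from irreducibility of the Weyl operators on $\bH_{n,r}$. An operator commuting with all $Z_Q(b)$ and $Z_R(\phi)$ is diagonal in the basis $|u,\ell\rangle$, because the joint characters $(b,\phi)\mapsto(-1)^{b^\transpose u}e^{i\phi^\transpose\ell}$ separate the points $(u,\ell)$. Commuting additionally with all $X_Q(a)$ and $X_R(m)$ makes the diagonal entries constant. Schur's lemma therefore gives $V=z I$, and so $U=zW(y)\in\Pauli_{n,r}$.

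Exactness of the sequence and the isomorphism $\Cl_{n,r}/\Pauli_{n,r}\simeq\Sp(K_{n,r},\omega)$ then follow from the first isomorphism theorem. I expect the main obstacle to be the continuity of $\alpha_U$ on the torus factor. Lemma~\ref{lemma:perfect} only covers continuous characters, and $\TT$ has discontinuous characters, so step~2 has to be handled carefully through strong continuity of the conjugated family.
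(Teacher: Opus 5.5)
Your proposal is correct and follows the paper's proof essentially step for step. Surjectivity comes from Theorem~\ref{thm:explicit-hybrid-realization} applied to the classified maps, and the kernel is identified via the character $\alpha_U$, its continuity, Lemma~\ref{lemma:perfect}, and the diagonal-then-constant argument for $W(y)^\dagger U$; your matrix-element continuity step on $\TT^r$ is a more explicit version of the paper's appeal to $UW(x)U^\dagger W(x)^\dagger=\chi(x)I$ and strong continuity of $x\mapsto W(x)$ (taking $\psi$ to be a momentum eigenstate itself already gives the nonvanishing denominator $e^{i\phi^{\transpose}\ell}$).
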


\begin{proof}
The homomorphism property follows from composition of the induced phase-space
actions, and
surjectivity follows from Theorem~\ref{thm:explicit-hybrid-realization}.
It remains to identify the kernel.  If $S_U=\operatorname{id}$, then
$
  UW(x)U^\dagger=\chi(x)W(x)
$
for a phase function $\chi$ on $K_{n,r}$.  Comparison with the Weyl
multiplication law shows that $\chi$ is a character.  

Since $W(x)$ depends continuously on $x$, the identity
$
  UW(x)U^\dagger W(x)^\dagger=\chi(x)I
$
shows that $\chi$ is continuous.  
By Lemma~\ref{lemma:perfect}, there exists $y\in K_{n,r}$ such that
$\chi(x)=\omega(y,x)$ for all $x\in K_{n,r}$. Hence
$V:=W(y)^\dagger U$ commutes with every Weyl
operator.  Commutation with all qubit and rotor phase operators makes $V$
diagonal in the joint basis.  Commutation with all
qubit and rotor shifts makes its diagonal entries equal.  Thus $V$ is a
scalar and $U\in\Pauli_{n,r}$. The reverse inclusion follows directly from
the Weyl commutation relations. The quotient statement follows from the
first isomorphism theorem.
\end{proof}

\begin{corollary}[Generators of the hybrid Clifford group]
\label{cor:hybrid-clifford-generators}
For $n,r\geq1$,
\begin{equation}
  \Cl_{n,r} = \left\langle \Pauli_{n,r}, \Cl_{n,0}, \Cl_{0,r}, C_\pi \right\rangle .
  \label{eq:hybrid-generator-set}
\end{equation}
For $n=0$ or $r=0$, the mixed generator $C_\pi$ is omitted.
\end{corollary}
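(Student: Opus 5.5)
The plan is to combine the short exact sequence of Corollary~\ref{cor:clifford-symplectic-exact-sequence} with the explicit circuit of Theorem~\ref{thm:explicit-hybrid-realization}.

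\emph{Reverse inclusion.} Let $G$ denote the right-hand side of Eq.~\eqref{eq:hybrid-generator-set}. Every listed generator lies in $\Cl_{n,r}$. For $\Pauli_{n,r}$ this is immediate from the Weyl relations. Elements of $\Cl_{n,0}$ and $\Cl_{0,r}$ act on one tensor factor and normalize the corresponding local Pauli group, hence normalize $\Pauli_{n,r}$. Finally, $C_\pi=\Gamma_{(0,e_1),1}$ is Clifford by Eq.~\eqref{eq:elementary-mixed-gate-action}. So $G\subseteq\Cl_{n,r}$.

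\emph{Forward inclusion.} Take $U\in\Cl_{n,r}$ and set $S=S_U=S_{M,A,C,\Theta}$. Theorem~\ref{thm:explicit-hybrid-realization} gives the circuit $U_S=[V_M\otimes(U_AQ_{\Theta_{\mathrm{loc}}})]\Gamma_D$, with $\pi_{n,r}(U_S)=S$. By Corollary~\ref{cor:clifford-symplectic-exact-sequence}, $U_S^\dagger U\in\ker\pi_{n,r}=\Pauli_{n,r}$. It therefore suffices to show $U_S\in G$.

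The local factor poses no difficulty. $V_M\in\Cl_{n,0}$, and $U_A,Q_{\Theta_{\mathrm{loc}}}\in\Cl_{0,r}$ because they induce the pure-rotor maps $L_A$ and $N_{\Theta_{\mathrm{loc}}}$ with $\Theta_{\mathrm{loc}}$ symmetric.

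The main step is to express each elementary mixed gate $\Gamma_{d,j}$, with $d\neq0$, through $C_\pi$ and local Cliffords.

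\emph{Moving the qubit direction.} Since $\Sp(2n,\FF_2)$ acts transitively on nonzero vectors of $\FF_2^{2n}$, choose a qubit Clifford $V$ with $VZ_1V^\dagger=\pm P_d$. Then
\[
(V\otimes I)\,\Gamma_{(0,e_1),1}\,(V\otimes I)^\dagger=\sum_\ell (VZ_1V^\dagger)^{\ell_1\bmod 2}\otimes|\ell\rangle\langle\ell| .
\]
This equals $\Gamma_{d,1}$ when the sign is $+$. If the sign is $-$, the extra factor $(-1)^{\ell_1}$ is the rotor Weyl operator $Z_R(\pi e_1)\in\Pauli_{n,r}$, which is absorbed.

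\emph{Moving the rotor index.} Conjugate by the rotor permutation unitary $U_A$, where $A$ swaps coordinates $1$ and $j$. This gives $\Gamma_{d,j}$.

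Hence each $\Gamma_{d,j}\in G$, and so $\Gamma_D\in G$. Columns with $d_j=0$ contribute the identity.

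The expected obstacle is the sign and phase bookkeeping in the conjugation step. I would handle it by noting that any discrepancy between the conjugated gate and $\Gamma_{d,j}$ induces the trivial symplectic map. By Corollary~\ref{cor:clifford-symplectic-exact-sequence}, such a discrepancy therefore lies in $\Pauli_{n,r}$, which avoids explicit tracking.

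\emph{Degenerate cases.} If $n=0$ or $r=0$, the matrix $C$ is empty, so $D$ is empty and $\Gamma_D=I$. The same argument then goes through without $C_\pi$.
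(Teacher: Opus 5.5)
Your proposal is correct and follows the paper's proof essentially step for step. You reduce to the explicit circuit $U_{S_U}$ of Theorem~\ref{thm:explicit-hybrid-realization} using $\ker\pi_{n,r}=\Pauli_{n,r}$ from Corollary~\ref{cor:clifford-symplectic-exact-sequence}, then obtain each $\Gamma_{d,j}$ by conjugating $C_\pi$ with a qubit Clifford sending $Z_1$ to $\pm P_d$ and a rotor permutation, with the possible sign absorbed into a rotor Weyl operator. Your fallback kernel argument for sign bookkeeping is valid but unnecessary, since you already track the sign explicitly.
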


\begin{proof}
Every operator on the right-hand side of
Eq.~\eqref{eq:hybrid-generator-set} is a hybrid Clifford operator.

Conversely, let $U\in\Cl_{n,r}$, and let $S_U$ be its induced
phase-space action. By
Theorem~\ref{thm:explicit-hybrid-realization}, there is an explicitly
constructed unitary $U_{S_U}$ satisfying
$
  \pi_{n,r}(U_{S_U})=S_U.
$
The elementary mixed gates occurring in $U_{S_U}$ are generated by
$C_\pi$, Clifford operators acting only on one register, and Weyl operators,
while its remaining factors belong to $\Cl_{n,0}$ or $\Cl_{0,r}$. Hence
$U_{S_U}$ is generated
by the operators in Eq.~\eqref{eq:hybrid-generator-set}.

Since $U$ and $U_{S_U}$ induce the same phase-space action,
$
  UU_{S_U}^\dagger
  \in
  \ker\pi_{n,r}
  =
  \Pauli_{n,r}.
$
Therefore
$
  U=(UU_{S_U}^\dagger)U_{S_U}
$
is also generated by the operators in
Eq.~\eqref{eq:hybrid-generator-set}.

Fix $c\neq0$ and choose a qubit Clifford $V_c$ such that
$V_cZ_1V_c^\dagger=\varepsilon_cP_c$ with
$\varepsilon_c\in\{\pm1\}$. Choose a rotor permutation $U_{\pi_j}$ that
moves rotor $1$ to rotor $j$. Then
\begin{equation}
(V_c\otimes U_{\pi_j})C_\pi(V_c\otimes U_{\pi_j})^\dagger
=
\begin{cases}
  \Gamma_{c,j}, & \varepsilon_c=+1,\\
  Z_{R,j}(\pi)\Gamma_{c,j}, & \varepsilon_c=-1.
\end{cases}
\end{equation}
The extra factor in the second line is a rotor Weyl operator.
\end{proof}

\begin{example}[One qubit and one rotor]
For the smallest nontrivial hybrid register,
\begin{equation}
    \Sp(K_{1,1},\omega) \cong \TT\times S_4\times\ZZ_2.
\end{equation}
Here the factor $\TT$ is the one-rotor shear parameter, $\ZZ_2$ records
momentum reversal, and
$S_4\cong\operatorname{AGL}(2,\FF_2)$ combines the one-qubit symplectic
action with the mixed parity-controlled Pauli action. The group identification
is proved in Appendix~\ref{app:one-qubit-one-rotor-quotient}.
\end{example}

\paragraph{Classical simulation.}
The generators in Corollary~\ref{cor:hybrid-clifford-generators} are
normalizer operations over the Abelian group
$A_{n,r}=\FF_2^n\times\ZZ^r$. Hybrid Clifford circuits with the stabilizer
inputs and measurements are therefore covered by the generalized Gottesman--Knill simulation result of Ref.~\cite{bermejovega2016normalizer}.

\subsection{Classification of mixed Clifford couplings}

Let
$
  V=\FF_2^{2n},
  \beta_J(u,v)=u^{\transpose}J_nv,
$
and regard $C$ as a linear map from $\FF_2^r$ to $V$. Clifford operations
acting separately on the qubit and rotor registers transform $C$ as
\begin{equation}
  C\longmapsto MCR, \qquad M\in\Sp(2n,\FF_2), \quad R\in\GL(r,\FF_2).
  \label{eq:mixed-block-equivalence-action}
\end{equation}
Here $R=\overline A^{-1}$ for a rotor basis change
$A\in\GL(r,\ZZ)$. Every $R$ occurs because reduction modulo two maps
$\GL(r,\ZZ)$ onto $\GL(r,\FF_2)$. The right action changes the basis of
the domain; the left action sends $\operatorname{im}C$ to
$M(\operatorname{im}C)$.

Under this action, the orbit of $C$ is determined by the alternating space
$(\operatorname{im}C,\beta_J|_{\operatorname{im}C})$. Witt extension lifts
its normal form to the ambient
symplectic space.

\begin{theorem}[Classification of mixed blocks]
\label{thm:classification-mixed-blocks}
The pair
\begin{equation}
  (k,h) = \left( \rank_{\FF_2}C, \frac{1}{2} \rank_{\FF_2}(C^{\transpose}J_nC) \right)
\end{equation}
is a complete invariant of the action in
Eq.~\eqref{eq:mixed-block-equivalence-action}. Two mixed blocks are equivalent
under separate qubit and rotor Clifford operations if and only if they have
the same pair $(k,h)$.

Fix a symplectic basis
$e_1,f_1,\ldots,e_n,f_n$ of $V$. Every orbit contains a representative
whose nonzero columns are
\begin{equation}
  e_1,f_1,\ldots,e_h,f_h,
  e_{h+1},\ldots,e_{k-h},
  \label{eq:mixed-block-representative}
\end{equation}
followed by $r-k$ zero columns. The admissible values are
\begin{equation}
  0\leq k\leq\min(r,2n), \qquad \max(0,k-n) \leq h \leq \left\lfloor\frac{k}{2}\right\rfloor .
  \label{eq:mixed-block-range}
\end{equation}
\end{theorem}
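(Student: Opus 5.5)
The plan is to reduce the classification of $C$ under $C\mapsto MCR$ to the classification of the subspace $W:=\operatorname{im}C\subseteq V$ under $\Sp(2n,\FF_2)$, and then to use Witt's extension theorem.

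\emph{Invariance.} First, $k=\rank C$ is invariant, because $M$ and $R$ are invertible. Second, $(MCR)^{\transpose}J_n(MCR)=R^{\transpose}(C^{\transpose}J_nC)R$, since $M^{\transpose}J_nM=J_n$. Hence the rank of the Gram matrix $G:=C^{\transpose}J_nC$ is invariant. Over $\FF_2$, $J_n$ is symmetric with zero diagonal, and $u^{\transpose}J_nu=2a^{\transpose}b=0$. So $\beta_J$ is alternating, and $G$ is alternating as well. Its rank is therefore even, and $h$ is an integer.

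Next I identify $h$ intrinsically. $G$ is the Gram matrix of the pullback $C^*\beta_J$ on $\FF_2^r$. Because $C:\FF_2^r\to W$ is surjective, $\rank G=\rank(\beta_J|_W)=\dim W-\dim\operatorname{rad}(W)$, where $\operatorname{rad}(W)=W\cap W^{\perp}$. The alternating space $(W,\beta_J|_W)$ is classified up to isometry by $(\dim W,\rank)$. Concretely, it has a basis $e'_1,f'_1,\dots,e'_h,f'_h,g_1,\dots,g_{k-2h}$ with hyperbolic pairs spanning a nondegenerate part, and with the $g$'s spanning the radical.

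\emph{Normal form and completeness.} I define $W_0$ as the span of the representative columns in Eq.~\eqref{eq:mixed-block-representative}. Its form has the same invariants $(k,h)$, so there is an isometry $\tau:W\to W_0$. Witt's extension theorem for alternating forms (valid in characteristic $2$, since the forms are alternating and no quadratic refinement is needed) extends $\tau$ to some $M\in\Sp(2n,\FF_2)$. If Witt is to be avoided, I would extend $\tau$ by hand: complete the $g_i$ with partners $f_{g_i}$ orthogonal to the hyperbolic part, which is possible because the nondegenerate part splits off as an orthogonal summand. Then match symplectic bases on both sides.

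After this step $MC$ has image $W_0$. It remains to choose $R\in\GL(r,\FF_2)$ such that $MCR$ has exactly the prescribed columns followed by $r-k$ zeros. I pick preimages under $MC$ of the $k$ basis vectors of $W_0$, and complete them with a basis of $\ker(MC)$; these vectors form the columns of $R$. Since reduction modulo two maps $\GL(r,\ZZ)$ onto $\GL(r,\FF_2)$, every such $R$ is realized by a rotor Clifford operation. This proves that equal $(k,h)$ implies equivalence.

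\emph{Admissible range.} Clearly $k\le\min(r,2n)$ and $2h\le k$. The constraint $h\ge k-n$ comes from the radical: it is a totally isotropic subspace of the nondegenerate hyperbolic part's orthogonal complement, which has dimension $2n-2h$. Hence $k-2h\le n-h$, i.e.\ $h\ge k-n$. Conversely, every pair in the range is realized. The representative uses $e_1,\dots,e_{k-h}$ and $f_1,\dots,f_h$, which requires $k-h\le n$ together with $r\ge k$.

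The main obstacle I expect is the Witt-extension step in characteristic $2$. I plan to handle it by the explicit basis-completion argument above rather than by citing a general theorem.
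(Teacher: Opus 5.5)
Your proposal is correct and follows the paper's proof in substance. Both arguments prove invariance via $R^{\transpose}(C^{\transpose}J_nC)R$, identify $(k,h)$ with the dimension and rank of $(\operatorname{im}C,\beta_J|_{\operatorname{im}C})$, extend an isometry by Witt to an element of $\Sp(2n,\FF_2)$, and choose a kernel-adapted $R\in\GL(r,\FF_2)$ that lifts to $\GL(r,\ZZ)$. The differences are cosmetic: you get completeness by transitivity through the normal form rather than comparing $C$ and $C'$ directly, and your bound $h\geq k-n$ (the radical is totally isotropic in the $(2n-2h)$-dimensional complement of the hyperbolic part) is equivalent to the paper's isotropic subspace of dimension $k-h$.
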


\begin{proof}
\emph{Invariance.}
Suppose that
$
  C'=MCR,
  M\in\Sp(2n,\FF_2),
  R\in\GL(r,\FF_2).
$
Then
$
  \rank C'=\rank C
$
and
$
  (C')^{\transpose}J_nC'
  =
  R^{\transpose}(C^{\transpose}J_nC)R.
$
Hence both $k$ and $h$ are invariant under this action.

The quotient isomorphism
$
  \FF_2^r/\ker C
  \cong
  \operatorname{im}C
$
identifies the alternating form represented by
$C^{\transpose}J_nC$ with the restriction of $\beta_J$ to
$\operatorname{im}C$. Consequently, $k=\dim(\operatorname{im}C)$ and
$2h=\rank\!\left(\left.\beta_J\right|_{\operatorname{im}C}
  \right).$

\emph{Completeness.}
Now suppose that $C$ and $C'$ have the same pair $(k,h)$. The restricted
alternating spaces
$
  \left(\operatorname{im}C,\left.\beta_J\right|_{\operatorname{im}C}\right)$ and 
$
  \left(\operatorname{im}C',\left.\beta_J\right|_{\operatorname{im}C'}\right)
$
have the same dimension and rank, and are therefore isometric. By Witt
extension, this isometry extends to an element $M\in\Sp(2n,\FF_2)$
satisfying $M(\operatorname{im}C)=\operatorname{im}C'$
\cite[Theorem~3.4]{sprehn2020forms}.

The maps $C$ and $M^{-1}C'$ are surjections
$\FF_2^r\longrightarrow\operatorname{im}C$ of the same rank. Choosing bases of
$\FF_2^r$ adapted to their kernels gives $R\in\GL(r,\FF_2)$ such that
$
  CR=M^{-1}C'.
$
Hence $MCR=C'$, so the pair $(k,h)$ completely determines the orbit.

\emph{Normal form and admissible range.}
The space $\operatorname{im}C$ has a basis consisting of $h$ symplectic pairs
and $k-2h$ vectors in the radical of
$\left.\beta_J\right|_{\operatorname{im}C}$. A symplectic transformation of
$V$ sends
such a basis to
$
  e_1,f_1,\ldots,e_h,f_h,
  e_{h+1},\ldots,e_{k-h}.
$
A change of basis in $\FF_2^r$ then places these vectors in the first $k$
columns and makes the remaining $r-k$ columns zero. 

This gives
Eq.~\eqref{eq:mixed-block-representative}.
The inequalities
$
  k\leq r,
  k\leq2n,
  2h\leq k
$
give the upper bounds in
Eq.~\eqref{eq:mixed-block-range}. Moreover, the normal form contains
an isotropic subspace of dimension $k-h$. Since an isotropic subspace of
$V$ has dimension at most $n$, together with $h\geq0$, this gives the stated lower bound. Conversely,
the columns in
Eq.~\eqref{eq:mixed-block-representative} exist for every pair
$(k,h)$ in this range.
\end{proof}

In this normal form, $k$ counts the independent qubit Pauli directions
controlled by rotor momentum parity, and $h$ counts the independent
anticommuting pairs among them. The remaining $k-2h$ directions lie in the
radical of the restricted alternating form.
Through Eq.~\eqref{eq:theta-constraint}, the binary matrix
$C^{\transpose}J_nC$ also fixes $\Theta-\Theta^{\transpose}$, and
$\rank(C^{\transpose}J_nC)=2h$.

\paragraph{Mixed-gate cost model.}
Qubit-only and rotor-only Clifford operations are free, and each elementary
mixed gate $\Gamma_{c,j}$ counts once.

\begin{corollary}[Minimum number of elementary mixed gates]
\label{cor:minimum-mixed-gate-count}
In this model, the minimum number of elementary mixed gates required to
realize an admissible mixed block $C$ is
$N_{\Gamma}^{\min}(C)=\rank_{\FF_2}C$.
\end{corollary}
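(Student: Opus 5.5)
We prove the equality $N_{\Gamma}^{\min}(C)=\rank_{\FF_2}C$ by establishing an upper bound through an explicit circuit and a lower bound through the way mixed blocks compose.

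\emph{Upper bound.} By Theorem~\ref{thm:classification-mixed-blocks}, there are $M\in\Sp(2n,\FF_2)$ and $R\in\GL(r,\FF_2)$ such that $C=MC_0R$. Here $C_0$ is the normal form of Eq.~\eqref{eq:mixed-block-representative}, which has exactly $k$ nonzero columns. Lift $R^{-1}$ to some $A\in\GL(r,\ZZ)$. Theorem~\ref{thm:explicit-hybrid-realization} applied to $C_0$ (with $M=A=I$) gives a circuit $\Gamma_{C_0}$ together with a rotor quadratic gate. The factors $\Gamma_{d_j,j}$ with $d_j=0$ are the identity and can be dropped, so this circuit uses exactly $k$ elementary mixed gates. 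Conjugating by the free local Cliffords $V_M$ and $U_A$ turns the mixed block into $MC_0R=C$. The shear can then be corrected with free rotor quadratic gates, because Eq.~\eqref{eq:theta-constraint} fixes only the antisymmetric part of $\Theta$. Hence $N_\Gamma^{\min}(C)\le k$.

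\emph{Lower bound.} The approach is to track how the mixed block changes along an arbitrary circuit. Consider a word alternating free local Cliffords with elementary mixed gates. We need the composition law from the appendix, specialized to this case:
\begin{itemize}
\item A local Clifford $L_{M,A,R}$ sends the block $C$ to $MC\overline A^{-1}$, and this preserves rank.
\item The gate $\Gamma_{c,j}$ adds the rank-one matrix $ce_j^{\transpose}$ to the block. More precisely, composing $S_{M',A',C',\Theta'}$ with $\Gamma_{c,j}$ gives a block $C'+M'ce_j^{\transpose}$ up to the right action of $\overline{A}$.
\end{itemize}
We check this by reading off the $q$-component of the composite map: the $q'$ entry is $M'(q+c\bar m_j)+C'\bar m$. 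By induction, the total mixed block of a circuit containing $N$ elementary mixed gates is a sum of $N$ matrices, each of rank at most one, with the local factors absorbed into them. Subadditivity of rank then gives $k\le N$.

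The main obstacle is making this inductive composition rigorous. The mixed block must be read from the total phase-space map, and the $A$ factors act on the right. The recursion therefore needs to be written as
\[C_{\text{total}}=\sum_t M_t c_t e_{j_t}^{\transpose}\overline{B_t},\]
with $M_t\in\Sp(2n,\FF_2)$ and $B_t\in\GL(r,\ZZ)$ the accumulated local factors. I would verify this by direct substitution using the block form of Eq.~\eqref{eq:general-hybrid-map}, looking only at the $q$-component. That component depends only on $M$ and $C$, so the shear terms never enter the argument. Once the sum formula is in hand, each term has rank at most one, and the lower bound $\rank C\le N$ follows immediately.
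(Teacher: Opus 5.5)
Your proposal is correct and follows the paper's own argument. The lower bound uses the composition law $C_{12}=M_1C_2+C_1\bar A_2$ to write the total mixed block as a sum of at most $t$ rank-one terms flanked by invertible matrices, and the upper bound realizes the normal form of Theorem~\ref{thm:classification-mixed-blocks} with one $\Gamma_{c,j}$ per nonzero column and absorbs the symmetric shear discrepancy into a rotor-only quadratic gate; your conjugation by $V_M\otimes U_A$ with $\bar A=R^{-1}$ just spells out the reduction to the normal form that the paper leaves implicit.
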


\begin{proof}
These free operations act on $C$ by invertible left and right
multiplication and therefore preserve its rank. Under composition,
Eq.~\eqref{eq:hybrid-symplectic-map-composition} gives
$C_{12}=M_1C_2+C_1\bar A_2$. Hence a circuit containing $t$ elementary
mixed gates has a total mixed block that is a sum of at most $t$ rank-one
contributions, each multiplied on the left and right by invertible matrices.
Rank subadditivity therefore gives
$
  \rank_{\FF_2}C\leq t.
$
Thus at least $\rank_{\FF_2}C$ elementary mixed gates are required.

Conversely, the representative in
Eq.~\eqref{eq:mixed-block-representative} has
$k=\rank C$ nonzero columns, and each column is realized by one elementary
mixed gate. The ordered product of these gates may also produce a rotor
shear $\Theta_*$. Since both $\Theta$ and $\Theta_*$ satisfy
$
  \Theta-\Theta^{\transpose}
  =
  \Theta_*-\Theta_*^{\transpose}
  =
  \pi C^{\transpose}J_nC,
$
their difference $\Theta-\Theta_*$ is symmetric and can be implemented by
a Clifford operation acting only on the rotor register. Thus $\rank C$
elementary mixed gates are sufficient.
\end{proof}

\section{Gate families}
\label{sec:gate-families}

The applications below use gates whose parameters depend on rotor momentum
or angle. This section determines when those gates are Clifford. We
also examine qubit-controlled rotor displacements, which reverse the
direction of control in the elementary mixed Clifford gate.

\subsection{Pure-rotor and mixed Clifford gates}
\label{subsec:pure-rotor-mixed-clifford-gates}

We regard the pure-qubit Clifford group as standard and recall only the
rotor and mixed gates used below. Up to a global phase, the pure-rotor
Clifford group is generated by Weyl operators, momentum-lattice
automorphisms, and quadratic phase gates. These operations are implemented
by $D_R(m,\phi)$, $U_A$, and $Q_R$, respectively, as defined in
Eqs.~\eqref{eq:rotor-weyl-displacement},
\eqref{eq:rotor-automorphism-unitary}, and
\eqref{eq:rotor-quadratic-gate} \cite{xu2024multimode}.

The following gates give concrete instances of the latter two types. For
$i\neq j$ and $\alpha\in\TT$, we use
\begin{equation}
  \begin{array}{r@{}l}
    \mathrm{PAR}_j|\ell_j\rangle
    &=|-\ell_j\rangle,
    \\
    \mathrm{SUM}_{i\to j}|\ell_i,\ell_j\rangle
    &=|\ell_i,\ell_j+\ell_i\rangle,
    \\
    \mathrm{QUAD}_j(\alpha)|\ell_j\rangle
    &=e^{i\alpha\ell_j(\ell_j+1)/2}|\ell_j\rangle,
    \\
    \mathrm{CPHS}_{ij}(\alpha)|\ell_i,\ell_j\rangle
    &=e^{i\alpha\ell_i\ell_j}|\ell_i,\ell_j\rangle.
  \end{array}
\end{equation}
PAR, SUM, and rotor SWAP implement elementary automorphisms of the momentum
lattice $\ZZ^r$. QUAD and CPHS implement the diagonal and off-diagonal
entries of the symmetric shear matrix $R$.

In the hybrid system, the elementary mixed Clifford gate $\Gamma_{c,j}$ in
Eq.~\eqref{eq:elementary-mixed-gate} acts as the identity for even $\ell_j$
and as $P_c$ for odd $\ell_j$. For $P_c=Z_1$ and $j=1$, this family reduces
to the one-qubit--one-rotor gate $C_\pi$ in Eq.~\eqref{eq:Cpi-main}. For
$n,r\geq1$,
Eq.~\eqref{eq:hybrid-generator-set} shows that $C_\pi$, together with
$\Pauli_{n,r}$, $\Cl_{n,0}$, and $\Cl_{0,r}$, generates $\Cl_{n,r}$.

\subsection{Momentum-dependent gates}

Momentum-dependent gates fall into two classes: scalar phase gates diagonal
in the momentum basis and qubit gates selected by the rotor momentum. The
phase in the first family must admit a quadratic representative on $\ZZ^r$.
In the second, apart
from a scalar rotor phase, the qubit blocks can vary only through
momentum-parity powers of a fixed Pauli operator.

\subsubsection{Momentum-diagonal phase gates}

Let $u:\ZZ^r\to\UU(1)$ and define
\begin{equation}
  G_u|\ell\rangle=u(\ell)|\ell\rangle.
\end{equation}

\begin{proposition}[Clifford criterion for momentum-diagonal phase gates]
\label{prop:momentum-diagonal-phases}
The gate $G_u$ belongs to $\Cl_{0,r}$ if and only if there are
$\gamma\in\TT$, $\lambda\in\TT^r$, and
$R=R^{\transpose}\in\operatorname{Mat}_r(\TT)$ such that
\begin{equation}
  u(\ell) = e^{i\gamma+i\lambda^{\transpose}\ell+if_R(\ell)} \text{for every }\ell\in\ZZ^r,
  \label{eq:momentum-diagonal-rotor-clifford-form}
\end{equation}
where $f_R$ is defined in Eq.~\eqref{eq:rotor-quadratic-phase}.
\end{proposition}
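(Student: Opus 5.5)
For the sufficiency direction, we factor the gate as a product of known rotor Clifford gates. If $u$ has the form in Eq.~\eqref{eq:momentum-diagonal-rotor-clifford-form}, then
\begin{equation*}
  G_u=e^{i\gamma}Z_R(\lambda)Q_R .
\end{equation*}
Each factor is a rotor Clifford: $Z_R(\lambda)$ is a Weyl operator, and $Q_R$ is the quadratic gate of Eq.~\eqref{eq:rotor-quadratic-gate}. Hence $G_u\in\Cl_{0,r}$.

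For necessity, suppose $G_u\in\Cl_{0,r}$ and compute its phase-space action directly. Since $G_u$ is diagonal, it commutes with every $Z_R(\phi)$. For a shift we get
\begin{equation*}
  G_uX_R(m)G_u^\dagger|\ell\rangle=u(\ell+m)\overline{u(\ell)}\,|\ell+m\rangle .
\end{equation*}
The Clifford condition says this operator equals $\alpha\,W(m',\phi')$ for some scalar and some Weyl label. Comparing supports in the momentum basis forces $m'=m$. The operator is then $X_R(m)$ times a diagonal factor, and that diagonal factor must be a scalar times a character $e^{i\phi'^{\transpose}\ell}$. So for every $m$ there exist $\phi(m)\in\TT^r$ and $c(m)\in\UU(1)$ with
\begin{equation*}
  \Delta_m(\ell):=u(\ell+m)\overline{u(\ell)}=c(m)\,e^{i\phi(m)^{\transpose}\ell}.
\end{equation*}
This is the one genuinely analytic input, and it follows just from matrix elements. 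Equivalently, Theorem~\ref{thm:hybrid-phase-space-classification} with $n=0$ says the induced map is $S_{A,\Theta}$. Because $G_u$ fixes $Z_R(\phi)$ up to phase, $A=I$, and so $\phi(m)=\Theta m$ with $\Theta$ symmetric.

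It then remains to solve the functional equation for $u$. Set $R:=\Theta$, which is symmetric, and take $v(\ell):=u(\ell)e^{-if_R(\ell)}$. The function $f_R$ satisfies
\begin{equation*}
  f_R(\ell+m)-f_R(\ell)=m^{\transpose}R\ell+f_R(m).
\end{equation*}
To check this, note the diagonal terms give $R_{jj}[m_j\ell_j+m_j(m_j+1)/2]$ and the cross terms give $R_{ij}(m_i\ell_j+m_j\ell_i+m_im_j)$; symmetry of $R$ is used here. Consequently $v(\ell+m)\overline{v(\ell)}=c'(m)$ is independent of $\ell$. So $v$ is a scalar $v(0)=e^{i\gamma}$ times a map $\ZZ^r\to\UU(1)$ that is a homomorphism. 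Every homomorphism $\ZZ^r\to\UU(1)$ is determined by its values on the basis vectors $e_j$, so it has the form $e^{i\lambda^{\transpose}\ell}$ with $\lambda_j$ the argument of $v(e_j)/v(0)$. This gives Eq.~\eqref{eq:momentum-diagonal-rotor-clifford-form}.

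I expect the main subtlety to be the identification $\phi(m)=Rm$ with $R$ symmetric. I would handle it in one of two ways. The first is to invoke Theorem~\ref{thm:hybrid-phase-space-classification} at $n=0$, as above. The second is to derive it directly. Cocycle consistency, $\Delta_{m+m'}(\ell)=\Delta_m(\ell+m')\Delta_{m'}(\ell)$, shows that $\phi$ is additive, so $\phi(m)=Rm$ for some $R\in\operatorname{Mat}_r(\TT)$. Comparing the two expansions of $\Delta_{m+m'}(0)$ gives $e^{i\phi(m)^{\transpose}m'}=e^{i\phi(m')^{\transpose}m}$, which is exactly $R=R^{\transpose}$.
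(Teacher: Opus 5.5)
Your proof is correct and follows essentially the same route as the paper's: you conjugate momentum shifts to get character-valued finite differences, compare the two expansions to force $R=R^{\transpose}$, strip off $e^{if_R}$ to leave a linear character, and prove the converse via $G_u=e^{i\gamma}Z_R(\lambda)Q_R$. The differences are cosmetic: the paper uses only unit shifts $e_j$ and concludes by connectivity of $\ZZ^r$, while you use arbitrary shifts with a cocycle/homomorphism argument, or alternatively cite Theorem~\ref{thm:hybrid-phase-space-classification} at $n=0$.
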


\begin{proof}[Proof sketch]
Conjugating $X_R(e_j)$ shows that, if $G_u$ is Clifford, the factor
$u(\ell+e_j)u(\ell)^{-1}$ must be a character of $\ZZ^r$, up to a scalar.
Compatibility of these finite differences gives a symmetric matrix $R$.
Appendix~\ref{app:momentum-diagonal-phase-gate-proof} solves the resulting
difference equations on $\ZZ^r$ and obtains
Eq.~\eqref{eq:momentum-diagonal-rotor-clifford-form}.
Conversely, this form gives $G_u=e^{i\gamma}Z_R(\lambda)Q_R$, which is a
pure-rotor Clifford gate.
\end{proof}

The degree of a polynomial expression for the phase does not by itself
determine whether the gate is Clifford, since different polynomials may agree
on the integer momentum spectrum. For one rotor,
\begin{equation}
  e^{i\pi\hat\ell^3} = e^{i\pi\hat\ell} = Z_R(\pi),
\end{equation}
because $\ell^3-\ell$ is even for every $\ell\in\ZZ$.

\subsubsection{Momentum-dependent qubit gates}

For one rotor, let $\{V_\ell\}_{\ell\in\ZZ}$ be a sequence of
$n$-qubit unitaries and consider the gate
\begin{equation}
  \mathcal V
  =
  \sum_{\ell\in\ZZ}
  V_\ell\otimes|\ell\rangle\langle\ell|.
\end{equation}
The qubit operation in each block depends on the rotor momentum $\ell$.

\begin{proposition}[Clifford criterion for momentum-dependent qubit gates]
\label{prop:momentum-dependent-qubit-gates}
The gate $\mathcal V$ belongs to $\Cl_{n,1}$ if and only if there
exist
$
  V_0\in\Cl_{n,0},
  c\in\FF_2^{2n},
  \alpha,\lambda\in\TT
$
such that
\begin{equation}
  V_\ell = e^{i[\lambda\ell+\alpha\ell(\ell+1)/2]} P_c^{\,\ell}V_0
  \qquad \text{for all }\ell\in\ZZ.
  \label{eq:momentum-dependent-qubit-clifford-form}
\end{equation}
\end{proposition}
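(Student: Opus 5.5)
We prove the two directions separately; the forward direction carries the content. The converse is a direct factorization. If $V_\ell$ has the form \eqref{eq:momentum-dependent-qubit-clifford-form}, then, because $P_c^2=I$,
\begin{equation*}
  \mathcal V=(V_0\otimes Q_{(\alpha)})\,(I\otimes Z_R(\lambda))\,\Gamma_{c,1},
\end{equation*}
up to the ordering of commuting factors. Here $Q_{(\alpha)}=\mathrm{QUAD}(\alpha)$ and $\Gamma_{c,1}$ is the elementary mixed gate \eqref{eq:elementary-mixed-gate}. Each factor is Clifford, so $\mathcal V\in\Cl_{n,1}$.

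For the forward direction, assume $\mathcal V\in\Cl_{n,1}$ and let $S=S_{M,A,C,\Theta}$ be its induced map from Theorem~\ref{thm:hybrid-phase-space-classification}. The first step is to conjugate the qubit Weyl operators. Since $\mathcal V$ is block diagonal in $\ell$,
\begin{equation*}
  \mathcal V(W_Q(q)\otimes I)\mathcal V^\dagger=\sum_\ell V_\ell W_Q(q)V_\ell^\dagger\otimes|\ell\rangle\langle\ell|.
\end{equation*}
This must equal $\alpha\,W(Mq,0,\pi C^{\transpose}J_nMq)$ with $A=\pm1$ absorbed. In particular the rotor shift component vanishes, and the rotor phase is $e^{i\ell\pi(\ldots)}$, i.e.\ $\pm1$ depending on the parity of $\ell$. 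Hence $V_\ell W_Q(q)V_\ell^\dagger\propto P_c^{\ell}W_Q(Mq)P_c^{\ell}$ with $c$ the single column of $C$. Setting $V_0':=V_0$, we get that $V_0^\dagger P_c^{\ell}V_\ell$ commutes with all qubit Paulis up to phase, uniformly in the sign pattern. It is therefore a qubit Pauli up to phase, and comparing the sign pattern forces the Pauli to be trivial. So $V_\ell=\mu(\ell)P_c^{\ell}V_0$ with $\mu:\ZZ\to\UU(1)$, and $V_0$ is Clifford as the $\ell=0$ block restricted to qubit Paulis.

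The second step is to determine $\mu$. Factor $\mathcal V=G_\mu\,(V_0\otimes I)\,\Gamma'$, where $\Gamma'=\sum_\ell P_c^{\ell}\otimes|\ell\rangle\langle\ell|$ is $\Gamma_{c,1}$ up to $V_0$ reordering. Since $\mathcal V$, $V_0\otimes I$ and $\Gamma_{c,1}$ are Clifford, the scalar gate $G_\mu$ is in $\Cl_{n,1}$. It acts trivially on qubits, so conjugating the rotor Weyl operators shows it lies in $\Cl_{0,r}$. Proposition~\ref{prop:momentum-diagonal-phases} with $r=1$ then gives $\mu(\ell)=e^{i\gamma+i\lambda\ell+i\alpha\ell(\ell+1)/2}$. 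Absorbing $e^{i\gamma}$ into $V_0$ yields \eqref{eq:momentum-dependent-qubit-clifford-form}.

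The main obstacle is the first step: showing that the phase defect is \emph{exactly} $P_c^{\ell}$ rather than a general $\ell$-dependent Pauli. To make this clean, I would first multiply $\mathcal V$ by $(V_0^\dagger\otimes I)$ and by $\Gamma_{c,1}^\dagger$ to reduce to the case $M=I$, $C=0$. Theorem~\ref{thm:explicit-hybrid-realization} lets us strip these off, since $A=\pm1$ and $\Theta$ only affect rotor Weyl operators. In that reduced case every $V_\ell$ commutes up to a scalar $\chi_\ell(q)$ with each $W_Q(q)$. That scalar must equal the rotor-phase factor $1$, so $V_\ell$ is scalar.
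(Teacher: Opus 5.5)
Your proof is correct, but it takes a different route from the paper's.

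\textbf{The paper's route.} The paper conjugates the rotor shift: $\mathcal V(I\otimes X_R(1))\mathcal V^\dagger=\sum_\ell V_{\ell+1}V_\ell^\dagger\otimes|\ell+1\rangle\langle\ell|$ must be a hybrid Weyl operator with unit momentum shift. That gives the recurrence $V_{\ell+1}V_\ell^\dagger=e^{i(\mu+\alpha\ell)}P_c$ directly. Telescoping it produces the Pauli factor $P_c^{\ell}$ and the quadratic phase at the same time. Qubit Paulis are conjugated only at the end, to show $V_0\in\Cl_{n,0}$.

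\textbf{Your route.} You conjugate the qubit Weyl operators first. From the block form of Theorem~\ref{thm:hybrid-phase-space-classification}, the torus component of $S(q,0,0)$ is $\pi C^{\transpose}J_nMq$; the sign $A=\pm1$ is irrelevant mod $2\pi$. This pins down $V_\ell=\mu(\ell)P_c^{\ell}V_0$, with $c$ the single column of the mixed block. You then hand the scalar $\mu$ to Proposition~\ref{prop:momentum-diagonal-phases}.

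\textbf{What each buys.} The paper's argument is shorter and self-contained: one conjugation, with no appeal to the classification theorem or to the scalar criterion. Yours makes explicit that the Pauli direction $c$ is exactly the mixed block $C$ of the induced symplectic map. The price is reliance on the $\Hom$ and commutator analysis behind Theorem~\ref{thm:hybrid-phase-space-classification}, and on Proposition~\ref{prop:momentum-diagonal-phases}. The proof of that proposition is essentially the scalar version of the paper's recurrence.

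\textbf{Two small corrections.}

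First, both of your factorizations put $V_0$ on the wrong side. The product $(V_0\otimes I)\Gamma_{c,1}$ has blocks $V_0P_c^{\ell}$, not $P_c^{\ell}V_0$. Since $V_0$ need not commute with $P_c$, the phrase ``up to the ordering of commuting factors'' is not justified. The correct identity is $\mathcal V=[I\otimes Z_R(\lambda)Q_{(\alpha)}]\Gamma_{c,1}(V_0\otimes I)$, and in Step~2 it is $I\otimes G_\mu=\mathcal V(V_0^\dagger\otimes I)\Gamma_{c,1}$. Every factor is Clifford either way, so neither conclusion is affected.

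Second, Step~1 is cleaner than you make it. The scalar $\alpha_U(q)$ does not depend on $\ell$. Comparing with the $\ell=0$ block therefore gives $V_\ell W_Q(q)V_\ell^\dagger=P_c^{\ell}V_0W_Q(q)V_0^\dagger P_c^{\ell}$ exactly. So $V_0^\dagger P_c^{\ell}V_\ell$ commutes with every qubit Pauli and is a scalar. The detour through a Pauli up to phase, and the reduction in your last paragraph, are unnecessary.
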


\begin{proof}[Proof sketch]
Conjugating $I\otimes X_R(1)$ shows that Cliffordness forces
$V_{\ell+1}V_\ell^\dagger=e^{i(\rho+\alpha\ell)}P_c$ for a fixed Pauli
operator $P_c$. Appendix~\ref{app:momentum-dependent-qubit-gate-proof} solves
this recurrence for every $\ell\in\ZZ$ and gives
Eq.~\eqref{eq:momentum-dependent-qubit-clifford-form} with a fixed unitary
$V_0$. Conjugating the qubit Weyl operators shows that $V_0\in\Cl_{n,0}$.
Conversely, writing $\mathcal V=(I\otimes G_u)\Gamma_{c,1}(V_0\otimes I)$,
with $u(\ell)=e^{i[\lambda\ell+\alpha\ell(\ell+1)/2]}$, proves that
$\mathcal V$ is Clifford.
\end{proof}

A simple specialization is the conditional phase gate
\begin{equation}
  C_{c,j}(\varphi) = \exp\!\left[ i\varphi\frac{I-P_c}{2}\otimes\hat\ell_j \right].
  \label{eq:momentum-dependent-conditional-phase}
\end{equation}

\begin{corollary}[Clifford criterion for the conditional-phase gate]
\label{cor:conditional-phase-clifford-criterion}
For $n,r\geq1$, $c\in\FF_2^{2n}\setminus\{0\}$,
$1\leq j\leq r$, and $\varphi\in\TT$,
\begin{equation}
  C_{c,j}(\varphi)\in\Cl_{n,r}
  \quad\Longleftrightarrow\quad
  \varphi\equiv0\ \text{or}\ \pi\pmod{2\pi}.
\end{equation}
The case $\varphi\equiv0$ gives the identity, while
$
  C_{c,j}(\pi)=\Gamma_{c,j}.
$
\end{corollary}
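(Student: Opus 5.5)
Write $\Pi_c=(I-P_c)/2$ for the projector onto the $-1$ eigenspace of $P_c$. Since $\Pi_c$ commutes with $\hat\ell_j$, the gate is block diagonal in the momentum basis of rotor $j$:
\begin{equation*}
  C_{c,j}(\varphi)=\sum_{\ell\in\ZZ^r} e^{i\varphi\ell_j\Pi_c}\otimes|\ell\rangle\langle\ell|.
\end{equation*}
Because $\Pi_c^2=\Pi_c$, each block is $e^{i\varphi\ell_j\Pi_c}=I+(e^{i\varphi\ell_j}-1)\Pi_c$. The plan is to reduce to Proposition~\ref{prop:momentum-dependent-qubit-gates} by decoupling the other rotors, and then compare the blocks with the Clifford normal form.

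\emph{Reduction to one rotor.} After a rotor permutation (a rotor Clifford), we may take $j=1$, so $C_{c,1}(\varphi)=\mathcal V\otimes I_{\text{rotors }2,\ldots,r}$, where $\mathcal V$ acts on the $n$ qubits and rotor $1$ with blocks $V_\ell=I+(e^{i\varphi\ell}-1)\Pi_c$. Conjugating the Weyl operators of rotors $2,\ldots,r$ does nothing. Therefore the full gate lies in $\Cl_{n,r}$ if and only if $\mathcal V\in\Cl_{n,1}$. This follows from the tensor structure $K_{n,r}=K_{n,1}\times K_{0,r-1}$: the image of a Weyl operator on the first factor under conjugation has trivial second-factor component, since it is a Weyl operator of $K_{n,1}$ tensored with the identity.

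\emph{Sufficiency.} For $\varphi=0$ every block is $I$. For $\varphi=\pi$ the block is $I-2\Pi_c=P_c$ when $\ell$ is odd and $I$ when $\ell$ is even. This is exactly $\Gamma_{c,j}$ by Eq.~\eqref{eq:elementary-mixed-gate}, so the gate is Clifford.

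\emph{Necessity.} This is the main step. Suppose $\mathcal V$ is Clifford. By Proposition~\ref{prop:momentum-dependent-qubit-gates}, $V_\ell=e^{i[\lambda\ell+\alpha\ell(\ell+1)/2]}P_{c'}^{\ell}V_0$. Setting $\ell=0$ gives $V_0=I$. Setting $\ell=1$ gives $V_1=e^{i(\lambda+\alpha)}P_{c'}$, which must equal $I+(e^{i\varphi}-1)\Pi_c$. The latter has eigenvalue $1$ on the $+1$ eigenspace of $P_c$ and eigenvalue $e^{i\varphi}$ on the $-1$ eigenspace; both eigenspaces are nonzero because $c\neq0$ makes $P_c$ a nontrivial traceless Pauli. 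The spectrum of $e^{i\mu}P_{c'}$ is either $\{e^{i\mu}\}$ (if $c'=0$) or $\{\pm e^{i\mu}\}$ with equal multiplicities.

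In the first case $V_1$ is a scalar, which forces $e^{i\varphi}=1$. In the second case the eigenvalues $1$ and $e^{i\varphi}$ must be $\pm e^{i\mu}$, which forces $e^{i\varphi}=-1$. If $c'=0$ but $\varphi\neq0$, the spectra cannot match, so that combination is excluded.

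An alternative, direct route avoids the proposition. Conjugate $I\otimes X_R(e_1)$ by the gate. The image is $\sum_\ell V_{\ell+1}V_\ell^\dagger\otimes|\ell+1\rangle\langle\ell|$ with $V_{\ell+1}V_\ell^\dagger=V_1=I+(e^{i\varphi}-1)\Pi_c$, independent of $\ell$. This must be a hybrid Weyl operator up to phase, so $V_1$ must be proportional to a Pauli operator. The eigenvalue comparison above then gives $\varphi\in\{0,\pi\}$. The only care needed is the eigenvalue bookkeeping, and both degenerate cases are handled by the nontriviality of $P_c$.
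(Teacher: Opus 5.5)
Your proposal is correct and is essentially the paper's argument: the paper's proof is your ``alternative direct route.'' It notes that the relative unitary $e^{i\varphi(I-P_c)/2}$ between adjacent momentum blocks has eigenvalues $1$ and $e^{i\varphi}$ and must be proportional to a Pauli operator, so the two eigenvalues are equal or opposite. Your main route through Proposition~\ref{prop:momentum-dependent-qubit-gates} (after reducing to $j=1$ and a single rotor) carries out the same eigenvalue comparison on $V_1$, and that proposition is itself proved by this same relative-unitary step.
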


\begin{proof}
The relative unitary between adjacent momentum blocks is
$
  \exp\!\left[
    i\varphi\frac{I-P_c}{2}
  \right].
$
Its eigenvalues are $1$ and $e^{i\varphi}$. For this operator to be
proportional to a Pauli operator, these eigenvalues must be either equal or
opposite. Hence $e^{i\varphi}=1$ or $e^{i\varphi}=-1,$
which gives
$\varphi\equiv0$ or $\pi\pmod{2\pi}$.
For $\varphi=\pi$, the relative unitary is $P_c$, and the gate is
$\Gamma_{c,j}$.
\end{proof}

A broader family is given by momentum-selective qubit rotations (MQRs).
For one qubit, let $X$ and $Y$ denote the Pauli matrices and define
\begin{equation}
  \mathrm{MQR} = \sum_{\ell\in\ZZ} R_{\phi_\ell}(\theta_\ell) \otimes|\ell\rangle\langle\ell|,
\end{equation}
\begin{equation}
  R_\phi(\theta) = \exp\!\left[ -\frac{i\theta}{2} (\cos\phi\,X+\sin\phi\,Y) \right],
\end{equation}
where $\phi_\ell\in\TT$ and $\theta_\ell\in\mathbb R$.
By Proposition~\ref{prop:momentum-dependent-qubit-gates}, such a gate is
Clifford if and only if its blocks have the form
\begin{equation}
  R_{\phi_\ell}(\theta_\ell) = e^{i[\lambda\ell+\alpha\ell(\ell+1)/2]} P_c^{\,\ell}V_0 \qquad
  \text{for all }\ell\in\ZZ,
  \label{eq:mqr-clifford-form}
\end{equation}
for some fixed $V_0\in\Cl_{1,0}$, fixed Pauli operator $P_c$, and
$\alpha,\lambda\in\TT$.
Thus, apart from a scalar phase depending quadratically on $\ell$, the
qubit operation alternates with momentum parity between $V_0$ and $P_cV_0$.
An MQR outside this form is non-Clifford. A simple Clifford example is
$
  R_0(\pi\ell)
  =
  (-iX)^\ell
  =
  e^{-i\pi\ell/2}X^\ell.
$
This is of the form in
Eq.~\eqref{eq:mqr-clifford-form}, with a linear scalar phase and
momentum-parity-dependent powers of $X$.

\subsection{Angle-dependent phase gates}

Angle-dependent phase gates include pure-rotor multiplication operators and
phases coupled to a qubit Pauli operator.

\subsubsection{Pure-rotor phase gates}

Let $g:\TT^r\to\UU(1)$ be continuous, and define the multiplication
operator
\begin{equation}
  [M_g\psi](\theta)
  =
  g(\theta)\psi(\theta).
\end{equation}
The gate $M_g$ commutes with every $X_R(m)$. Its conjugation of an angle
translation is
\begin{equation}
  M_gZ_R(\phi)M_g^\dagger
  =
  M_{g(\theta)\overline{g(\theta+\phi)}}Z_R(\phi).
  \label{eq:angle-dependent-phase-translation}
\end{equation}
If $M_g$ is Clifford, this factor must be proportional to a torus character for
every $\phi\in\TT^r$.

\begin{proposition}[Clifford criterion for pure-rotor phase gates]
\label{prop:pure-rotor-phase-gate-clifford-criterion}
The gate $M_g$ belongs to $\Cl_{0,r}$ if and only if there exist
$\gamma\in\TT$ and $s\in\ZZ^r$ such that
\begin{equation}
  g(\theta) = e^{i\gamma+i s^{\transpose}\theta} \qquad \text{for all }\theta\in\TT^r.
\end{equation}
Up to a global phase, $M_g$ is the rotor Weyl operator
$X_R(s)$.
\end{proposition}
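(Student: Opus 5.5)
The ``if'' direction is a direct identification. If $g(\theta)=e^{i\gamma+is^{\transpose}\theta}$, then under the Fourier relation~\eqref{eq:rotor-angle-momentum-fourier-relation} multiplication by $e^{is^{\transpose}\theta}$ acts on momentum states as a shift by $\pm s$; with the sign convention $|\theta\rangle\propto\sum_\ell e^{-i\ell\theta}|\ell\rangle$ it is $X_R(s)$. I will confirm the sign with a one-line computation on $\langle\theta|\ell\rangle$. Hence $M_g=e^{i\gamma}X_R(s)\in\Pauli_{0,r}\subset\Cl_{0,r}$.

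For the ``only if'' direction, assume $M_g\in\Cl_{0,r}$. The first step is to note that $M_g$ commutes with every $X_R(m)$, since both act as multiplication operators in the angle representation. Therefore $S_{M_g}$ fixes every $(m,0)$. By Theorem~\ref{thm:hybrid-phase-space-classification} with $n=0$, $S_{M_g}=S_{A,\Theta}$, and fixing $(m,0)$ gives $Am=m$ and $A^{-\transpose}\Theta m=0$ for all $m$. So $A=I$ and $\Theta=0$, and $S_{M_g}$ is the identity. By Corollary~\ref{cor:clifford-symplectic-exact-sequence}, $M_g\in\ker\pi_{0,r}=\Pauli_{0,r}$, so $M_g=zX_R(m_0)Z_R(\phi_0)$ for some $z\in\UU(1)$.

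The last step is to pin down this Pauli operator. In the angle representation $X_R(m_0)$ is multiplication by the character $e^{\pm im_0^{\transpose}\theta}$, while $Z_R(\phi_0)$ is the translation $\psi(\theta)\mapsto\psi(\theta\mp\phi_0)$. A nonzero translation is not a multiplication operator: apply it to a bump function supported in a small neighbourhood and compare supports. Hence $\phi_0=0$, and $M_g=zX_R(m_0)$. Comparing multiplication symbols, which are continuous, so a.e.\ equality suffices, gives $g(\theta)=e^{i\gamma+is^{\transpose}\theta}$ with $s=\pm m_0\in\ZZ^r$.

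The only delicate point is this final identification: excluding the translation part and tracking the sign convention. An alternative route uses Eq.~\eqref{eq:angle-dependent-phase-translation} directly. It says $g(\theta)\overline{g(\theta+\phi)}$ is proportional to a character $e^{-is(\phi)^{\transpose}\theta}$ for each $\phi$, since the commutation with $X_R$ already forces the image to have no translation part. Setting $\theta=0$ and using continuity, one then solves this Cauchy-type functional equation to obtain the same exponential form. I would use the exact-sequence argument as the main proof because it avoids solving the functional equation.
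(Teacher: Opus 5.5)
Your proof is correct, but it takes a different route from the paper's.

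\textbf{The paper's route.} The paper argues directly from Eq.~\eqref{eq:angle-dependent-phase-translation}. Cliffordness forces $g(\theta)\overline{g(\theta+\phi)}=c_\phi e^{is_\phi^{\transpose}\theta}$, and a winding-number comparison gives $s_\phi=0$. Hence $g(\theta+\phi)=\chi(\phi)g(\theta)$ with $\chi$ a continuous character of $\TT^r$, and evaluating at $\theta=0$ gives the exponential form.

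\textbf{Your route.} You use the fact that $M_g$ commutes with every $X_R(m)$, so $S_{M_g}$ fixes $\ZZ^r\times\{0\}$ pointwise. The $n=0$ case of Theorem~\ref{thm:hybrid-phase-space-classification} then forces $A=I$ and $\Theta=0$. Corollary~\ref{cor:clifford-symplectic-exact-sequence} places $M_g$ in $\Pauli_{0,r}$, and the problem reduces to deciding which Weyl operators are multiplication operators.

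\textbf{Comparison.} Your version is shorter and needs neither a functional equation nor winding numbers. It is not self-contained, though: the topological input is moved into the classification theorem, through $\Hom(\TT^r,\ZZ^r)=0$, which is the algebraic counterpart of the winding-number step. It also enters the kernel computation, through Lemma~\ref{lemma:perfect} and the continuity of the phase character used there. The paper's argument is elementary and independent of the structure theory, and it shows explicitly why no momentum-shift component can appear.

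\textbf{A simplification of your last step.} You do not need the bump-function support argument. The operator $zX_R(m_0)Z_R(\phi_0)$ commutes with all $X_R(m)$ only if $e^{i\phi_0^{\transpose}m}=1$ for every $m\in\ZZ^r$, which means $\phi_0=0$. That commutation is exactly what you already established for $M_g$.

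\textbf{The sign.} With the paper's convention $\langle\theta|\ell\rangle=e^{i\ell\theta}/\sqrt{2\pi}$, $X_R(m)$ is multiplication by $e^{im^{\transpose}\theta}$, so $s=m_0$.
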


\begin{proof}
Suppose that $M_g$ is Clifford. For each $\phi\in\TT^r$,
Eq.~\eqref{eq:angle-dependent-phase-translation} implies
\begin{equation}
  g(\theta)\overline{g(\theta+\phi)}
  =
  c_\phi e^{is_\phi^{\transpose}\theta}
  \label{eq:angle-phase-character}
\end{equation}
for some $c_\phi\in\UU(1)$ and $s_\phi\in\ZZ^r$.

Translation preserves the winding number of $g$ in every angle variable;
complex conjugation reverses it. Hence the left-hand side
of Eq.~\eqref{eq:angle-phase-character} has zero winding number in every
angle variable. The winding numbers of the character on the right-hand
side are the components of $s_\phi$, so $s_\phi=0$.

It follows that
$
  g(\theta+\phi)=\chi(\phi)g(\theta)
$
for some continuous function $\chi:\TT^r\to\UU(1)$. Applying two
successive translations gives
$
  \chi(\phi+\phi')
  =
  \chi(\phi)\chi(\phi'),
$
so $\chi$ is a continuous character of $\TT^r$. Therefore
$
  \chi(\phi)=e^{is^{\transpose}\phi}
$
for some $s\in\ZZ^r$. Taking $\theta=0$ gives
$
  g(\theta)=e^{i\gamma+is^{\transpose}\theta}
$
for some $\gamma\in\TT$.

Conversely, a function of this form gives
$
  M_g=e^{i\gamma}X_R(s),
$
which is a rotor Weyl operator and hence a Clifford gate.
\end{proof}

Let $f\in C(\TT^r,\mathbb R)$, and let $M_f$ be the multiplication
operator
$
  [M_f\psi](\theta)=f(\theta)\psi(\theta).
$
Define
\begin{equation}
  V_f(t)=e^{-itM_f},
  \qquad
  t\in\mathbb R.
\end{equation}

\begin{corollary}[Clifford criterion for real angle potentials]
\label{cor:real-angle-potential-clifford-criterion}
The gate $V_f(t)$ belongs to $\Cl_{0,r}$ if and only if
$t=0$ or $f$ is constant.
\end{corollary}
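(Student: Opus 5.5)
The plan is to apply Proposition~\ref{prop:pure-rotor-phase-gate-clifford-criterion} with $g=e^{-itf}$. The function $g$ is continuous with values in $\UU(1)$, and $V_f(t)=M_g$. So $V_f(t)\in\Cl_{0,r}$ if and only if $e^{-itf(\theta)}=e^{i\gamma+is^{\transpose}\theta}$ for all $\theta\in\TT^r$, for some $\gamma\in\TT$ and $s\in\ZZ^r$.

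The \emph{if} direction is immediate. If $t=0$, then $V_f(t)=I$. If $f\equiv f_0$ is constant, then $V_f(t)=e^{-itf_0}I$, a scalar Pauli operator; equivalently, $s=0$ and $\gamma=-tf_0$ in the proposition.

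For the \emph{only if} direction, assume $t\neq0$ and the displayed identity holds. The step requiring care is extracting $s=0$ from the real-valuedness of $f$. The function $\theta\mapsto e^{-itf(\theta)}$ has a continuous real logarithm $-tf$ on all of $\TT^r$. Its winding number along each coordinate circle is therefore zero, since the lift $-tf$ is periodic. The character $e^{is^{\transpose}\theta}$ has winding number $s_j$ along the $j$th circle, so $s=0$. Concretely, restricting to the $j$th coordinate loop, the two continuous lifts $-tf(\theta)$ and $\gamma+s_j\theta_j$ on $[0,2\pi]$ differ by a constant multiple of $2\pi$. The first returns to its starting value after one loop, while the second increases by $2\pi s_j$.

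With $s=0$, we have $e^{-itf(\theta)}=e^{i\gamma}$ for all $\theta$. Hence $tf(\theta)\in -\gamma+2\pi\ZZ$, so the continuous real function $tf$ takes values in a discrete set. Since $\TT^r$ is connected, $tf$ is constant, and because $t\neq0$, $f$ is constant. This completes the argument.
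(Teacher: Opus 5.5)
Your proposal is correct and follows the paper's proof essentially step for step: you apply Proposition~\ref{prop:pure-rotor-phase-gate-clifford-criterion} to $g=e^{-itf}$, use the zero winding number of a function with a continuous real logarithm to force $s=0$, and then use connectedness of $\TT^r$ to conclude that $tf$ is constant, and hence $f$ is constant since $t\neq0$. Your explicit comparison of lifts along each coordinate loop just spells out the winding-number step that the paper states in one line.
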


\begin{proof}
The case $t=0$ gives the identity. Suppose that $t\neq0$ and that
$V_f(t)$ is Clifford. By
Proposition~\ref{prop:pure-rotor-phase-gate-clifford-criterion}, there
exist $\gamma\in\TT$ and $s\in\ZZ^r$ such that
\begin{equation}
  e^{-itf(\theta)}
  =
  e^{i\gamma+is^{\transpose}\theta}.
\end{equation}
Because $f$ is a continuous real-valued function on $\TT^r$, the
left-hand side has zero winding number in every angle variable. The
winding numbers of the right-hand side are the components of $s$.
Therefore $s=0$, and $e^{-itf(\theta)}$ is constant.

It follows that $ t\bigl[f(\theta)-f(0)\bigr]\in2\pi\ZZ$ for all $\theta\in\TT^r$.
The left-hand side is continuous in $\theta$, while $2\pi\ZZ$ is
discrete. Since $\TT^r$ is connected, it must vanish identically.
Hence $f$ is constant. The converse is immediate, since a constant
$f$ gives only a global phase.
\end{proof}

For one rotor, the cosine-potential gate is
\begin{equation}
  K(t)=e^{-itM_{\cos}},
  \qquad
  [M_{\cos}\psi](\theta)=\cos\theta\,\psi(\theta).
  \label{eq:cosine-potential-gate}
\end{equation}
Since $\cos\theta$ is nonconstant,
Corollary~\ref{cor:real-angle-potential-clifford-criterion} shows that
$K(t)$ is non-Clifford for every $t\neq0$.

\subsubsection{Pauli-string--rotor phase gates}

For $n,r\geq1$ and $c\in\FF_2^{2n}\setminus\{0\}$, let $P_c$ be the
corresponding Hermitian Pauli string. Given
$f\in C(\TT^r,\mathbb R)$ and $\tau\in\mathbb R$, define the
Pauli-string--rotor phase gate (PSR gate)
\begin{equation}
  \mathrm{PSR}_{c,f}(\tau) = \exp\!\left[ -i\tau P_c\otimes M_f \right].
  \label{eq:pauli-string-rotor-phase-gate}
\end{equation}
The gate restricts to $V_f(\tau)$ and $V_f(-\tau)$ on the $+1$ and $-1$
eigenspaces of $P_c$, respectively.  All nonidentity Pauli operators are
Clifford-conjugate up to sign, so it suffices to analyze one representative.

When $f(\theta)=f_0$ is constant, the rotor dependence drops out and the gate
reduces to the Pauli rotation $e^{-i\tau f_0P_c}$.

\begin{proposition}[Clifford criterion for PSR gates]
For $\tau=0$, the gate $\mathrm{PSR}_{c,f}(\tau)$ is the identity.
Suppose that $\tau\neq0$. Then
$\mathrm{PSR}_{c,f}(\tau)$ is Clifford if and only if $f$ is
constant, say $f(\theta)=f_0$, and $\tau f_0\in\frac{\pi}{4}\ZZ.$
\end{proposition}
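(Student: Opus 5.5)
Since all nonidentity Pauli strings are Clifford-conjugate up to sign, and a sign flip replaces $\tau$ by $-\tau$, we may take $P_c=Z_1$. Using $P_c^2=I$, write
\begin{equation*}
  \mathrm{PSR}_{c,f}(\tau)=\Pi_+\otimes V_f(\tau)+\Pi_-\otimes V_f(-\tau),
  \qquad \Pi_\pm=\tfrac12(I\pm P_c).
\end{equation*}
Equivalently, $\mathrm{PSR}_{c,f}(\tau)=\cos(\tau M_f)\otimes I - iP_c\otimes\sin(\tau M_f)$. The sufficiency direction is quick. If $f\equiv f_0$, the gate is the qubit Pauli rotation $e^{-i\tau f_0P_c}$, tensored with the identity on the rotors. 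When $\tau f_0\in\frac{\pi}{4}\ZZ$, this rotation is, up to phase, the identity, $P_c$, or $(I\mp iP_c)/\sqrt2$. The last is conjugate to the phase gate $S$ or $S^\dagger$ on qubit~1, so it lies in $\Cl_{n,0}\subset\Cl_{n,r}$.

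For necessity, assume $\tau\neq0$ and that $U=\mathrm{PSR}_{c,f}(\tau)$ is Clifford. The first step is to conjugate a rotor phase operator $I\otimes Z_R(\phi)$. Since $Z_R(\phi)$ translates the angle, $M_fZ_R(\phi)=Z_R(\phi)M_{f(\cdot+\phi)}$ up to sign convention. It follows that $UZ_R(\phi)U^\dagger=\bigl[\Pi_+\otimes M_{g_\phi^+}+\Pi_-\otimes M_{g_\phi^-}\bigr]Z_R(\phi)$, where
\begin{equation*}
  g^\pm_\phi(\theta)=e^{\mp i\tau[f(\theta)-f(\theta+\phi)]}.
\end{equation*}
This operator must equal $\alpha\,W(y)$ for some $y$. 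It commutes with every qubit $Z$, and it is block diagonal with respect to both the rotor momentum shifts and $\Pi_\pm$. From this I will argue that $W(y)$ has trivial qubit $X$-part. Its rotor component is then $X_R(s)Z_R(\phi')$ with $M_{g^\pm}$ equal to $e^{i\gamma_\pm}e^{is^\transpose\theta}$. The winding argument from Proposition~\ref{prop:pure-rotor-phase-gate-clifford-criterion} gives $s=0$, since $g^\pm_\phi$ has zero winding. So $\tau[f(\theta)-f(\theta+\phi)]$ is independent of $\theta$ modulo $2\pi$. The continuity-and-connectedness argument of Corollary~\ref{cor:real-angle-potential-clifford-criterion} then shows $f(\theta+\phi)-f(\theta)=\kappa(\phi)$ exactly. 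The map $\kappa$ is therefore a continuous homomorphism $\TT^r\to\mathbb R$, hence zero, and $f$ is constant.

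The remaining step takes $f\equiv f_0$, so that $U$ is the qubit rotation $e^{-i\tau f_0P_c}\otimes I$, and conjugates a qubit Pauli $P_{c'}$ that anticommutes with $P_c$. With $\vartheta=\tau f_0$, this gives
\begin{equation*}
  UP_{c'}U^\dagger=e^{-2i\vartheta P_c}P_{c'}=\cos(2\vartheta)\,P_{c'}-i\sin(2\vartheta)\,P_cP_{c'}.
\end{equation*}
This is a combination of two linearly independent Pauli operators. It is proportional to a single Weyl operator only when $\cos 2\vartheta=0$ or $\sin2\vartheta=0$, that is, when $\vartheta\in\frac{\pi}{4}\ZZ$.

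I expect the main obstacle to be the step showing that the conjugated operator $UZ_R(\phi)U^\dagger$ is a pure rotor Weyl operator times a qubit diagonal Pauli. This requires a careful comparison against the form $\alpha W(y)$. The plan is to exploit two facts. First, the operator commutes with all $X_R(m)$ up to the known phase $e^{i\phi^\transpose m}$, which pins the rotor $Z$-label to $\phi$. Second, it commutes with the qubit $Z$-type operators, which kills the qubit $X$-part. After that, the statement reduces to scalar functions on each eigenspace of $P_c$, where the earlier winding arguments apply verbatim.
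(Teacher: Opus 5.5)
Your proof is correct, but it reaches the constancy of $f$ by a different route from the paper.

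The paper conjugates a qubit Pauli $P_d$ that anticommutes with $P_c$. This gives $P_d\otimes M_{\cos(2\tau f)}-iP_cP_d\otimes M_{\sin(2\tau f)}$. Because a hybrid Weyl operator has a single qubit Pauli component, either $\cos(2\tau f)$ or $\sin(2\tau f)$ vanishes identically. Then $2\tau f$ takes values in $\frac{\pi}{2}\ZZ$, and connectedness gives constancy at once. That identity already contains $\tau f_0\in\frac{\pi}{4}\ZZ$, which the paper rechecks via the Pauli rotation exactly as in your final step.

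You instead conjugate the angle translations $I\otimes Z_R(\phi)$, and each step checks out:
\begin{itemize}
\item commutation with the qubit $Z$'s removes the qubit $X$-part;
\item the relation with $X_R(m)$ pins the rotor phase label to $\phi$;
\item the $P_c=+1$ block $M_{g^+_\phi}$ must then be a constant times a character, and the winding argument makes the character trivial;
\item the resulting identity $f(\theta+\phi)-f(\theta)=\kappa(\phi)$ makes $\kappa$ a continuous homomorphism $\TT^r\to\mathbb R$, which forces $\kappa=0$.
\end{itemize}

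Your route is longer. Once $f$ is constant the rotor-side conjugation carries no further information, so you genuinely need the second, qubit-side conjugation to quantize $\tau f_0$. In exchange, you reuse the machinery of Proposition~\ref{prop:pure-rotor-phase-gate-clifford-criterion} and Corollary~\ref{cor:real-angle-potential-clifford-criterion} eigenspace by eigenspace. You also do not rely on the $\cos/\sin$ mixing specific to PSR gates, so your constancy step would apply unchanged to gates whose blocks on qubit computational basis states are arbitrary real angle potentials. The paper's single anticommuting-Pauli conjugation is the more economical argument for this particular gate.

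Your sufficiency argument is an equally valid alternative to the paper's direct check on commuting and anticommuting Pauli strings. It identifies $e^{-i\tau f_0P_c}$, up to phase, with $I$, $P_c$, or a Clifford conjugate of $S^{\pm1}$.
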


\begin{proof}
The case $\tau=0$ is immediate. Assume that $\tau\neq0$.
Since $P_c$ is nonidentity, choose a Hermitian Pauli string $P_d$ that
anticommutes with $P_c$. By
$
  P_c^2=I,
  P_cP_d=-P_dP_c,
$
we obtain
\begin{equation}
  \mathrm{PSR}_{c,f}(\tau) (P_d\otimes I_R) \mathrm{PSR}_{c,f}(\tau)^\dagger = P_d\otimes M_{\cos(2\tau f)}
  -iP_cP_d\otimes M_{\sin(2\tau f)},
  \label{eq:pauli-string-phase-pauli-conjugation}
\end{equation}
where $M_{\cos(2\tau f)}$ and $M_{\sin(2\tau f)}$ denote multiplication
by $\cos(2\tau f(\theta))$ and $\sin(2\tau f(\theta))$, respectively.

Suppose first that $\mathrm{PSR}_{c,f}(\tau)$ is Clifford. Then the
operator in
Eq.~\eqref{eq:pauli-string-phase-pauli-conjugation} must be proportional
to a single hybrid Weyl operator. The Pauli strings $P_d$ and $-iP_cP_d$ are
linearly independent, but a hybrid Weyl operator contains only one qubit
Pauli component. Therefore one of the two coefficient
functions must vanish identically:
\begin{equation}
  \sin\!\bigl(2\tau f(\theta)\bigr)=0
  \qquad
  \text{for all }\theta\in\TT^r,
\end{equation}
or
\begin{equation}
  \cos\!\bigl(2\tau f(\theta)\bigr)=0
  \qquad
  \text{for all }\theta\in\TT^r.
\end{equation}

In the first case, $2\tau f(\theta)\in\pi\ZZ$ for all $\theta$, while in the
second case, $2\tau f(\theta)\in\frac{\pi}{2}+\pi\ZZ$ for all $\theta.$

In either case, the continuous function $f$ takes values in a discrete
subset of $\mathbb R$. Since $\TT^r$ is connected, $f$ must be constant.
Write $f(\theta)=f_0, a=\tau f_0.$

The PSR gate then reduces to
\begin{equation}
  \mathrm{PSR}_{c,f}(\tau)
  =
  e^{-iaP_c}\otimes I_R.
\end{equation}

For a Pauli string $P_d$ anticommuting with $P_c$,
\begin{equation}
  e^{-iaP_c}P_de^{iaP_c}
  =
  \cos(2a)P_d-i\sin(2a)P_cP_d.
  \label{eq:pauli-rotation-conjugation}
\end{equation}
For this operator to be proportional to a single Pauli string, one of
the two coefficients must vanish. Hence $\sin(2a)=0$ or $\cos(2a)=0.$
Thus the gate can be Clifford only if $\tau f_0\in\frac{\pi}{4}\ZZ.$

Conversely, suppose that $f(\theta)=f_0$ and
$\tau f_0\in(\pi/4)\ZZ$. Any qubit Pauli string commuting with $P_c$
is unchanged under conjugation by $e^{-i\tau f_0P_c}$. If a qubit Pauli string instead anticommutes with $P_c$, Eq.~\eqref{eq:pauli-rotation-conjugation}
shows that it is mapped, up to a scalar phase, to either itself or
$P_cP_d$. Thus $e^{-i\tau f_0P_c}$ normalizes the qubit Pauli group.
Since the gate acts trivially on the rotor subsystem,
$\mathrm{PSR}_{c,f}(\tau) = e^{-i\tau f_0P_c}\otimes I_R \in\Cl_{n,r}.$
\end{proof}

For $1\leq j\leq n$ and $1\leq k\leq r$, choose $P_c=Z_j$ and
$f(\theta)=\cos\theta_k$. Then $e^{-i\tau Z_j\otimes\cos\hat\theta_k}$
is non-Clifford for every $\tau\neq0$. On the $Z_j=+1$ and $Z_j=-1$
eigenspaces, its rotor factors are $e^{-i\tau\cos\hat\theta_k}$ and
$e^{i\tau\cos\hat\theta_k}$, respectively.

\subsection{Qubit-controlled rotor displacements}
\label{subsec:qubit-controlled-rotor-displacements}

The mixed Clifford gate $\Gamma_{c,j}$ uses rotor momentum parity to
control a qubit Pauli operation. We now consider the opposite direction
of control, in which the qubit controls a rotor displacement. The
classification in
Sec.~\ref{sec:clifford-structure-hybrid-qubit-rotor-systems}
already indicates an obstruction: every homomorphism
$
  \FF_2^{2n}\longrightarrow\ZZ^r
$
is trivial. A Clifford transformation therefore cannot contain a
nonzero qubit-controlled rotor-momentum shift.

For one qubit and one rotor, define the conditional displacement
\begin{equation}
  \mathrm{CD}(m,\lambda) = |0\rangle\langle0|\otimes D_R(m,\lambda) +
  |1\rangle\langle1|\otimes D_R(-m,-\lambda), \qquad m\in\ZZ,\quad \lambda\in\TT.
\end{equation}

\begin{proposition}[Clifford criterion for conditional rotor displacements]
The gate $\mathrm{CD}(m,\lambda)$ is Clifford if and only if $m=0$ and $\lambda\in\frac{\pi}{2}\ZZ\subset\TT.$
\end{proposition}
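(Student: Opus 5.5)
The plan is to test Cliffordness directly by conjugating a small generating set of hybrid Weyl operators and requiring each image to be proportional to a single $W(x)$. Since $\mathrm{CD}(m,\lambda)$ is block diagonal in the qubit computational basis, every conjugate stays of the form $\sum_{u,v}|u\rangle\langle v|\otimes(\text{rotor operator})$, so the comparison with a Weyl operator reduces to explicit $2\times2$ bookkeeping. I will use the elementary commutation rules $X_R(m)Z_R(\phi)X_R(-m)=e^{-i\phi m}Z_R(\phi)$ and $Z_R(\lambda)X_R(1)Z_R(-\lambda)=e^{i\lambda}X_R(1)$, which follow from the definitions.

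\emph{Step 1: $m=0$ is necessary.} Conjugating $I\otimes Z_R(\phi)$ by $\mathrm{CD}(m,\lambda)$ gives
\begin{equation*}
  \bigl(e^{-i\phi m}|0\rangle\langle0|+e^{i\phi m}|1\rangle\langle1|\bigr)\otimes Z_R(\phi).
\end{equation*}
For this to be proportional to a Weyl operator, the qubit factor must be proportional to $I$ or to $Z$. That forces $e^{2i\phi m}\in\{\pm1\}$ for every $\phi\in\TT$. Letting $\phi$ vary continuously near $0$ gives $e^{2i\phi m}=1$, hence $m=0$. This step is the concrete instance of the obstruction noted above: the image would otherwise carry a rotor-momentum component depending on the qubit, i.e.\ a nonzero homomorphism $\FF_2^{2n}\to\ZZ^r$, which Theorem~\ref{thm:hybrid-phase-space-classification} forbids.

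\emph{Step 2: $\lambda\in\frac{\pi}{2}\ZZ$ is necessary once $m=0$.} Now $\mathrm{CD}(0,\lambda)=|0\rangle\langle0|\otimes Z_R(\lambda)+|1\rangle\langle1|\otimes Z_R(-\lambda)$, which commutes with $Z_Q$ and with all $Z_R(\phi)$. Conjugating $I\otimes X_R(1)$ gives
\begin{equation*}
  \bigl(e^{i\lambda}|0\rangle\langle0|+e^{-i\lambda}|1\rangle\langle1|\bigr)\otimes X_R(1)=e^{i\lambda Z}\otimes X_R(1).
\end{equation*}
This is Pauli-proportional only if $e^{2i\lambda}=\pm1$, that is, $\lambda\in\frac{\pi}{2}\ZZ$. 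As a cross-check, conjugating $X_Q\otimes I$ yields $(X\otimes Z_R(-2\lambda))(|0\rangle\langle0|\otimes I+|1\rangle\langle1|\otimes Z_R(4\lambda))$. This is a Weyl operator only if $Z_R(4\lambda)$ is a scalar. Since $Z_R(\psi)$ acts as $1$ on $|0\rangle$, that requires $4\lambda\equiv0$, the same condition.

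\emph{Step 3: sufficiency.} For $m=0$ and $\lambda=k\pi/2$, all conjugates of the generators $X_Q,Z_Q,X_R(m'),Z_R(\phi)$ are Weyl operators up to phase. In particular $e^{i\lambda Z}\in\{\pm I,\pm iZ\}$, and $Z_R(4\lambda)=I$. Products of such images are again Weyl up to phase, so the gate normalizes $\Pauli_{1,1}$. For the Step-2 case $\lambda=\pi/2$ one can also identify the gate explicitly: $e^{i\pi\hat\ell Z/2}=Z_R(\pi/2)\,\Gamma_{(0,1),1}$ up to the factor structure, which is a product of a rotor Weyl operator and the elementary mixed gate.

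I expect the main obstacle to be the bookkeeping in Step 3, not the necessity steps. Specifically, it is checking carefully that for $\lambda=\pi/2$ the operator conjugated from $X_Q$ is genuinely a single Weyl operator times a phase, with the correct $Z_R(\pi)$ factor, rather than a superposition of two. The clean way to settle this is to factor the gate into known Clifford pieces: $Z_R(\lambda)$ times $C_{(0,1),1}(-2\lambda)$, and then invoke Corollary~\ref{cor:conditional-phase-clifford-criterion}, for which $2\lambda\in\pi\ZZ$ is exactly the Clifford condition.
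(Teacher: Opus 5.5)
Your proposal is correct and follows essentially the paper's proof: Step~1 is the same conjugation of $I_Q\otimes Z_R(\phi)$, yielding $e^{-im\phi Z}\otimes Z_R(\phi)$, and for $m=0$ the paper uses exactly your factorization $\mathrm{CD}(0,\lambda)=[I_Q\otimes Z_R(\lambda)]\,C_{c_Z,1}(-2\lambda)$ together with Corollary~\ref{cor:conditional-phase-clifford-criterion}, whose proof is the same adjacent-momentum-block computation you carry out by conjugating $I_Q\otimes X_R(1)$. Your direct generator check for sufficiency also works, provided you note that $\mathrm{CD}(0,\lambda)^\dagger=\mathrm{CD}(0,-\lambda)$ satisfies the same condition, which gives the reverse inclusion needed for the gate to normalize $\Pauli_{1,1}$.
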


\begin{proof}
Conjugating the rotor phase operator gives
\begin{equation}
  \mathrm{CD}(m,\lambda) (I_Q\otimes Z_R(\phi)) \mathrm{CD}(m,\lambda)^\dagger = e^{-im\phi Z}\otimes Z_R(\phi).
\end{equation}
If $m\neq0$, one can choose $\phi$ such that
$e^{-im\phi Z}$ is not proportional to a Pauli operator. The conjugated
operator is then not proportional to a hybrid Weyl operator. Hence
$m=0$ is necessary.

Let $c_Z$ denote the Pauli label for which $P_{c_Z}=Z$. When $m=0$,
\begin{equation}
  \mathrm{CD}(0,\lambda)
  =
  [I_Q\otimes Z_R(\lambda)]
  C_{c_Z,1}(-2\lambda).
\end{equation}
The first factor is a rotor Weyl operator. By
Corollary~\ref{cor:conditional-phase-clifford-criterion}, the second
factor is Clifford if and only if $-2\lambda\equiv0$ or $\pi \pmod{2\pi}.$ Thus $\lambda\in\frac{\pi}{2}\ZZ\subset\TT$, which gives the stated
condition.
\end{proof}

The following qubit-controlled rotor shift acts nontrivially only on the
$|1\rangle$ component:
\begin{equation}
  \mathrm{CShift}_{q\to R}(m) = |0\rangle\langle0|_q\otimes I_R + |1\rangle\langle1|_q\otimes X_R(m), \qquad
  m\in\ZZ.
\end{equation}

\begin{corollary}[Clifford criterion for qubit-controlled rotor shifts]
The gate $\mathrm{CShift}_{q\to R}(m)$ is Clifford if and only if
$m=0$.
\end{corollary}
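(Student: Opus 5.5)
The case $m=0$ is immediate, since $\mathrm{CShift}_{q\to R}(0)=I$. For $m\neq0$ I will show that $\mathrm{CShift}_{q\to R}(m)$ fails to normalize $\Pauli_{1,1}$. Following the proof of the preceding proposition, I conjugate a rotor phase operator and write the result in the qubit--rotor block decomposition. On the $|0\rangle$ block the rotor factor is $I_R$, and on the $|1\rangle$ block it is $X_R(m)$. The Weyl relations give $X_R(m)Z_R(\phi)X_R(m)^\dagger=e^{-im\phi}Z_R(\phi)$, since $X_R(m)Z_R(\phi)X_R(-m)|\ell\rangle=e^{i\phi(\ell-m)}|\ell\rangle$. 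Hence
\begin{equation}
  \mathrm{CShift}_{q\to R}(m)\,(I_Q\otimes Z_R(\phi))\,\mathrm{CShift}_{q\to R}(m)^\dagger
  =\bigl(|0\rangle\langle0|+e^{-im\phi}|1\rangle\langle1|\bigr)\otimes Z_R(\phi).
\end{equation}

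The second step checks when this operator is proportional to a hybrid Weyl operator. It factors as a qubit diagonal operator tensored with $Z_R(\phi)$. A Weyl operator $W(q,m',\phi')$ is likewise a product of a qubit Pauli and a rotor displacement, so proportionality forces $m'=0$ and $\phi'=\phi$ on the rotor side. On the qubit side, $\mathrm{diag}(1,e^{-im\phi})$ must then be proportional to a single Pauli operator. A diagonal unitary is proportional to a Pauli only when it is proportional to $I$ or $Z$, that is, when $e^{-im\phi}=\pm1$. Since $m\neq0$, I choose $\phi=\pi/(2m)$, which gives $e^{-im\phi}=-i$, and $\mathrm{diag}(1,-i)$ is proportional to neither $I$ nor $Z$. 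So the gate is not Clifford.

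As a consistency check, this also follows from Theorem~\ref{thm:hybrid-phase-space-classification}: a Clifford map would send the rotor phase $\phi$ to $A^{-\transpose}\phi$ with no qubit component. The conjugation above instead produces a continuous family of qubit operators in $\phi$, reflecting a nontrivial homomorphism $\FF_2^{2}\to\ZZ$, which cannot exist.

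The only point needing care is the rigidity claim that proportionality to $W(x)$ can be tested factor by factor. Both sides are elementary tensors, and the rotor factor $Z_R(\phi)$ is nonzero, so the qubit factors must be proportional. I do not expect any real obstacle here.
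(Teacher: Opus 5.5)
Your proposal is correct and follows the paper's proof essentially verbatim: conjugate $I_Q\otimes Z_R(\phi)$, obtain the qubit factor $|0\rangle\langle0|+e^{-im\phi}|1\rangle\langle1|$, and choose $\phi$ with $e^{-im\phi}\neq\pm1$ so that the result is not proportional to any hybrid Weyl operator. Your explicit choice $\phi=\pi/(2m)$ and the elementary-tensor argument for comparing the qubit factors only make explicit what the paper leaves implicit.
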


\begin{proof}
Conjugating the rotor phase operator gives
\begin{equation}
  \mathrm{CShift}_{q\to R}(m) (I_Q\otimes Z_R(\phi)) \mathrm{CShift}_{q\to R}(m)^\dagger = \left(
  |0\rangle\langle0| + e^{-im\phi}|1\rangle\langle1| \right) \otimes Z_R(\phi).
\end{equation}
If $m\neq0$, choose $\phi$ such that
$e^{-im\phi}\neq\pm1$. The qubit factor is then not proportional to a
Pauli operator, so the conjugated operator is not proportional to a
hybrid Weyl operator. Hence the gate is non-Clifford. For $m=0$, it is
the identity.
\end{proof}

\subsection{Summary of Clifford conditions}

Table~\ref{tab:summary-clifford-conditions} summarizes the Clifford
conditions for the parameterized gate families studied above. The standard
Clifford gates recalled in
Sec.~\ref{subsec:pure-rotor-mixed-clifford-gates} are omitted.

\begin{table}[htbp]
  \centering
  \small
  \renewcommand{\arraystretch}{1.2}
  \begin{tabularx}{\textwidth}{@{}L{0.32\textwidth}Y@{}}
    \toprule
    Gate family & Clifford condition \\
    \midrule
    $G_u$
      & $u(\ell)=e^{i\gamma+i\lambda^{\transpose}\ell+if_R(\ell)}$ for some
        $\gamma\in\TT$, $\lambda\in\TT^r$, and
        $R=R^{\transpose}\in\operatorname{Mat}_r(\TT)$. \\
    $\mathcal V\quad(r=1)$
      & $V_\ell=e^{i[\lambda\ell+\alpha\ell(\ell+1)/2]}
        P_c^{\,\ell}V_0$ for some $V_0\in\Cl_{n,0}$,
        $c\in\FF_2^{2n}$, and $\alpha,\lambda\in\TT$. \\
    $C_{c,j}(\varphi)$
      & $\varphi\equiv0$ or $\pi\pmod{2\pi}$. \\
    $V_f(t)$
      & $t=0$ or $f$ is constant. \\
    $\mathrm{PSR}_{c,f}(\tau)$
      & $\tau=0$, or $f(\theta)=f_0$ and
        $\tau f_0\in\frac{\pi}{4}\ZZ$. \\
    $\mathrm{CD}(m,\lambda)$
      & $m=0$ and $\lambda\in\frac{\pi}{2}\ZZ\subset\TT$. \\
    $\mathrm{CShift}_{q\to R}(m)$
      & $m=0$. \\
    \bottomrule
  \end{tabularx}
  \caption{Summary of the Clifford conditions for the parameterized gate
  families considered in this section.}
  \label{tab:summary-clifford-conditions}
\end{table}

\section{Universal control from Clifford and non-Clifford gates}
\label{sec:universal-control-clifford-non-clifford-gates}
Sec.~\ref{sec:gate-families} shows that the cosine potential and the
conditional phase at angle \(\pi/4\) are non-Clifford. Together with Clifford
operations acting separately on the two registers, these two controls give
universal control in the strong operator topology.

Define the group of Clifford operations acting separately on the two
registers by
\begin{equation}
  \Cl_{n,r}^{\mathrm{loc}} := \left\{ U_Q\otimes U_R: U_Q\in\Cl_{n,0},\ U_R\in\Cl_{0,r} \right\}.
\end{equation}

For $n,r\geq1$, let $c_1$ be the Pauli label satisfying $P_{c_1}=Z_1$,
and let $K_1(t)$ denote the cosine-potential gate in
Eq.~\eqref{eq:cosine-potential-gate} acting on the first rotor. Let
\begin{equation}
  G_{n,r} := \left\langle \Cl_{n,r}^{\mathrm{loc}}, K_1(t):t\in\mathbb R, C_{c_1,1}(\pi/4) \right\rangle .
  \label{eq:universal-control-resource-group}
\end{equation}

By Corollary~\ref{cor:conditional-phase-clifford-criterion},
$C_{c_1,1}(\pi/4)$ is non-Clifford. Since $\cos\theta$ is nonconstant,
Corollary~\ref{cor:real-angle-potential-clifford-criterion} shows that
$K_1(t)$ is non-Clifford for every $t\neq0$.

The fixed conditional-phase gate satisfies
\begin{equation}
  \bigl[C_{c_1,1}(\pi/4)\bigr]^4 = C_{c_1,1}(\pi) = \Gamma_{c_1,1}.
\end{equation}
Since $\Cl_{n,r}^{\mathrm{loc}}$ contains the hybrid Weyl operators,
Corollary~\ref{cor:hybrid-clifford-generators} implies that
$\Cl_{n,r}^{\mathrm{loc}}$ together with
$C_{c_1,1}(\pi)$ generates the full hybrid Clifford group.
Lemma~\ref{lem:conditional-phase-qubit-t} further shows that
$C_{c_1,1}(\pi/4)$ yields a qubit $T$ gate.

\subsection{Strong operator topology and finite momentum windows}
\label{subsec:strong-topology-finite-window}

Throughout this section, $\overline{G}^{\,\mathrm{s}}$ denotes the closure
of $G$ in $\UU(\bH)$ with respect to the strong operator topology.

\begin{definition}[Universal control in the strong operator topology]
\label{def:universal-control-strong-topology}
A subgroup $G\subseteq\UU(\bH)$ provides universal control in the strong
operator topology if $\overline{G}^{\,\mathrm{s}}=\UU(\bH).$ Thus,
for every target unitary $V\in\UU(\bH)$, every finite set of
vectors $\psi_1,\ldots,\psi_s\in\bH$, and every $\epsilon>0$, there exists
$U\in G$ such that
\begin{equation}
  \|(U-V)\psi_j\|<\epsilon,
  \qquad
  j=1,\ldots,s.
\end{equation}
\end{definition}

Lemma~\ref{lem:finite-window-strong-density} reduces strong density on the full
Hilbert space to approximate control on each finite momentum window.

\begin{lemma}[Finite-window criterion for strong density]
\label{lem:finite-window-strong-density}
Let $\bH$ be separable, and let $\{P_L\}$ be an increasing sequence of
finite-rank orthogonal projections satisfying
$
  P_L\longrightarrow I
$ strongly.
Set $\bH_L=P_L\bH$. Suppose that for every $L$, every
$V_L\in\UU(\bH_L)$, and every $\eta>0$, there exists
$W\in\overline{G}^{\,\mathrm{s}}$ such that $W$ preserves $\bH_L$ and
\begin{equation}
  \left\|W|_{\bH_L}-V_L\right\|<\eta.
\end{equation}
Then $G$ is strongly dense in $\UU(\bH)$.
\end{lemma}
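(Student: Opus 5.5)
Since $\overline{G}^{\,\mathrm{s}}$ is a strongly closed subgroup of $\UU(\bH)$, it suffices to show that every $V\in\UU(\bH)$ lies in $\overline{G}^{\,\mathrm{s}}$. By the description of the strong topology in Definition~\ref{def:universal-control-strong-topology}, we fix finitely many vectors $\psi_1,\ldots,\psi_s$ and $\epsilon>0$, and produce $U\in\overline{G}^{\,\mathrm{s}}$ with $\|(U-V)\psi_j\|<\epsilon$ for all $j$. Since $\overline{G}^{\,\mathrm{s}}$ is closed, this places $V$ in it. We may normalize so that $\|\psi_j\|\le1$.

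\emph{Choice of window.} Because $P_L\to I$ strongly and $V$ is unitary, we can choose $L$ so large that
\[
\|(I-P_L)\psi_j\|<\delta \quad\text{and}\quad \|(I-P_L)V\psi_j\|<\delta
\]
for all $j$, with $\delta$ to be fixed in terms of $\epsilon$. Set $\phi_j=P_L\psi_j$ and $\chi_j=P_LV\psi_j$. Then $\|V\phi_j-\chi_j\|<2\delta$. The subspace $\mathcal S=\operatorname{span}\{\phi_j\}\subset\bH_L$ carries the linear map $T:\phi_j\mapsto P_L V\phi_j$, which is a compression of a unitary and hence a contraction. The main obstacle is that $T$ itself need not extend to a unitary of $\bH_L$, since the compressions of $V$ are not isometric.

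\emph{Key step: building $V_L$.} We handle the obstacle by enlarging the window. Choose $L'\ge L$ so that $\|(I-P_{L'})V\phi\|\le\delta\|\phi\|$ uniformly for $\phi$ in the finite-dimensional space $\mathcal S$; this is possible because the unit ball of $\mathcal S$ is compact. The map $\phi\mapsto P_{L'}V\phi$ on $\mathcal S$ is then nearly isometric. Its polar part $Y$ is an isometry $\mathcal S\to\bH_{L'}$ with $\|Y-P_{L'}V|_{\mathcal S}\|=O(\delta)$. Since $\bH_{L'}$ is finite-dimensional, $Y$ extends to a unitary $V_{L'}\in\UU(\bH_{L'})$. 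Each $\phi_j$ lies in $\bH_L\subseteq\bH_{L'}$, so this gives $\|V_{L'}\phi_j-V\psi_j\|=O(\delta)$. An alternative route, usable if $\dim\bH_L\ge2\dim\mathcal S$, is to map an orthonormal basis of $\mathcal S$ to an orthonormalization of $\{P_LV\phi_j\}$ by Gram--Schmidt. We expect the enlarged-window argument to be the cleaner one.

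\emph{Conclusion.} Apply the hypothesis at level $L'$ with target $V_{L'}$ and tolerance $\eta=\delta$. This yields $W\in\overline{G}^{\,\mathrm{s}}$ that preserves $\bH_{L'}$ and satisfies $\|W|_{\bH_{L'}}-V_{L'}\|<\delta$. Since $W$ is unitary, we have $\|W(I-P_L)\psi_j\|<\delta$, and therefore
\[
\|W\psi_j-V\psi_j\|\le\|W\phi_j-V_{L'}\phi_j\|+\|V_{L'}\phi_j-V\psi_j\|+\delta=O(\delta).
\]
Choosing $\delta$ small relative to $\epsilon$ gives $\|(W-V)\psi_j\|<\epsilon$. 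Thus $V\in\overline{\overline{G}^{\,\mathrm{s}}}^{\,\mathrm{s}}=\overline{G}^{\,\mathrm{s}}$. Separability is used only implicitly, through the existence of the exhausting sequence $\{P_L\}$.
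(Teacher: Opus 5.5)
Your proof is correct and follows essentially the same route as the paper's (Lemma~\ref{lem:finite-window-interpolation} followed by the triangle-inequality estimate). Both arguments polar-decompose a nearly isometric truncation of $V$ on a finite-dimensional subspace, extend the resulting isometry to a unitary of the window, and then combine the hypothesis with the tail bounds. The only difference is bookkeeping: the paper uses a single window, chosen so that $P_L$ is nearly isometric on both $\operatorname{span}\{\psi_j\}$ and $V\operatorname{span}\{\psi_j\}$, and polar-decomposes the map $P_Lx\mapsto P_LVx$; you instead pass to a second window $\bH_{L'}$ on which $P_{L'}V$ is uniformly nearly isometric on $\operatorname{span}\{P_L\psi_j\}$.
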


Appendix~\ref{app:proofs-universal-control} proves
Lemma~\ref{lem:finite-window-strong-density} by interpolating a target unitary
on a sufficiently large finite momentum window and controlling the projected
tails.
Fig.~\ref{fig:universal-control-roadmap} summarizes the proof strategy.

\begin{figure}[!htbp]
  \centering

  \resizebox{0.86\linewidth}{!}{%
  \begin{tikzpicture}[
    font=\small,
    panel/.style={
      draw,
      rounded corners=2pt,
      fill=black!2,
      line width=0.45pt,
      inner sep=0pt
    },
    arr/.style={
      -{Latex[length=1.8mm]},
      line width=0.45pt
    }
  ]

\node[
  panel,
  minimum width=68mm,
  minimum height=12mm
] at (0,6.30) {};

\node at (0,6.48) {
  $\mathrm{Cl}^{\mathrm{loc}}_{n,r},
   \qquad
   K_1(t),
   \qquad
   C_{\pi/4}$
};

\node at (0,6.08) {
  generating set $G_{n,r}$
};

\node[
  panel,
  minimum width=60mm,
  minimum height=40mm
] at (-5.50,3.20) {};

\node[font=\small\bfseries]
  at (-5.50,4.78)
  {rotor-local control};

\node[font=\scriptsize]
  at (-5.50,4.35)
  {$H_{\mathrm b}
    =
    \hat{\ell}^{\,2}
    +
    \xi\hat{\ell},
    \qquad
    \xi\in\mathbb{R}\setminus\mathbb{Z}$};

\node[font=\scriptsize]
  at (-5.50,3.92)
  {$M_{\cos}=\cos\hat{\theta},
    \qquad
    K(t)=e^{-itM_{\cos}}$};

\node[font=\scriptsize]
  at (-5.50,3.46)
  {$M_{\cos}$ couples adjacent momentum states};

\node[font=\small]
  at (-7.15,2.95)
  {$\lvert\ell-1\rangle$};

\node[font=\small]
  at (-5.50,2.95)
  {$\lvert\ell\rangle$};

\node[font=\small]
  at (-3.85,2.95)
  {$\lvert\ell+1\rangle$};

\draw[
  <->,
  line width=0.45pt,
  >=Latex
]
  (-6.58,2.95)
  --
  (-6.02,2.95);

\draw[
  <->,
  line width=0.45pt,
  >=Latex
]
  (-4.98,2.95)
  --
  (-4.42,2.95);

\node[font=\scriptsize]
  at (-5.50,2.42)
  {nonresonant connected chain};

\node[font=\small]
  at (-5.50,1.72)
  {$\overline{G_R}^{\,s}
    =
    \mathrm{U}\!\bigl(L^2(\mathbb{T})\bigr)$};

\node[
  panel,
  minimum width=46mm,
  minimum height=35mm
] at (0,3.25) {};

\node[font=\small\bfseries]
  at (0,4.45)
  {qubit-local control};

\node[font=\scriptsize]
  at (0,3.92)
  {the commutator of $C_{\pi/4}$ and $X_R(1)$};

\node[font=\scriptsize]
  at (0,3.53)
  {yields $T^\dagger\otimes I_R$};

\node[font=\scriptsize]
  at (0,2.94)
  {single-qubit Clifford gates and $T$};

\node[font=\scriptsize]
  at (0,2.57)
  {are norm dense in $\mathrm{U}(\mathbb{C}^2)$};

\node[font=\scriptsize]
  at (0,2.05)
  {$\overline{
      \langle\mathrm{Cl}_{1,0},T\rangle
    }^{\lVert\cdot\rVert}
    =
    \mathrm{U}(\mathbb{C}^2)$};

\node[
  panel,
  minimum width=40mm,
  minimum height=27mm
] at (5.00,3.42) {};

\node[font=\small\bfseries]
  at (5.00,4.25)
  {mixed Clifford gate $C_\pi$};

\node[font=\small]
  at (5.00,3.63)
  {$C_\pi=(C_{\pi/4})^4$};

\node[font=\scriptsize,align=center]
  at (5.00,2.91)
  {entangling on every\\
   finite window};

\node[
  panel,
  minimum width=78mm,
  minimum height=14mm
] at (0,-0.05) {};

\node at (0,0.12) {
  $\overline{G_{1,1}}^{\,s}
   =
   \mathrm{U}\!\left(
     \mathbb{C}^2\otimes L^2(\mathbb{T})
   \right)$
};

\node at (0,-0.32) {
  universal control of one qubit and one rotor
};

\node[
  panel,
  minimum width=86mm,
  minimum height=15mm
] at (0,-4.25) {};

\node at (0,-4.05) {
  $\overline{G_{n,r}}^{\,s}
   =
   \mathrm{U}(\mathcal{H}_{n,r}),
   \qquad
   n,r\geq1$
};

\node at (0,-4.50) {
  universal control of $n$-qubit, $r$-rotor registers
};

\draw[arr]
  (-2.10,5.70)
  --
  (-4.30,5.20);

\draw[arr]
  (0,5.70)
  --
  (0,5.02);

\draw[arr]
  (2.10,5.70)
  --
  (4.35,4.80);

\draw[arr]
  (-4.65,1.20)
  --
  (-2.70,0.68);

\draw[arr]
  (0,1.48)
  --
  (0,0.68);

\draw[arr]
  (5.00,2.05)
  --
  (2.70,0.68);

\draw[arr]
  (0,-0.76)
  --
  (0,-3.48);

\node[
  font=\scriptsize,
  align=right,
  anchor=east
] at (4.7,-1.71) {
  permutations distribute local control\\
  CPHS couples distinct rotor modes
};

\node[
  font=\scriptsize,
  align=left,
  anchor=west
] at (0.12,-2.74) {
  finite-window norm density\\
  $\Longrightarrow$ full-space strong density
};

  \end{tikzpicture}%
  }

  \caption{Proof structure for universal control. A biased rotor drift and
  cosine control yield universal control of a single rotor, while a commutator
  of $C_{\pi/4}$ with $X_R(1)$ produces $T^\dagger\otimes I_R$ and hence
  single-qubit control. Combining these local controls with the hybrid
  entangler $C_\pi=(C_{\pi/4})^4$ gives universal control of one qubit and one
  rotor. Permutations and CPHS couplings extend the construction to general
  hybrid registers, while finite-window norm density implies full-space
  strong density.}

  \label{fig:universal-control-roadmap}
\end{figure}

\FloatBarrier
\subsection{Single-rotor universal control}

Let $K(t)$ be the cosine-potential gate defined in
Eq.~\eqref{eq:cosine-potential-gate}, and set
\begin{equation}
  H_{\mathrm b} = \hat\ell^2+\xi\hat\ell, \quad \xi\in\mathbb R\setminus\ZZ.
  \label{eq:biased-rotor-control-hamiltonian}
\end{equation}
The drift evolution is a pure-rotor Clifford gate because
\begin{equation}
  e^{-isH_{\mathrm b}}
  =
  \mathrm{QUAD}(-2s)Z_R((1-\xi)s).
\end{equation}
In the momentum basis,
$
  H_{\mathrm b}|\ell\rangle
  =
  \varepsilon_\ell|\ell\rangle,
  \varepsilon_\ell=\ell^2+\xi\ell,
$
and the nearest-neighbor transition gaps are
\begin{equation}
  \varepsilon_{\ell+1}-\varepsilon_\ell
  =
  2\ell+1+\xi.
\end{equation}
Moreover,
$
  \langle k|\cos\hat\theta|\ell\rangle
  =
  \frac{1}{2}
  \left(
    \delta_{k,\ell+1}
    +
    \delta_{k,\ell-1}
  \right),
$
so the control couples only neighboring momentum levels. Since
$\xi\notin\ZZ$, the spectrum is nondegenerate and the corresponding
transition frequencies
$
  |\varepsilon_{\ell+1}-\varepsilon_\ell|
$
are pairwise distinct. The nearest-neighbor transitions connect the full
momentum lattice and therefore form a nonresonant connectedness chain.
Theorem~2.11 of
Ref.~\cite{boscain2012spectral} consequently applies. A related earlier
result on approximate state controllability under a stronger spectral
assumption is given in Ref.~\cite{chambrion2009discretespectrum}.

\begin{theorem}[Single-rotor universal control]
\label{thm:single-rotor-universal-control}
The group $
  G_R
  :=
  \left\langle
    e^{-isH_{\mathrm b}},
    K(t):
    s,t\in\mathbb R
  \right\rangle
$
is strongly dense in $\UU(L^2(\TT))$.
\end{theorem}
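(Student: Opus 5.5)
The plan is to invoke Theorem~2.11 of Ref.~\cite{boscain2012spectral} for the bilinear system $H_{\mathrm b}+u(t)M_{\cos}$ with piecewise-constant controls. That result gives approximate controllability in the strong sense on a finite momentum window. We then upgrade it to strong density through Lemma~\ref{lem:finite-window-strong-density}, taking $P_L$ to be the projection onto $\operatorname{span}\{|\ell\rangle:|\ell|\le L\}$.

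First I check the hypotheses of the cited theorem. The drift $H_{\mathrm b}$ is self-adjoint with discrete, nondegenerate spectrum $\varepsilon_\ell=\ell^2+\xi\ell$ and eigenbasis $\{|\ell\rangle\}$. The control $M_{\cos}$ is bounded. For every $\ell$ and every $\ell'$, $M_{\cos}$ is continuous from $e^{-iH_{\mathrm b}}$-invariant domains, and the coupling $\langle\ell+1|\cos\hat\theta|\ell\rangle=1/2\neq0$.

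The chain of edges $(\ell,\ell+1)$ is connected. Nonresonance requires that for each edge $(\ell,\ell+1)$, no other pair $(j,k)$ with nonzero coupling has the same gap $|\varepsilon_j-\varepsilon_k|$. Since $M_{\cos}$ only couples nearest neighbors, only the gaps $|2\ell+1+\xi|$ matter. If $2\ell+1+\xi=\pm(2\ell'+1+\xi)$, then either $\ell=\ell'$ or $\xi=-(\ell+\ell'+1)\in\ZZ$, which is excluded. So the chain is nonresonant.

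Next I need the form of the cited conclusion. In Theorem~2.11 of Ref.~\cite{boscain2012spectral}, for every $L$, every $V_L\in\UU(\bH_L)$ and every $\eta>0$ there is a piecewise-constant control whose propagator $\Upsilon$ satisfies $\|\langle\ell|\Upsilon|k\rangle-\langle\ell|V_L|k\rangle\|$ small for $|\ell|,|k|\le L$. This is simultaneous approximate controllability of finitely many states (tracking up to phases). The propagator of $H_{\mathrm b}+uM_{\cos}$ with constant $u$ is not literally a word in $e^{-isH_{\mathrm b}}$ and $K(t)$. However, by Trotter's product formula,
\[
e^{-i\tau(H_{\mathrm b}+uM_{\cos})}=\operatorname{s-lim}_{N\to\infty}\bigl(e^{-i\tau H_{\mathrm b}/N}K(u\tau/N)\bigr)^N,
\]
since $M_{\cos}$ is bounded. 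Hence every such propagator lies in $\overline{G_R}^{\,\mathrm{s}}$.

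It remains to meet the exact hypothesis of Lemma~\ref{lem:finite-window-strong-density}, namely an element $W\in\overline{G_R}^{\,\mathrm{s}}$ that \emph{preserves} $\bH_L$. The propagator $\Upsilon$ maps $\bH_L$ only approximately into $\bH_L$. I would handle this either by checking the lemma's appendix proof, which likely needs only the approximate version $\|(W-V_L P_L)P_L\|<\eta$, or by applying the controllability statement on a slightly larger window and then compressing. If the lemma genuinely demands invariance, I would instead prove strong density directly. Take a target $V\in\UU(L^2(\TT))$, finitely many vectors, and $\epsilon>0$. Truncate the vectors to $\bH_L$ with tail error $<\epsilon$. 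Choose a finite-window unitary $V_{L'}$ on a larger window $\bH_{L'}$ that agrees approximately with $V$ on these truncated vectors. Then apply the cited result to approximate $V_{L'}$ on $\bH_{L'}$, and use the triangle inequality. This bookkeeping is the main obstacle: reconciling the phase/approximate-invariance form of the spectral controllability theorem with the precise window hypothesis. The Trotter step and the nonresonance check are routine.
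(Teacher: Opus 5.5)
Your argument is essentially the paper's: verify the hypotheses of Theorem~2.11 of Ref.~\cite{boscain2012spectral} (nondegenerate spectrum $\varepsilon_\ell=\ell^2+\xi\ell$, a nonresonant nearest-neighbour chain, a bounded control), then use the Lie--Trotter formula to place the piecewise-constant propagators in $\overline{G_R}^{\,\mathrm{s}}$. The window bookkeeping you call the main obstacle is not needed, because the cited conclusion (approximate simultaneous controllability, i.e.\ approximating an arbitrary unitary on an arbitrary finite set of vectors) is exactly Definition~\ref{def:universal-control-strong-topology}; the phase point you leave in parentheses is the one that needs a remark, and it is settled because $\varepsilon_\ell$ is unbounded, which is the case used in the phase-adjustment step \cite[Lemma~6.3]{boscain2012spectral}, as the paper notes.
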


\begin{proof}
Appendix~\ref{app:proofs-universal-control} verifies the hypotheses of
Theorem~2.11 in Ref.~\cite{boscain2012spectral}. For a constant control
amplitude $u$ applied for time $\tau$, the Lie--Trotter formula gives
\begin{equation}
  e^{-i\tau(H_{\mathrm b}+u\cos\hat\theta)} = \operatorname{s\!-\!lim}\limits_{N\to\infty} \left[ e^{-i\tau H_{\mathrm b}/N} K\!\left(\frac{\tau u}{N}\right) \right]^N.
\end{equation}
Hence every evolution generated by a piecewise-constant control belongs
to $\overline{G_R}^{\,\mathrm{s}}$. The cited theorem states that these
evolutions can approximate any target unitary simultaneously on any finite
set of states. By
Definition~\ref{def:universal-control-strong-topology}, this is precisely
the strong density of $G_R$ in $\UU(L^2(\TT))$.
\end{proof}

\subsection{One qubit and one rotor}

In this subsection, write $C_\varphi:=C_{c_1,1}(\varphi).$

\begin{lemma}[A qubit \texorpdfstring{$T$}{T} gate from the conditional phase]
\label{lem:conditional-phase-qubit-t}
The conditional-phase gate and the unit rotor momentum shift satisfy
\begin{equation}
  (I_Q\otimes X_R(1)) C_\varphi (I_Q\otimes X_R(1)^\dagger) C_\varphi^\dagger = e^{-i\varphi|1\rangle\langle1|}
  \otimes I_R.
  \label{eq:conditional-phase-qubit-t}
\end{equation}
For $\varphi=\pi/4$, the right-hand side is
$T^\dagger\otimes I_R$, where $T=\operatorname{diag}(1,e^{i\pi/4}).$
\end{lemma}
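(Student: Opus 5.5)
The plan is a direct computation in the joint basis $|u,\ell\rangle$, where $u\in\FF_2$ is the qubit label and $\ell\in\ZZ$ the momentum of rotor $1$. Since $P_{c_1}=Z_1$, we have $(I-Z_1)/2=|1\rangle\langle1|$. Hence, from Eq.~\eqref{eq:momentum-dependent-conditional-phase}, the gate $C_\varphi=\exp[i\varphi|1\rangle\langle1|\otimes\hat\ell_1]$ is diagonal in this basis:
\begin{equation*}
  C_\varphi|u,\ell\rangle=e^{i\varphi u\ell}|u,\ell\rangle .
\end{equation*}
The boundedness of $|1\rangle\langle1|$ causes no domain issues, because $C_\varphi$ is defined spectrally and acts diagonally. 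I would therefore verify Eq.~\eqref{eq:conditional-phase-qubit-t} on each basis vector and extend by linearity and continuity.

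First, conjugation by the shift gives
\begin{equation*}
  X_R(1)\,\hat\ell\,X_R(1)^\dagger=\hat\ell-1 .
\end{equation*}
This holds because $X_R(1)^\dagger|\ell\rangle=|\ell-1\rangle$, after which the phase $e^{i\varphi u(\ell-1)}$ is applied and the state is shifted back to $|\ell\rangle$. Consequently
\begin{equation*}
  (I_Q\otimes X_R(1))\,C_\varphi\,(I_Q\otimes X_R(1)^\dagger)|u,\ell\rangle=e^{i\varphi u(\ell-1)}|u,\ell\rangle .
\end{equation*}

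Second, multiplying by $C_\varphi^\dagger$, which contributes the phase $e^{-i\varphi u\ell}$ (everything is diagonal, so the order does not matter), leaves the phase $e^{-i\varphi u}$. This phase is independent of $\ell$, so the operator equals $e^{-i\varphi|1\rangle\langle1|}\otimes I_R$.

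Finally, for $\varphi=\pi/4$ the result is $\operatorname{diag}(1,e^{-i\pi/4})=T^\dagger$. The argument has no real obstacle. The only point needing care is the sign and ordering of the shift conjugation, which I would check directly on a single basis vector as above.
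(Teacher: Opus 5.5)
Your proposal is correct and takes essentially the same route as the paper: the paper uses the identity $X_R(1)\hat\ell X_R(1)^\dagger=\hat\ell-I_R$ to rewrite the conjugated gate as $(e^{-i\varphi|1\rangle\langle1|}\otimes I_R)C_\varphi$, then multiplies by $C_\varphi^\dagger$. Your basis-vector computation is that operator argument written out componentwise. One small wording point: the domain concern you dismiss comes from the unboundedness of $\hat\ell$, not the boundedness of $|1\rangle\langle1|$, and it is harmless because $C_\varphi$ is a diagonal unitary.
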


\begin{proof}
Since
$
  X_R(1)\hat\ell X_R(1)^\dagger
  =
  \hat\ell-I_R,
$
conjugation by the momentum shift gives
$
  (I_Q\otimes X_R(1))
  C_\varphi
  (I_Q\otimes X_R(1)^\dagger)
  =
  \left(
    e^{-i\varphi|1\rangle\langle1|}
    \otimes I_R
  \right)
  C_\varphi.
$
Multiplication by $C_\varphi^\dagger$ proves
Eq.~\eqref{eq:conditional-phase-qubit-t}.
\end{proof}

The single-qubit Clifford group together with $T$ is dense, up to a global
phase, in the single-qubit unitary group
\cite{boykin1999faulttolerant,nielsen2000computation}.
Since scalar phases are contained in $\Cl_{1,0}$, the gates available here
generate a norm-dense subgroup of $\UU(\CC^2)$.

\begin{theorem}[One-qubit--one-rotor universal control]
The group $G_{1,1}$ defined in
Eq.~\eqref{eq:universal-control-resource-group} is strongly dense in $\UU\bigl(\CC^2\otimes L^2(\TT)\bigr).$
\end{theorem}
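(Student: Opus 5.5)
The plan is to apply Lemma~\ref{lem:finite-window-strong-density} with the finite momentum windows $P_L$, the projection onto $\CC^2\otimes\operatorname{span}\{|\ell\rangle:|\ell|\le L\}$. We first show that $\overline{G_{1,1}}^{\,\mathrm s}$ contains two large local subgroups. The first is $\UU(\CC^2)\otimes I_R$, obtained from Lemma~\ref{lem:conditional-phase-qubit-t}, single-qubit Cliffords, and norm density of $\langle\Cl_{1,0},T\rangle$; norm closure lies inside the strong closure. The second is $I_Q\otimes\UU(L^2(\TT))$. For this, $K_1(t)\in G_{1,1}$, and $e^{-isH_{\mathrm b}}=\mathrm{QUAD}(-2s)Z_R((1-\xi)s)$ is a pure-rotor Clifford, hence lies in $\Cl^{\mathrm{loc}}_{1,1}$. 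Theorem~\ref{thm:single-rotor-universal-control} then gives $I_Q\otimes G_R\subseteq G_{1,1}$, and since $U\mapsto I_Q\otimes U$ is strongly continuous, $I_Q\otimes\UU(L^2(\TT))\subseteq\overline{G_{1,1}}^{\,\mathrm s}$. We also have $C_\pi=(C_{\pi/4})^4$ and, more usefully, $C_{\pi/4}$ itself.

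Next we fix $L$ and work on $\bH_L=\CC^2\otimes\CC^{2L+1}$. We consider the closed subgroup $\mathcal G_L\subseteq\UU(\bH_L)$ of restrictions of elements $W\in\overline{G_{1,1}}^{\,\mathrm s}$ that preserve $\bH_L$. It contains $\UU(\CC^2)\otimes I$ and $I\otimes\UU(2L+1)$: the latter comes from rotor unitaries that act as an arbitrary $V$ on the window and trivially on its complement, and such unitaries lie in $\UU(L^2(\TT))$. It also contains $C_{\pi/4}|_{\bH_L}$, because $C_{\pi/4}$ is diagonal in momentum and so preserves $\bH_L$; likewise $C_\pi$. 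On a finite-dimensional space, the group generated by all local unitaries on $\CC^2\otimes\CC^{d}$ together with any entangling gate is dense in $\UU(2d)$, by the Brylinski--Brylinski theorem on universality of an imprimitive two-qudit gate. The remaining check is that $C_\pi|_{\bH_L}=\sum_\ell Z^{\ell}\otimes|\ell\rangle\langle\ell|$ is not a product of local unitaries and not a local unitary composed with a swap. The swap alternative is impossible for unequal dimensions, and for $d=2$ the operator is visibly a CZ-type gate. Non-product-ness follows because the operator maps $|+\rangle\otimes(|0\rangle+|1\rangle)$ to an entangled state. Since $\mathcal G_L$ is closed, density gives $\mathcal G_L=\UU(\bH_L)$, which is exactly the hypothesis of Lemma~\ref{lem:finite-window-strong-density}. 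The lemma then yields strong density of $G_{1,1}$.

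The main obstacle is legitimately transferring the strong closures into finite-window statements. Elements of $\overline{G_R}^{\,\mathrm s}$ that preserve the window exactly exist because the closure is all of $\UU(L^2(\TT))$, so no approximation of window invariance is needed. We must also make sure that $\mathcal G_L$ is closed under products and limits. Products are fine because the closure is a group and invariance of $\bH_L$ is preserved under products. For limits, strong convergence restricted to a finite-dimensional invariant subspace is norm convergence, but the limit of window-preserving operators preserves $\bH_L$ only if it is unitary with $\bH_L$-invariant restriction; we can sidestep this by approximating the target with finite words in the generated group. If one prefers to avoid citing Brylinski, an alternative route is direct. Conjugate $C_{\pi/4}$ by local rotor permutations that move a level $\ell$ and its neighbors. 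Combined with local unitaries, these produce a controlled phase supported on any two-level pair $\{|1,\ell\rangle\}$. Together with full control of the two qubit sectors, this connects the two sectors, and the standard Lie-algebra generation of $\mathfrak{u}(2d)$ from $\mathfrak u(2)\oplus\mathfrak u(d)$ plus one nonlocal diagonal generator follows.
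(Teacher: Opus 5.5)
Your proposal is correct and follows essentially the same route as the paper: qubit-local and rotor-local control in the strong closure, window-preserving extensions of rotor unitaries, the entangling gate $C_\pi=(C_{\pi/4})^4$ witnessed on $|+\rangle\otimes(|e\rangle+|o\rangle)/\sqrt2$, finite-dimensional bipartite universality, and Lemma~\ref{lem:finite-window-strong-density}. The differences are minor. The paper cites Harrow's bipartite result (Corollary~2) on $\CC^2\otimes\mathcal W_L$, whereas the Brylinski theorem you invoke is stated for equal-dimension qudits and would have to be adapted. Your concern about closedness of $\mathcal G_L$ is also unnecessary: the lemma's hypothesis only asks for approximation, so density of $\mathcal G_L$ in $\UU(\bH_L)$ already suffices.
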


\begin{proof}
The group $G_{1,1}$ contains the single-rotor gates used in
Theorem~\ref{thm:single-rotor-universal-control}. Hence its strong closure
contains $I_Q\otimes U_R$ for every $U_R\in\UU(L^2(\TT)).$
By Lemma~\ref{lem:conditional-phase-qubit-t} and the single-qubit Clifford
gates, the same closure contains $U_Q\otimes I_R$ for every $U_Q\in\UU(\CC^2).$

For $L\geq1$, let
\begin{equation}
  \mathcal W_L = \operatorname{span} \left\{ |\ell\rangle: -L\leq\ell\leq L \right\}.
\end{equation}
Every unitary on $\mathcal W_L$, extended by the identity on its orthogonal
complement, is a rotor-local unitary and therefore belongs to
$\overline{G_{1,1}}^{\,\mathrm{s}}$. Such an extension preserves
$\mathcal W_L$.

Moreover,
\begin{equation}
  C_\pi = \bigl[C_{\pi/4}\bigr]^4 = |0\rangle\langle0|\otimes I_R + |1\rangle\langle1| \otimes(-1)^{\hat\ell}
\end{equation}
also preserves every momentum window. It is entangling on
$\CC^2\otimes \mathcal W_L$. Indeed, choose an even momentum state
$|e\rangle$ and an odd momentum state $|o\rangle$ in the window. Then
\begin{equation}
  C_\pi \left[ |+\rangle \otimes \frac{|e\rangle+|o\rangle}{\sqrt2} \right] = \frac{1}{2} \left[
  |0\rangle(|e\rangle+|o\rangle) + |1\rangle(|e\rangle-|o\rangle) \right],
  \label{eq:Cpi-entangling-action}
\end{equation}
which is an entangled state.

Arbitrary local unitaries together with an entangling bipartite gate generate
a norm-dense subgroup of $\UU\bigl(\CC^2\otimes \mathcal W_L\bigr)$
\cite[Corollary~2]{harrow2009universality}.
Thus every unitary on
$\CC^2\otimes \mathcal W_L$ can be approximated by elements of
$\overline{G_{1,1}}^{\,\mathrm{s}}$ that preserve this finite window.

The projections onto
$\CC^2\otimes \mathcal W_L$ increase strongly to the identity. Therefore
Lemma~\ref{lem:finite-window-strong-density} implies that
$G_{1,1}$ is strongly dense in
$\UU(\CC^2\otimes L^2(\TT))$.
\end{proof}

\subsection{General hybrid universal control}

\begin{theorem}[General hybrid universal control]
\label{thm:general-hybrid-universal-control}
For every $n,r\geq1$, $\overline{G_{n,r}}^{\,\mathrm{s}} = \UU(\bH_{n,r}),$
where $G_{n,r}$ is defined in Eq.~\eqref{eq:universal-control-resource-group}.
\end{theorem}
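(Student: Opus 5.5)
The plan is to reduce to finite windows and apply Lemma~\ref{lem:finite-window-strong-density}. For $L\geq1$, let $\mathcal W_L^{(r)}$ be the span of $|\ell\rangle$ with $\ell\in\{-L,\dots,L\}^r$, and take $P_L$ to be the projection onto $(\CC^2)^{\otimes n}\otimes\mathcal W_L^{(r)}$. It then suffices to show that every unitary on this finite space can be approximated in norm by elements of $\overline{G_{n,r}}^{\,\mathrm s}$ that preserve it. The approach is to assemble three ingredients that preserve the window and then invoke the Brylinski / Harrow-type universality result \cite[Corollary~2]{harrow2009universality} for multipartite systems. That result states: arbitrary local unitaries on each tensor factor, together with an entangling gate between each pair of factors in a connected coupling graph, generate a dense subgroup.

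\emph{Local control on every factor.} First I would transport the one-qubit--one-rotor result to every qubit and rotor. Qubit SWAPs lie in $\Cl_{n,0}$. Rotor SWAPs are momentum-lattice permutations $U_A$, so they lie in $\Cl_{0,r}$. Conjugating by these permutations moves $K_1(t)$ and $C_{c_1,1}(\pi/4)$ to any rotor $k$ and qubit $j$. By Theorem~\ref{thm:single-rotor-universal-control}, whose drift $e^{-isH_{\mathrm b}}$ is a rotor Clifford and therefore lies in $\Cl^{\mathrm{loc}}_{n,r}$, the strong closure contains $I\otimes U_{R_k}$ for every unitary $U_{R_k}$ on rotor $k$. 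In particular it contains every unitary on the window $\{-L,\dots,L\}$ of rotor $k$, extended by the identity. By Lemma~\ref{lem:conditional-phase-qubit-t}, each qubit obtains $T$. Combined with the qubit Cliffords, which already entangle the qubits (CNOT), this yields the full $\UU((\CC^2)^{\otimes n})$ by the standard Clifford$+T$ universality.

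\emph{Entanglers across factors.} Between qubit $1$ and rotor $1$ I would use $C_\pi=\Gamma_{c_1,1}$. It preserves windows and is entangling there, as in Eq.~\eqref{eq:Cpi-entangling-action}. Between rotors $i$ and $j$ I would use $\mathrm{CPHS}_{ij}(\alpha)$ with, say, $\alpha=\pi$; it is a rotor Clifford in $\Cl_{0,r}$ and is diagonal, so it preserves the product window. To check that it is entangling on $\mathcal W_L\otimes\mathcal W_L$, apply it to $(|0\rangle+|1\rangle)\otimes(|0\rangle+|1\rangle)$. This produces a state with a relative sign $-1$ on $|1,1\rangle$, which has Schmidt rank $2$. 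The qubits are already mutually entangled by Clifford gates. The coupling graph (qubits among themselves, qubit $1$ to rotor $1$, rotors pairwise) is therefore connected.

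\emph{Main obstacle.} The delicate point is that the window-restricted operators must be genuine elements of the strong closure that preserve $\bH_L$. Rotor-local targets have to be realized as "window unitary $\oplus$ identity" via Theorem~\ref{thm:single-rotor-universal-control}. This is legitimate because such an operator is a unitary on $L^2(\TT)$, so it lies in the strong closure. The Harrow universality theorem then applies to the finite-dimensional space $(\CC^2)^{\otimes n}\otimes(\CC^{2L+1})^{\otimes r}$, with local dimensions of at least $2$. Once that is in place, Lemma~\ref{lem:finite-window-strong-density} gives $\overline{G_{n,r}}^{\,\mathrm s}=\UU(\bH_{n,r})$. Alternatively, one could avoid the multipartite form of the citation. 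Instead, induct on the factors by applying the bipartite statement to (previously assembled block)$\otimes$(new factor), using the entangler that links the new factor to the block.
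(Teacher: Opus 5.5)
Your proposal is correct and follows essentially the same route as the paper. Both proofs rest on the same ingredients:
\begin{itemize}
\item exact window-preserving single-rotor unitaries from rotor permutations and Theorem~\ref{thm:single-rotor-universal-control};
\item the qubit register from Lemma~\ref{lem:conditional-phase-qubit-t} together with the qubit Cliffords;
\item $\mathrm{CPHS}$ and $C_{c_1,1}(\pi)$ as the window-preserving entanglers;
\item the bipartite criterion of \cite{harrow2009universality} applied iteratively, followed by Lemma~\ref{lem:finite-window-strong-density}.
\end{itemize}
Your inductive variant is exactly what the paper does: it first chains the rotors along a spanning tree and then couples across the qubit--rotor cut.
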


\begin{proof}
We first obtain arbitrary rotor operations on finite momentum windows.
Rotor permutations conjugate $K_1(t)$ to the corresponding cosine-potential
gate $K_j(t)$ on any rotor mode $j$. By
Theorem~\ref{thm:single-rotor-universal-control}, the strong closure therefore
contains every single-rotor unitary acting on any chosen mode.

For $L\geq1$, define the $r$-rotor momentum window
\begin{equation}
  \mathcal W_{L,r} = \operatorname{span} \left\{ |\ell_1,\ldots,\ell_r\rangle: -L\leq\ell_j\leq L \text{ for all }j
  \right\}.
\end{equation}
Every single-mode unitary on
$\operatorname{span}\{|\ell_j\rangle:-L\leq\ell_j\leq L\}$ can be
extended by the identity outside that window. Hence the restriction of
$\overline{G_{n,r}}^{\,\mathrm{s}}$ to $\mathcal W_{L,r}$ contains arbitrary
unitaries on each individual rotor mode.

The pure-rotor Clifford group contains the gates
$\mathrm{CPHS}_{ij}(\alpha)$. For
$\alpha\not\equiv0\pmod{2\pi}$, such a gate is entangling on the finite
window. Indeed, on the subspace spanned by $|0\rangle$ and $|1\rangle$
of each of the two modes, it maps
$
  \frac{|0\rangle+|1\rangle}{\sqrt2}
  \otimes
  \frac{|0\rangle+|1\rangle}{\sqrt2}
$
to a state whose coefficient matrix is
$
  \begin{pmatrix}
    1 & 1\\
    1 & e^{i\alpha}
  \end{pmatrix}.
$
Its determinant is $e^{i\alpha}-1\neq0$, so the resulting state is
entangled. Choose CPHS couplings between neighboring rotor modes. These
couplings form a connected interaction graph. Applying the bipartite
universality criterion of
Ref.~\cite[Corollary~2]{harrow2009universality} iteratively along a spanning
tree of this graph, together with arbitrary single-mode unitaries, gives a
norm-dense subgroup of $\UU(\mathcal W_{L,r})$. All these gates preserve the
finite momentum window.

We next consider the qubit register. By
Lemma~\ref{lem:conditional-phase-qubit-t}, the available gates produce a
$T$ gate on the first qubit. Qubit Clifford operations include qubit
permutations, so a $T$ gate is available on every qubit. The $n$-qubit
Clifford group together with these $T$ gates generates a norm-dense
subgroup of $\UU\bigl((\CC^2)^{\otimes n}\bigr).$

It remains to couple the two registers. Since
$C_{c_1,1}(\pi) = \bigl[C_{c_1,1}(\pi/4)\bigr]^4,$ the group $G_{n,r}$
contains the mixed Clifford gate $C_{c_1,1}(\pi)$.
This gate preserves every finite momentum window and is entangling across
the partition $(\CC^2)^{\otimes n} \otimes \mathcal W_{L,r}.$

For example, fix all qubits except the first and all rotors except the first,
and choose one even and one odd momentum state of the first rotor. The
restriction then contains the entangling action exhibited in
Eq.~\eqref{eq:Cpi-entangling-action}.

We have therefore obtained arbitrary unitaries on the qubit register,
arbitrary unitaries on the finite rotor window, and an entangling gate
between the two registers. The finite-dimensional bipartite universality
criterion consequently gives a norm-dense subgroup of
$\UU\left((\CC^2)^{\otimes n}\otimes \mathcal W_{L,r} \right)$
for every $L$. These operations preserve the finite-dimensional subspace
$(\CC^2)^{\otimes n}\otimes\mathcal W_{L,r}$.

The projections onto
$(\CC^2)^{\otimes n}\otimes \mathcal W_{L,r}$ increase strongly to the
identity. Lemma~\ref{lem:finite-window-strong-density} therefore implies
$\overline{G_{n,r}}^{\,\mathrm{s}} = \UU(\bH_{n,r}).$
\end{proof}

\begin{remark}[Boundary cases]
If $r=0$ and $n\geq1$, the $n$-qubit Clifford group together with a
single-qubit $T$ gate generates a norm-dense subgroup of
$\UU((\CC^2)^{\otimes n})$. If $n=0$ and $r\geq1$, the argument above
restricted to the rotor register gives
$
  \overline{
    \left\langle
      \Cl_{0,r},
      K_1(t):t\in\mathbb R
    \right\rangle
  }^{\,\mathrm{s}}
  =
  \UU(L^2(\TT^r)).
$
The case $n=r=0$ is one-dimensional.
\end{remark}

\paragraph{Logical operations in the applications.}
Secs.~\ref{sec:gauge-model-gate-realization}--\ref{sec:qft-rotor-registers}
also use CShift and $\mathcal O_U$ as logical operations.
Theorem~\ref{thm:general-hybrid-universal-control} places
these operations in the strong closure of $G_{n,r}$.

\section{Gate realization of a compact \texorpdfstring{$U(1)$}{U(1)} gauge--matter model}
\label{sec:gauge-model-gate-realization}

In compact $U(1)$ gauge--matter theory, each oriented link carries a compact
gauge angle and an integer-valued electric flux, and each matter site carries
a binary occupation. Links are therefore represented by rotors and matter
sites by qubits. For a lattice with $N_s$
matter sites and $N_\ell$ oriented links, the corresponding hybrid register is
\begin{equation}
  \mathcal H_{\mathrm{GM}} = \bigl(\mathbb C^2\bigr)^{\otimes N_s} \otimes L^2(\mathbb T)^{\otimes N_\ell}.
\end{equation}
Gauss's law couples the two registers at the level of the physical state
space. The four Hamiltonian terms fall into the gate families introduced in
Sec.~\ref{sec:gate-families}: the electric term is generated by rotor momentum quadratics, the
magnetic term by a periodic angle potential, and the mass term by qubit-local
operations. The gauge-covariant hopping term changes matter occupation
together with an integer shift of the link flux, placing it in the
qubit-to-rotor control direction studied in
Sec.~\ref{subsec:qubit-controlled-rotor-displacements}.

We use the standard Hamiltonian formulation of compact \(U(1)\) lattice
gauge theory introduced by Kogut and Susskind
\cite{kogut1975wilson}. Its separation into matter degrees of freedom and
integer-flux gauge links is also the starting point of recent hybrid quantum
simulation proposals, including oscillator--qubit, qubit--qumode, and
superconducting phase--charge implementations
\cite{crane2024fermions,ale2026electrodynamics,
alcainecuervo2026compact}. Broader reviews of quantum simulation methods for
lattice gauge theories can be found in
Refs.~\cite{zohar2016lattice,banuls2020lattice}.

\subsection{Rotor links, matter qubits, and Gauss's law}

Let $e=(a,b)$ denote a link oriented from site $a$ to site $b$.  Its gauge
field is a rotor with
\begin{equation}
  U_e=e^{i\hat\theta_e}=X_R(1), \qquad E_e=\hat\ell_e, \qquad [E_e,U_{e'}]=\delta_{ee'}U_{e'}.
\end{equation}
Thus $U_e|\ell_e\rangle=|\ell_e+1\rangle$.  For the oriented plaquettes, let
$B=(B_{pe})$ be the boundary matrix, with $B_{pe}\in\{0,\pm1\}$, and set
\begin{equation}
  U_p=\prod_eU_e^{B_{pe}}=e^{i\Phi_p},
  \qquad
  \Phi_p=\sum_eB_{pe}\hat\theta_e.
\end{equation}

One qubit at site $n$ stores the occupation
\begin{equation}
  \hat n_n=\frac{I+Z_n}{2}.
  \label{eq:gauge-matter-occupation}
\end{equation}
With this Pauli convention, $|0\rangle$ is occupied and $|1\rangle$ is empty.
On a two-dimensional square lattice, let 
$
\eta_n=(-1)^{n_x+n_y}, b_n=\frac{1-\eta_n}{2}
$
and
\begin{equation}
  \rho_n=\hat n_n-b_n+q_n^{\mathrm{ext}},
  \qquad
  q_n^{\mathrm{ext}}\in\ZZ.
\end{equation}
Here $q_n^{\mathrm{ext}}$ is a fixed external charge.

Let $D\in\ZZ^{N_s\times N_\ell}$ be the oriented site--link incidence
matrix, with
\begin{equation}
  D_{ne}
  =
  \begin{cases}
    +1, & \text{if site $n$ is the tail of $e$},\\
    -1, & \text{if site $n$ is the head of $e$},\\
    0,  & \text{otherwise}.
  \end{cases}
\end{equation}
The Gauss-law generators are
\begin{equation}
  G_n=\sum_eD_{ne}E_e-\rho_n,
  \qquad
  \bH_{\mathrm{phys}}=\bigcap_n\ker G_n.
\end{equation}

Because each oriented link has one tail and one head,
$\mathbf 1^{\transpose}D=0$. Every physical state therefore satisfies the
global Gauss constraint
\begin{equation}
  \sum_n\left(\hat n_n-b_n+q_n^{\mathrm{ext}}\right)|\psi\rangle=0,
  \qquad
  |\psi\rangle\in\bH_{\mathrm{phys}}.
  \label{eq:gauge-global-gauss-constraint}
\end{equation}

The Gauss-law constraint couples the flux divergence on the rotor links
incident on site $n$ to the matter-qubit charge $\rho_n$. The physical
subspace consists of matter occupations and link-flux configurations
satisfying $G_n=0$ at every site.

Let $g>0$ denote the gauge coupling, $a_{\mathrm{lat}}>0$ the lattice
spacing, $m_0\in\mathbb R$ the staggered mass parameter, and
$\kappa\in\mathbb R$ the hopping amplitude. The gauge--matter Hamiltonian is
\begin{equation}
  H=H_E+H_B+H_M+H_K,
\end{equation}
where
\begin{equation}
  \begin{array}{r@{}l@{\qquad}r@{}l}
    H_E&=\frac{g^2}{2}\sum_eE_e^2,
    &
    H_B&=\frac{1}{g^2a_{\mathrm{lat}}^2}\sum_p(1-\cos\Phi_p),
  \end{array}
\end{equation}
\begin{equation}
  \begin{array}{r@{}l@{\qquad}r@{}l}
    H_M&=m_0\sum_n\eta_n\hat n_n,
    &
    H_K&=\sum_{e=(a,b)}H_{K,ab}.
  \end{array}
\end{equation}
Let $\Psi_n$ and $\Psi_n^\dagger$ denote the fermionic annihilation and
creation operators at site $n$. For a link $e=(a,b)$ oriented from site $a$
to site $b$, let $\zeta_{ab}=e^{i\delta_{ab}}$ denote the fixed hopping
phase. The gauge-covariant hopping term is
\begin{equation}
  H_{K,ab} =\frac{\kappa}{2}\left( \zeta_{ab}\Psi_a^\dagger U_{ab}\Psi_b
  +\zeta_{ab}^*\Psi_b^\dagger U_{ab}^\dagger\Psi_a \right).
\end{equation}
The direction-dependent phases in a Kogut--Susskind discretization are
particular choices of $\zeta_{ab}$
\cite{susskind1977lattice,crippa2026confinement}. 

Choose a snake ordering $s(n)$ of the matter sites.  The Jordan--Wigner
transformation~\cite{jordan1928paulische} introduces
\begin{equation}
  \Pi_m^{\mathrm{JW}}=(-1)^{\hat n_m}=-Z_m,
  \qquad
  S_{ab}=\prod_{m:\,\min(s(a),s(b))<s(m)<\max(s(a),s(b))}
  \Pi_m^{\mathrm{JW}},
\end{equation}
and gives
\begin{equation}
  H_{K,ab} =\frac{\kappa}{2}\left( \zeta_{ab}\sigma_a^+S_{ab}U_{ab}\sigma_b^-
  +\zeta_{ab}^*\sigma_b^+S_{ab}U_{ab}^\dagger\sigma_a^- \right).
  \label{eq:gauge-qubit-rotor-hopping}
\end{equation}
For adjacent sites in the chosen ordering, $S_{ab}=I$, so
Eq.~\eqref{eq:gauge-qubit-rotor-hopping} reduces to a two-qubit exchange
coupled to the link momentum shift $U_{ab}$. Expanding the ladder operators
gives the form used below:
\begin{equation}
  H_{K,ab}=\frac{\kappa}{4}\Big[{} \cos(\hat\theta_{ab}+\delta_{ab}) (X_aS_{ab}X_b+Y_aS_{ab}Y_b)
  +\sin(\hat\theta_{ab}+\delta_{ab}) (X_aS_{ab}Y_b-Y_aS_{ab}X_b) \Big].
  \label{eq:gauge-hopping-pauli-form}
\end{equation}

Appendix~\ref{app:gauge-invariance-pauli-hopping} verifies that each of
$H_E$, $H_B$, $H_M$, and $H_K$ commutes with every $G_n$ and derives the
Pauli decomposition in Eq.~\eqref{eq:gauge-hopping-pauli-form}. For each hopping term,
the changes in matter charge and electric flux cancel in the Gauss-law
generators at both endpoints.

\subsection{Gate realization of the Hamiltonian}

\paragraph{Electric term.}
The electric evolution on one link is a pure-rotor Clifford gate:
\begin{equation}
  e^{-i\Delta t\,g^2\hat\ell_e^2/2}
  =
  Z_{R,e}(g^2\Delta t/2)\,
  \mathrm{QUAD}_e(-g^2\Delta t).
\end{equation}
The Weyl phase $Z_{R,e}(g^2\Delta t/2)$ compensates for the
$\ell_e(\ell_e+1)/2$ convention in the definition of QUAD.

\paragraph{Magnetic term.}
For a plaquette $p$, the boundary vector
$b_p=(B_{pe})_e$ is a primitive integer vector. Choose
$A_p\in\GL(N_\ell,\ZZ)$ such that
$
  A_pb_p=e_1.
$
Here $K_1(t)$ denotes the cosine-potential gate in
Eq.~\eqref{eq:cosine-potential-gate} acting on the first rotor mode.
The plaquette evolution is
\begin{equation}
  e^{-i\Delta t(1-\cos\Phi_p)/(g^2a_{\mathrm{lat}}^2)} = e^{-i\Delta t/(g^2a_{\mathrm{lat}}^2)} U_{A_p}^\dagger
  K_1\!\left(-\frac{\Delta t}{g^2a_{\mathrm{lat}}^2}\right) U_{A_p}.
\end{equation}
Indeed, $U_{A_p}^\dagger X_R(e_1)U_{A_p}=X_R(b_p)$, so conjugation maps the
cosine acting on the first rotor mode to $\cos\Phi_p$. This realizes the
plaquette evolution using a rotor automorphism and the cosine-potential gate.
The plaquette boundary vector satisfies $Db_p=0$, the lattice identity that
the boundary of a boundary vanishes, so the evolution preserves every
Gauss-law constraint.

\paragraph{Mass term.}
Using $\hat n_n=(I+Z_n)/2$, the mass evolution is
\begin{equation}
  e^{-i\Delta tH_M} = \exp\!\left( -\frac{i\Delta t\,m_0}{2}\sum_n\eta_n \right) \prod_n
  e^{-i\Delta t\,m_0\eta_nZ_n/2}.
\end{equation}
Apart from the displayed global phase, this term requires only
qubit-local rotations.

\paragraph{Hopping term: controlled-shift realization.}
On the full rotor Hilbert space, define
\begin{equation}
  H_{ab}^{(q)}(\delta_{ab}) = \frac{\kappa}{2} \left( e^{i\delta_{ab}}\sigma_a^+S_{ab}\sigma_b^- +
  e^{-i\delta_{ab}}\sigma_b^+S_{ab}\sigma_a^- \right).
  \label{eq:gauge-qubit-exchange-hamiltonian}
\end{equation}
Consider either of the following controlled shifts:
\begin{equation}
  V_{ab}^{(a)} = \mathrm{CShift}_{a\to e}(-1), \qquad V_{ab}^{(b)} = \mathrm{CShift}_{b\to e}(1).
\end{equation}
Let $V_{ab}$ denote either $V_{ab}^{(a)}$ or $V_{ab}^{(b)}$. Then
$
  H_{K,ab}
  =
  V_{ab}
  H_{ab}^{(q)}(\delta_{ab})
  V_{ab}^\dagger,
$
and therefore
\begin{equation}
  e^{-i\tau H_{K,ab}} = V_{ab} e^{-i\tau H_{ab}^{(q)}(\delta_{ab})} V_{ab}^\dagger.
  \label{eq:gauge-full-space-hopping-conjugation}
\end{equation}

For $\delta_{ab}=0$, define
$
  P_1=X_aS_{ab}X_b,
  \qquad
  P_2=Y_aS_{ab}Y_b.
$
These Pauli strings commute, and
$
  H_{ab}^{(q)}(0)
  =
  \frac{\kappa}{4}(P_1+P_2).
$
Moreover, with
$
  R_a(\delta)=e^{i\delta Z_a/2},
$
we have
$
  H_{ab}^{(q)}(\delta)
  =
  R_a(\delta)
  H_{ab}^{(q)}(0)
  R_a(\delta)^\dagger.
$
Consequently,
\begin{equation}
  e^{-i\tau H_{ab}^{(q)}(\delta_{ab})} = R_a(\delta_{ab}) e^{-i\tau\kappa P_1/4} e^{-i\tau\kappa P_2/4}
  R_a(\delta_{ab})^\dagger.
  \label{eq:gauge-qubit-exchange-decomposition}
\end{equation}
Together, Eqs.~\eqref{eq:gauge-full-space-hopping-conjugation}
and~\eqref{eq:gauge-qubit-exchange-decomposition} implement the
gauge-invariant hopping evolution.
Appendix~\ref{app:gauge-exact-full-space-hopping-realization} derives the
controlled-shift conjugation identities and the commuting-Pauli decomposition
underlying this realization.

\paragraph{Hopping term: PSR realization.}
Eq.~\eqref{eq:gauge-hopping-pauli-form} writes $H_{K,ab}$ as a sum
of four Pauli-string--rotor angle-potential terms. The exponential of
each term is a PSR gate of the form defined in
Eq.~\eqref{eq:pauli-string-rotor-phase-gate}. The sine potentials can be
obtained from the cosine potential by a rotor phase translation.

The controlled-shift realization preserves the Gauss-law subspace.
Finite-order product formulas built from the four PSR factors in
Eq.~\eqref{eq:gauge-hopping-pauli-form} do not generally share this
invariance~\cite{suzuki1990fractal}.

\subsection{Finite-flux truncation}

Numerical calculations replace each link Hilbert space by
\begin{equation}
  \bH_{e,L} = \operatorname{span}\{|\ell\rangle:-L\leq\ell\leq L\}, \qquad \Pi_L =
  \sum_{\ell=-L}^{L}|\ell\rangle\langle\ell|.
  \label{eq:gauge-flux-projector}
\end{equation}
The projected shift
\[
  U_{e,L}
  :=
  \Pi_LU_e\Pi_L
\]
acts as the ordinary link shift on every nonboundary momentum state but
annihilates the upper-boundary state:
\[
  U_{e,L}|\ell\rangle
  =
  \begin{cases}
    |\ell+1\rangle, & -L\leq\ell<L,\\
    0, & \ell=L.
  \end{cases}
\]
Consequently,
\begin{equation}
  U_{e,L}^\dagger U_{e,L} = I_{e,L}-|L\rangle\langle L|, \qquad U_{e,L}U_{e,L}^\dagger =
  I_{e,L}-|-L\rangle\langle-L|.
  \label{eq:gauge-projected-shift-defect}
\end{equation}
Thus $U_{e,L}$ is a nonunitary hard-wall truncation in which transitions
leaving the selected electric-flux window are omitted.

The discrepancy from the full rotor shift is supported entirely at the
cutoff boundary. For a normalized state
$|\psi\rangle\in\bH_{e,L}$,
\begin{equation}
  \begin{aligned}
    \left\|
      (U_e-U_{e,L})|\psi\rangle
    \right\|^2
    &=
    |\langle L|\psi\rangle|^2,
    \\
    \left\|
      (U_e^\dagger-U_{e,L}^\dagger)|\psi\rangle
    \right\|^2
    &=
    |\langle-L|\psi\rangle|^2.
  \end{aligned}
  \label{eq:gauge-projected-shift-boundary-weight}
\end{equation}
Eq.~\eqref{eq:gauge-projected-shift-boundary-weight} identifies the
occupations of $|L\rangle$ and $|-L\rangle$ with the squared norm errors of
the projected forward and backward shifts, respectively.

Adding the boundary transition $|L\rangle\mapsto|-L\rangle$ restores
unitarity. Define
\begin{equation}
  \widetilde U_{e,L} = \sum_{\ell=-L}^{L-1} |\ell+1\rangle\langle\ell| + |-L\rangle\langle L|, \qquad d=2L+1.
  \label{eq:gauge-cyclic-completion}
\end{equation}
The operator $\widetilde U_{e,L}$ cyclically permutes the
$d=2L+1$ momentum states and is therefore unitary. For
\[
  E_{e,L}
  =
  \Pi_LE_e\Pi_L,
\]
direct action on the momentum basis gives
\begin{equation}
  [E_{e,L},\widetilde U_{e,L}] = \widetilde U_{e,L} - d|-L\rangle\langle L|.
  \label{eq:gauge-cyclic-commutator-defect}
\end{equation}
The operators $U_{e,L}$ and $\widetilde U_{e,L}$ agree on every
transition with $\ell<L$. Their only difference is the added wraparound
transition $|L\rangle\mapsto|-L\rangle$, which produces the rank-one
correction in
Eq.~\eqref{eq:gauge-cyclic-commutator-defect}.

The projected shift preserves the ordinary integer-flux translation away from
the boundary and is nonunitary; the cyclic completion is unitary. In the full
hopping term, a unit change
of the endpoint matter charges is accompanied by a unit change of the link
flux. The cyclic wraparound changes the link flux by $-2L$ while the endpoint
matter charges still change by one unit, so it violates the original
integer-valued Gauss's law at the cutoff boundary.

In the cutoff calculation below, we use the hard-wall projected
Hamiltonian: hopping and plaquette transitions whose final flux lies
outside the window are omitted. The resulting Hamiltonian remains
Hermitian and preserves the finite Gauss-law basis. Its agreement with the
full rotor model is controlled by the weight on $|E_e|=L$, as quantified by
Eq.~\eqref{eq:gauge-projected-shift-boundary-weight}.
Appendix~\ref{app:gauge-projected-and-cyclic-shifts} derives this
boundary-supported shift error and compares the projected shift with its
cyclic completion. We monitor this boundary
occupation together with the physical observables in
Sec.~\ref{subsec:finite-flux-benchmarks}.

\subsection{Finite-flux benchmarks}
\label{subsec:finite-flux-benchmarks}

Flux-cutoff dependence is computed by diagonalizing the Hamiltonian
in the joint kernel of the Gauss-law generators for an open $2\times1$
plaquette strip with six sites, seven links, and two plaquettes. The
$2\times1$ strip is the smallest open geometry containing two plaquettes and
therefore supports the connected correlator $C_{12}$. At $L=6$, the
corresponding physical basis has dimension $2350$.

\begin{figure}[!htbp]
  \centering
  \resizebox{0.70\linewidth}{!}{%
\begin{tikzpicture}[
  x=1cm,
  y=1cm,
  font=\small,
  qubit/.style={
    circle,
    minimum size=2.6mm,
    inner sep=0pt,
    fill=black,
    draw=black
  },
  link/.style={
    line width=0.75pt
  },
  orientation/.style={
    -{Latex[length=2.2mm,width=1.5mm]},
    line width=0.75pt
  },
  legendtext/.style={
    align=left
  }
]

\coordinate (b0) at (0,0);
\coordinate (b1) at (4,0);
\coordinate (b2) at (8,0);
\coordinate (t0) at (0,3.6);
\coordinate (t1) at (4,3.6);
\coordinate (t2) at (8,3.6);

\draw[link] (b0) -- (b1);
\draw[link] (b1) -- (b2);
\draw[link] (t0) -- (t1);
\draw[link] (t1) -- (t2);

\draw[link] (b0) -- (t0);
\draw[link] (b1) -- (t1);
\draw[link] (b2) -- (t2);

\draw[orientation] (1.70,0.00) -- (2.30,0.00);
\draw[orientation] (5.70,0.00) -- (6.30,0.00);
\draw[orientation] (1.70,3.60) -- (2.30,3.60);
\draw[orientation] (5.70,3.60) -- (6.30,3.60);

\draw[orientation] (0.00,1.50) -- (0.00,2.10);
\draw[orientation] (4.00,1.50) -- (4.00,2.10);
\draw[orientation] (8.00,1.50) -- (8.00,2.10);

\foreach \P in {b0,b1,b2,t0,t1,t2}
  \node[qubit] at (\P) {};

\node[below=2pt] at (b0) {$1$};
\node[below=2pt] at (b1) {$2$};
\node[below=2pt] at (b2) {$3$};
\node[above=2pt] at (t0) {$4$};
\node[above=2pt] at (t1) {$5$};
\node[above=2pt] at (t2) {$6$};

\node[below=7pt,fill=white,inner sep=1pt] at (2.00,0.00) {$e_1$};
\node[below=7pt,fill=white,inner sep=1pt] at (6.00,0.00) {$e_2$};
\node[above=7pt,fill=white,inner sep=1pt] at (2.00,3.60) {$e_3$};
\node[above=7pt,fill=white,inner sep=1pt] at (6.00,3.60) {$e_4$};
\node[left=7pt,fill=white,inner sep=1pt] at (0.00,1.80) {$e_5$};
\node[left=7pt,fill=white,inner sep=1pt] at (4.00,1.80) {$e_6$};
\node[left=7pt,fill=white,inner sep=1pt] at (8.00,1.80) {$e_7$};

\node at (2.00,1.80) {$p_1$};
\node at (6.00,1.80) {$p_2$};

\node[qubit] at (8.5,2.65) {};
\node[legendtext,anchor=west]
  at (8.88,2.64)
  {matter qubit};

\draw[link] (8.34,1.65) -- (9.24,1.65);
\draw[orientation] (9,1.65) -- (9.5,1.65);

\node[legendtext,anchor=west]
  at (9.61,1.7)
  {oriented rotor link};

\end{tikzpicture}%
  }
  \caption{Open $2\times1$ gauge--matter strip used in the finite-flux
  benchmarks. Matter qubits occupy the six sites, $U(1)$ rotors occupy the seven
  oriented links, and $p_1$ and $p_2$ denote the two plaquettes. Horizontal
  links are oriented to the right and vertical links upward, with the site and
  link ordering used in the calculation listed in
  Appendix~\ref{app:gauge-finite-gauss-law-basis-numerical-details}.}
  \label{fig:gauge-two-plaquette-geometry}
\end{figure}
\FloatBarrier

We set
$q_n^{\mathrm{ext}}=0$, $a_{\mathrm{lat}}=1$, $\kappa=1$, $m_0=1.5$, and
$\zeta_{ab}=1$. The flux cutoffs are $L=1,2,3,4,6$, and the scan uses
$55$ logarithmically spaced values of $g^{-2}$ from $10^{-2}$ to $10$.
Here $\sum_n b_n=3$, so Eq.~\eqref{eq:gauge-global-gauss-constraint} fixes
the total matter occupation to $\sum_n\hat n_n=3$.
Using the finite basis, matrix elements, and sparse diagonalization specified
in Appendix~\ref{app:gauge-finite-gauss-law-basis-numerical-details}, we
evaluate
\begin{equation}
  \overline W = \frac{1}{N_p}\sum_p\langle\cos\Phi_p\rangle, \qquad
  \Delta=\mathcal E_1-\mathcal E_0, \qquad
  \langle E^2\rangle_{\mathrm{link}} = \frac{1}{N_\ell}\sum_e\langle E_e^2\rangle.
\end{equation}

\begin{figure}[H]
  \centering
  \includegraphics[width=0.90\textwidth]{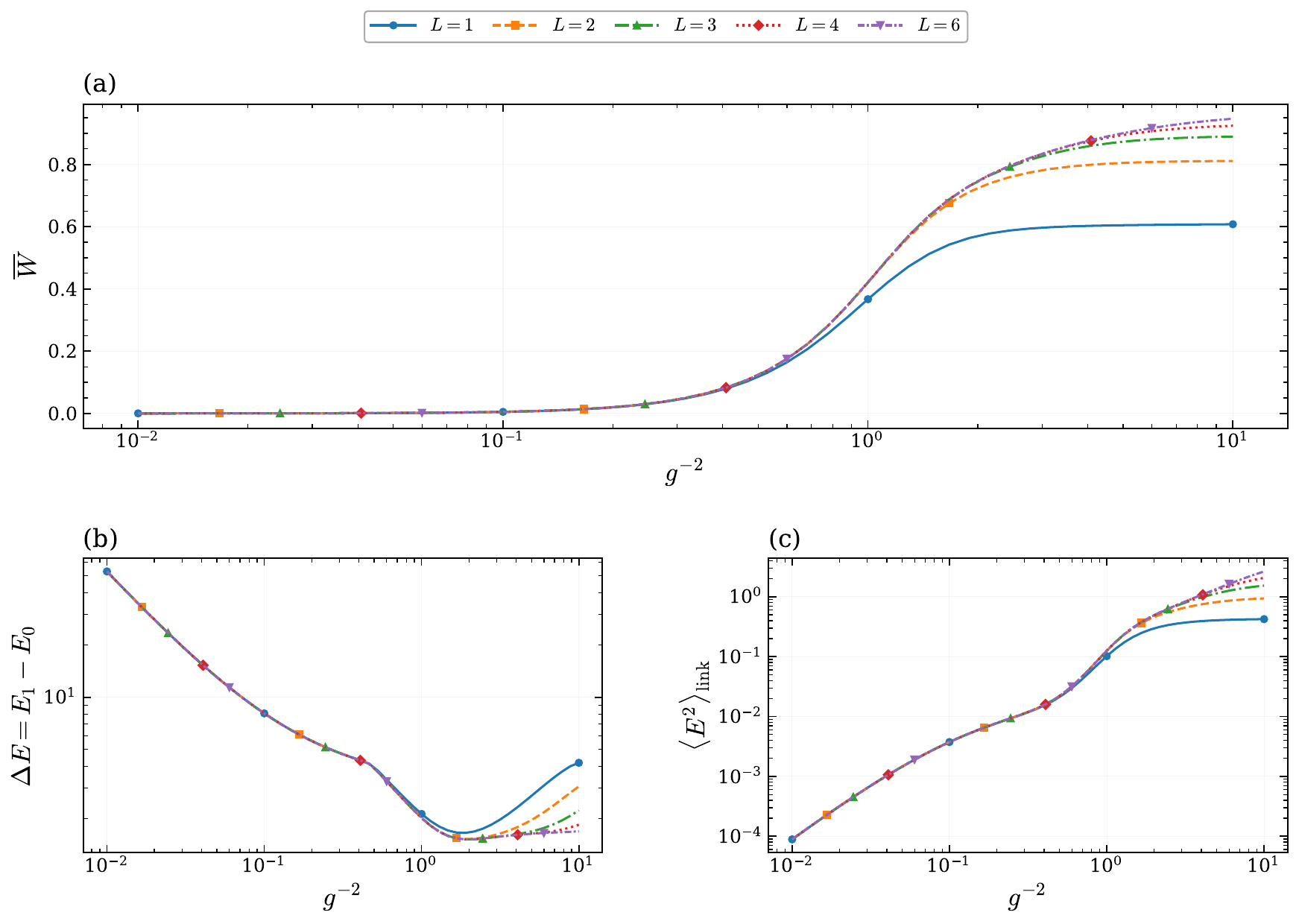}
  \caption{Ground-state results for the open $2\times1$ gauge--matter model.
(a) Average plaquette cosine $\overline W$.
(b) Spectral gap $\Delta=\mathcal E_1-\mathcal E_0$.
(c) Link-averaged squared electric flux
$\langle E^2\rangle_{\mathrm{link}}$.
The curves correspond to flux cutoffs $L=1,2,3,4,6$.
The vertical dotted line marks $g^{-2}=1$.}
  \label{fig:gauge-ground-state-cutoff}
\end{figure}
\vspace{-0.5em}

At small $g^{-2}$, the electric term suppresses nonzero link flux, and
$\overline W$ remains close to zero. As $g^{-2}$ increases, the plaquette
term becomes more important and the ground-state flux distribution broadens.
The spectral gap $\Delta$ develops a minimum near $g^{-2}\sim2$, in
the same crossover region where $\overline W$ rises rapidly and the
link-flux distribution broadens.
Over the sampled interval $g^{-2}\leq1$, the maximum absolute $L=4$ versus
$L=6$ differences in $\overline W$, $\Delta$, and
$\langle E^2\rangle_{\mathrm{link}}$ are $1.34\times10^{-8}$,
$4.69\times10^{-10}$, and $7.23\times10^{-9}$, respectively. The separation between the $L=4$ and $L=6$ curves increases toward $g^{-2}=10$ as more weight approaches the cutoff
boundary.

To probe the coupled qubit--rotor dynamics, we evolve the zero-flux
strong-coupling state
\begin{equation}
  |\Omega_{\mathrm{sc}}\rangle = |\mathbf o=\mathbf b;E=\mathbf0\rangle, \qquad |\psi_L(t)\rangle =
  e^{-itH_L}|\Omega_{\mathrm{sc}}\rangle.
\end{equation}
This state satisfies $DE=\mathbf o-\mathbf b=\mathbf0$. We take
$g^{-2}=1$, $m_0=1.5$, $\kappa=1$, $0\leq t\leq8$, and
$L=1,2,3,4$.

We monitor
\begin{equation}
  \begin{aligned}
    \overline W(t)
    &=
    \frac{1}{N_p}\sum_p\langle\cos\Phi_p\rangle_t,
    \\
    C_{12}(t)
    &=
    \langle\cos\Phi_1\cos\Phi_2\rangle_t
    -
    \langle\cos\Phi_1\rangle_t
    \langle\cos\Phi_2\rangle_t,
    \\
    I_M(t)
    &=
    -\sum_n\eta_n\langle\hat n_n\rangle_t,
    \\
    \overline p_{\partial L}(t)
    &=
    \frac{1}{N_\ell}
    \sum_e
    \left\langle
      \mathbf 1_{\{|E_e|=L\}}
    \right\rangle_t.
  \end{aligned}
\end{equation}

\begin{figure}[H]
  \centering
  \includegraphics[width=0.90\textwidth]
  {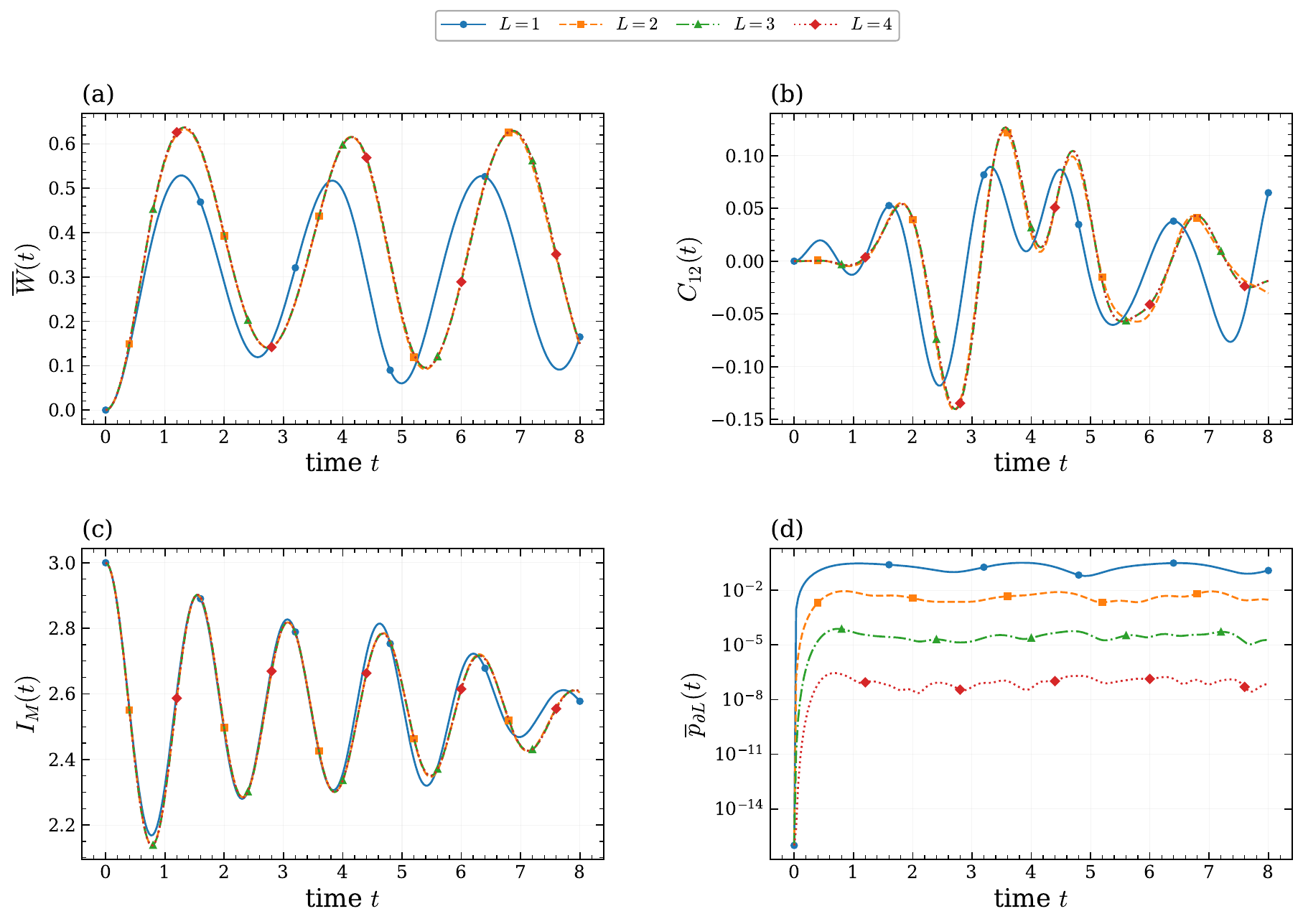}
  \caption{
  Gauge-invariant real-time dynamics of the open $2\times1$
  gauge--matter model following evolution from the zero-flux
  strong-coupling reference state. The parameters are
  $g^{-2}=1$, $m_0=1.5$, and $\kappa=1$.
  (a) Average plaquette cosine $\overline W(t)$.
  (b) Connected two-plaquette correlation $C_{12}(t)$.
  (c) Staggered matter imbalance $I_M(t)$.
  (d) Link-averaged cutoff-boundary occupation
  $\overline p_{\partial L}(t)$.
  The first two panels probe the rotor gauge observables, panel (c) probes
  the matter-qubit dynamics, and panel (d) diagnoses the finite-flux
  truncation.
  }
  \label{fig:gauge-matter-dynamics}
\end{figure}
\vspace{-0.5em}

Starting from the zero-flux state, the plaquette expectation
$\overline W(t)$ develops an oscillatory response, while the
connected correlation $C_{12}(t)$ becomes nonzero. The evolution
therefore generates both local magnetic coherence and correlations
between the two rotor plaquettes. At the same time, the matter imbalance
$I_M(t)$ departs from its initial value and undergoes oscillations. Since
the electric and magnetic terms act only on the rotor register and the
mass term is diagonal in the occupation basis, this redistribution of
matter occupation is generated by the gauge-covariant hopping term
$H_K$.

On the $241$ sampled times, the maximum $L=3$ versus $L=4$ differences in
$\overline W(t)$, $C_{12}(t)$, and $I_M(t)$ are $9.99\times10^{-5}$,
$2.28\times10^{-4}$, and $3.54\times10^{-5}$, respectively. The maximum
$\overline p_{\partial L}(t)$ decreases from $7.60\times10^{-5}$ at $L=3$
to $2.84\times10^{-7}$ at $L=4$.

%%\paragraph{Code and data availability.}
%The supplementary archive accompanying this manuscript contains \path{reproducibility/gauge_model}, including the source code, parameter file, processed data, metadata, software environment, and instructions used to generate Figs.~\ref{fig:gauge-ground-state-cutoff} and \ref{fig:gauge-matter-dynamics}.

\FloatBarrier

\clearpage
\section{Rotor quantum phase estimation and probe-state design}

In rotor quantum phase estimation, a single rotor serves as the phase
register for an \(n\)-qubit target. Let
$U\in\UU(\bH_{n,0})$ have an eigenstate satisfying
\begin{equation}
  U|\psi_\varphi\rangle
  =
  e^{i\varphi}|\psi_\varphi\rangle,
  \qquad
  |\psi_\varphi\rangle\in\bH_{n,0}.
\end{equation}
Standard QPE stores the phase in a finite register and applies an inverse
quantum Fourier transform before measurement
\cite{kitaev1996measurements,cleve1998algorithms,nielsen2000computation}.
With a rotor phase register and direct angle measurement, the inverse
transform can be omitted. The rotor momentum label
$\ell\in\ZZ$ indexes the controlled powers $U^{-\ell}$, and phase kickback
produces the character $e^{-i\ell\varphi}$. Under
$\ZZ\longleftrightarrow\TT$, this character translates the conjugate angle
distribution by $\varphi$, so the eigenphase can be read directly from an
angle measurement.

Bosonic modes have previously been used as phase registers
\cite{travaglione2002phase,liu2016qumode}, and the rotor version with
momentum-controlled powers and direct angle readout was proposed in
Ref.~\cite{kemper2025continuousdiscrete}. Here we take this rotor protocol as
the starting point and allow an arbitrary initial probe state. Its momentum
amplitudes determine the shape of the translated angle distribution, and
therefore control both the estimation error and the range of powers
\(U^{-\ell}\) required by the protocol.

This makes probe preparation the main resource question. At fixed momentum
support, the optimum is the cosine-window probe used below
\cite{buzek1999optimal,imai2009fourier}. At fixed mean kinetic energy, the
optimum is a Mathieu probe
\cite{hayashi2023special}. Sec.~\ref{subsec:qpe-probe-optimization-momentum-energy}
compares these probes with the uniform finite-momentum state and relates
their estimation error directly to the corresponding rotor resources.

\subsection{Phase kickback and direct-angle readout}

For a normalized rotor probe
\begin{equation}
  |\eta\rangle_R
  =
  \sum_{\ell\in\mathbb Z}a_\ell|\ell\rangle_R,
  \qquad
  \sum_{\ell\in\mathbb Z}|a_\ell|^2=1,
  \label{eq:qpe-general-rotor-probe}
\end{equation}
define the momentum-controlled power operation
\begin{equation}
  \mathcal O_U
  :=
  \sum_{\ell\in\mathbb Z}
  U^{-\ell}\otimes|\ell\rangle\langle\ell|.
  \label{eq:qpe-full-rotor-power-oracle}
\end{equation}
The sign convention is chosen so that, with the angle-state convention of
Sec.~\ref{subsec:hybrid-register-phase-space}, the measured angle
distribution is centered at $\theta=\varphi$.

Phase kickback gives
\begin{equation}
  \mathcal O_U
  \bigl(|\psi_\varphi\rangle\otimes|\eta\rangle_R\bigr)
  =
  |\psi_\varphi\rangle\otimes|\eta_\varphi\rangle_R,
  \qquad
  |\eta_\varphi\rangle_R
  =
  \sum_{\ell\in\mathbb Z}
  a_\ell e^{-i\ell\varphi}|\ell\rangle_R.
\end{equation}
The generalized angle states define the periodic angle measurement
\begin{equation}
  E_\theta(d\theta)
  =
  |\theta\rangle\langle\theta|\,d\theta,
  \qquad
  \int_0^{2\pi}E_\theta(d\theta)=I_R.
\end{equation}
Its probability density after phase kickback is
\begin{equation}
  p_\eta(\theta\mid\varphi)
  =
  \frac{1}{2\pi}
  \left|
    \sum_{\ell\in\mathbb Z}
    a_\ell e^{i\ell(\theta-\varphi)}
  \right|^2
  =
  p_\eta(\theta-\varphi\mid0).
  \label{eq:qpe-general-angle-translation}
\end{equation}
Eq.~\eqref{eq:qpe-general-angle-translation} shows that phase kickback
translates the angle profile without changing its shape. The amplitudes
$a_\ell$ determine the localization and tails of
the direct-angle distribution, and the largest occupied $|\ell|$ determines
the required range of controlled powers.

The uniform momentum superposition provides the baseline. Set $d=2^s$ and use
the finite rotor momentum code
\begin{equation}
  \mathcal M_d^{(0)}
  =
  \operatorname{span}
  \{|0\rangle_R,\ldots,|d-1\rangle_R\}.
\end{equation}
Finite rotor momentum encodings have also been considered in
Ref.~\cite{raynal2010rotor}.

The rotor analogue of the uniform qubit phase-register state is
\begin{equation}
  |+_d\rangle_R
  =
  \frac{1}{\sqrt d}
  \sum_{\ell=0}^{d-1}
  |\ell\rangle_R.
\end{equation}
Applying Eq.~\eqref{eq:qpe-full-rotor-power-oracle} gives
\begin{equation}
  \mathcal O_U
  \bigl(
    |\psi_\varphi\rangle\otimes|+_d\rangle_R
  \bigr)
  =
  |\psi_\varphi\rangle
  \otimes
  |\chi_d(\varphi)\rangle_R,
\end{equation}
where
\begin{equation}
  |\chi_d(\varphi)\rangle_R
  =
  \frac{1}{\sqrt d}
  \sum_{\ell=0}^{d-1}
  e^{-i\ell\varphi}
  |\ell\rangle_R.
\end{equation}
Indeed, the momentum component $|\ell\rangle_R$ applies $U^{-\ell}$ to the
target eigenstate and therefore acquires the phase $e^{-i\ell\varphi}$.

Using the angle--momentum Fourier relation in Eq.~\eqref{eq:rotor-angle-momentum-fourier-relation}, the angle-space amplitude after phase kickback is
\begin{equation}
  \langle\theta|\chi_d(\varphi)\rangle_R
  =
  \frac{1}{\sqrt{2\pi d}}
  \sum_{\ell=0}^{d-1}
  e^{i\ell(\theta-\varphi)}.
\end{equation}
The corresponding direct-angle probability density is
\begin{equation}
  p_d(\theta\mid\varphi)
  =
  \frac{1}{2\pi d}
  \left|
    \sum_{\ell=0}^{d-1}
    e^{i\ell(\theta-\varphi)}
  \right|^2
  =
  \frac{1}{2\pi d}
  \frac{
    \sin^2[d(\theta-\varphi)/2]
  }{
    \sin^2[(\theta-\varphi)/2]
  }.
  \label{eq:qpe-direct-angle-readout-distribution}
\end{equation}
Eq.~\eqref{eq:qpe-direct-angle-readout-distribution} is a shifted periodic
Fej\'er kernel. Its maximum is centered at $\theta=\varphi$, and its first
zeros occur at circular distance $2\pi/d$. As $d\to\infty$, the probability
measures $p_d(\theta\mid\varphi)d\theta$ converge weakly to the Dirac measure
  at $\varphi$. Appendix~\ref{app:qpe-fejer-kernel-weak-convergence-direct-angle-readout}
  gives the calculation.

The eigenphase can therefore be recovered by measuring the rotor angle
directly, without applying a coherent Fourier transform to the phase
register. The corresponding rotor protocol is shown in
Fig.~\ref{fig:rotor-qpe-direct-angle}.

\begin{figure}[ht]
  \centering
  \begin{quantikz}
    \lstick{$|+_d\rangle_R$}
      & \gate[2]{\mathcal O_U}
      & \meter{} \\
    \lstick{$|\psi_\varphi\rangle$}
      & \qwbundle{}
      &
  \end{quantikz}
  \caption{Rotor quantum phase estimation with direct angle readout.
  The momentum-controlled power operation translates the angle
  distribution of the rotor phase register by the target eigenphase
  $\varphi$, which is then read out by an angle measurement.}
  \label{fig:rotor-qpe-direct-angle}
\end{figure}

For the uniform probe above, Fourier processing followed by momentum
measurement recovers the standard discrete QPE distribution. Direct angle
measurement requires no such coherent Fourier stage.

\subsection{Probe optimization under momentum and energy constraints}
\label{subsec:qpe-probe-optimization-momentum-energy}

Two resource constraints determine the probe optimization: finite momentum
support and fixed mean rotor energy.

For the general probe in Eq.~\eqref{eq:qpe-general-rotor-probe}, define its
mean resultant length by
\begin{equation}
  R(\eta)
  :=
  \left|
    \langle\eta|X_R(1)|\eta\rangle
  \right|
  =
  \left|
    \sum_{\ell\in\mathbb Z}
    a_\ell a_{\ell+1}^*
  \right|,
\end{equation}
and use the circular variance
\begin{equation}
  V(\eta)
  :=
  1-R(\eta).
\end{equation}
The modulus makes $R(\eta)$ independent of the angular origin. After centering
the probe at $\theta=0$,
$V(\eta)=\int_0^{2\pi}[1-\cos(\theta-\varphi)]
p_\eta(\theta\mid\varphi)\,d\theta$.
For a narrowly localized distribution,
\begin{equation}
  1-\cos(\theta-\varphi)
  \simeq
  \frac{1}{2}(\theta-\varphi)^2,
\end{equation}
so the angular error scale is
\begin{equation}
  \Delta_\phi
  \simeq
  \sqrt{2V}.
\end{equation}
The Holevo variance $R(\eta)^{-2}-1$ and $V(\eta)=1-R(\eta)$ are both
strictly decreasing functions of $R(\eta)$, so they select the same optimal
probe states.

We also use the dimensionless rotor kinetic energy
\begin{equation}
  E_R(\eta)
  :=
  \langle\eta|\hat\ell^2|\eta\rangle
  =
  \sum_{\ell\in\mathbb Z}
  \ell^2|a_\ell|^2.
\end{equation}
For probes with finite momentum support, the largest occupied $|\ell|$
determines the required range of controlled powers; $E_R$ measures the rotor
kinetic-energy resource.

\paragraph{Uniform momentum superposition.}
For comparison, consider the centered uniform state
\begin{equation}
  |\eta_{\mathrm{unif}}^{(L)}\rangle_R
  =
  \frac{1}{\sqrt{2L+1}}
  \sum_{\ell=-L}^{L}|\ell\rangle_R.
  \label{eq:qpe-centered-uniform-probe}
\end{equation}

For Eq.~\eqref{eq:qpe-centered-uniform-probe},
\begin{equation}
  R_{\mathrm{unif}}
  =
  \frac{2L}{2L+1},
  \qquad
  V_{\mathrm{unif}}
  =
  \frac{1}{2L+1}.
\end{equation}
The corresponding energy is
\begin{equation}
  E_{\mathrm{unif}}
  =
  \frac{1}{2L+1}
  \sum_{\ell=-L}^{L}\ell^2
  =
  \frac{L(L+1)}{3}.
\end{equation}
Since
\begin{equation}
  2L+1
  =
  \sqrt{12E_{\mathrm{unif}}+1},
\end{equation}
the circular variance obeys
\begin{equation}
  V_{\mathrm{unif}}
  =
  \frac{1}{\sqrt{12E_{\mathrm{unif}}+1}}
  =
  O(E_{\mathrm{unif}}^{-1/2}).
  \label{eq:qpe-uniform-energy-scaling}
\end{equation}
The corresponding error scale is therefore
$\Delta_{\mathrm{unif}}=O(E_{\mathrm{unif}}^{-1/4})$.

This scaling reflects the slowly decaying sidelobes of the uniform
finite-momentum profile. Its central lobe still has angular width
$O(1/L)=O(E_{\mathrm{unif}}^{-1/2})$. The weaker scaling in
Eq.~\eqref{eq:qpe-uniform-energy-scaling} arises because the circular variance
is sensitive to probability away from the central peak.

\paragraph{Finite-support optimum.}

Among all normalized rotor states supported on $\{-L,\ldots,L\}$, the
cosine-window probe is optimal. This is the standard finite-support
sine-state optimizer of Refs.~\cite{buzek1999optimal,imai2009fourier},
written in centered momentum coordinates:
\begin{equation}
  |\eta_{\mathrm{cos}}^{(L)}\rangle_R
  =
  \frac{1}{\sqrt{L+1}}
  \sum_{\ell=-L}^{L}
  \cos\!\left(
    \frac{\pi\ell}{2(L+1)}
  \right)
  |\ell\rangle_R.
  \label{eq:qpe-cosine-window-probe}
\end{equation}
With $n=\ell+L$, the cosine coefficient becomes
$\sin[\pi(n+1)/(2L+2)]$.
Appendix~\ref{app:qpe-finite-support-probe-optimization} recovers its
finite-support optimality from the numerical radius of the compressed rotor
shift.

In particular,
\begin{equation}
  R_{\mathrm{cos}}
  =
  \cos\!\left(
    \frac{\pi}{2(L+1)}
  \right).
\end{equation}

Its circular variance is therefore
\begin{equation}
  V_{\mathrm{cos}}
  =
  1-
  \cos\!\left(
    \frac{\pi}{2(L+1)}
  \right)
  =
  \frac{\pi^2}{8(L+1)^2}
  +
  O(L^{-4}).
  \label{eq:qpe-cosine-circular-variance}
\end{equation}
The corresponding error scale is
\begin{equation}
  \Delta_{\cos}
  \sim
  \frac{\pi}{2(L+1)}.
\end{equation}

The kinetic energy of this state is
\begin{equation}
  E_{\mathrm{cos}}
  =
  \frac{2L^2+4L+3}{6}
  -
  \frac{1}{2}
  \csc^2\!\left(
    \frac{\pi}{2(L+1)}
  \right),
  \label{eq:qpe-cosine-energy-exact}
\end{equation}
with the large-$L$ behavior
\begin{equation}
  E_{\mathrm{cos}}
  =
  \left(
    \frac{1}{3}
    -
    \frac{2}{\pi^2}
  \right)(L+1)^2
  +
  O(L^{-2}).
  \label{eq:qpe-cosine-energy-asymptotic}
\end{equation}
Combining Eqs.~\eqref{eq:qpe-cosine-circular-variance} and
\eqref{eq:qpe-cosine-energy-asymptotic} gives
\begin{equation}
  V_{\mathrm{cos}}
  \sim
  \frac{\pi^2-6}{24E_{\mathrm{cos}}},
  \qquad
  \Delta_{\cos}
  \sim
  \sqrt{
    \frac{\pi^2-6}{12E_{\mathrm{cos}}}
  }.
  \label{eq:qpe-cosine-energy-scaling}
\end{equation}

The cosine taper suppresses the angular tails responsible for the larger
circular variance of the uniform probe. It achieves the
$E_{\mathrm{cos}}^{-1/2}$ error scaling while retaining strictly finite
momentum support.

\paragraph{Fixed-energy Mathieu optimum.}
The cosine-window probe is optimal when the momentum support is fixed.
A second optimization removes the support restriction and fixes the mean
rotor energy $E_R=\langle\hat\ell^2\rangle$. Under this constraint, the
minimum-circular-variance probe is a Mathieu state
\cite{hayashi2023special}.

For a prescribed energy $E_R$, consider
\begin{equation}
  V_\star(E_R)
  :=
  \min_{\substack{
    \langle\eta|\eta\rangle=1\\
    \langle\eta|\hat\ell^2|\eta\rangle=E_R
  }}
  \left[
    1-
    \left|
      \langle\eta|X_R(1)|\eta\rangle
    \right|
  \right].
  \label{eq:qpe-fixed-energy-optimization}
\end{equation}
After centering the angular distribution at $\theta=0$, the minimizing
state can be chosen even, and minimizing $V$ reduces to maximizing
$\langle\cos\hat\theta\rangle$ at fixed
$\langle\hat\ell^2\rangle$.

Introducing a Lagrange multiplier $\lambda>0$ for the energy constraint
leads to $H_\lambda:=\hat\ell^2-\lambda\cos\hat\theta$. Its ground state
minimizes the corresponding variational functional and satisfies
\begin{equation}
  \left(
    \hat\ell^2
    -
    \lambda\cos\hat\theta
  \right)
  |\eta_\lambda\rangle
  =
  \varepsilon_\lambda
  |\eta_\lambda\rangle.
  \label{eq:qpe-mathieu-hamiltonian}
\end{equation}
Choosing $\lambda$ so that
$\langle\eta_\lambda|\hat\ell^2|\eta_\lambda\rangle=E_R$ gives the constrained
optimum in Eq.~\eqref{eq:qpe-fixed-energy-optimization}.

In the angle representation, $\hat\ell=-i\,d/d\theta$. With
$z=\theta/2$, Eq.~\eqref{eq:qpe-mathieu-hamiltonian} takes the standard
Mathieu form
\begin{equation*}
  \frac{d^2\eta_\lambda}{dz^2}
  +
  \left(a_\lambda-2q_\lambda\cos 2z\right)\eta_\lambda
  =0,
  \qquad
  a_\lambda=4\varepsilon_\lambda,
  \qquad
  q_\lambda=-2\lambda.
\end{equation*}
The normalized even ground-state wavefunction is therefore
\begin{equation}
  \eta_\lambda(\theta)
  =
  \langle\theta|\eta_\lambda\rangle
  =
  \mathcal N_\lambda
  \operatorname{ce}_0
  \left(
    \frac{\theta}{2},
    -2\lambda
  \right).
\end{equation}

For large $E_R$, the optimal circular variance satisfies
\begin{equation}
  V_\star(E_R)
  \sim
  \frac{1}{8E_R},
  \label{eq:qpe-mathieu-large-energy}
\end{equation}
so the corresponding error scale obeys
\begin{equation}
  \Delta_\star
  \sim
  \frac{1}{2\sqrt{E_R}}.
  \label{eq:qpe-mathieu-error-scaling}
\end{equation}
The Mathieu probe therefore achieves the same
$E_R^{-1/2}$ error scaling as the cosine-window probe, with the
optimal prefactor under the fixed-energy constraint.
Appendix~\ref{app:qpe-fixed-energy-probe-asymptotics} derives
Eqs.~\eqref{eq:qpe-mathieu-large-energy}
and~\eqref{eq:qpe-mathieu-error-scaling} from the local quadratic approximation
to the Mathieu ground state.

The Hamiltonian in Eq.~\eqref{eq:qpe-mathieu-hamiltonian} contains the same
cosine potential used in the universal-control construction of Sec.~\ref{sec:universal-control-clifford-non-clifford-gates}.

\paragraph{Asymptotic comparison.}

Table~\ref{tab:qpe-probe-comparison} collects the three direct-angle probe
families and their resource constraints.

\begin{table}[H]
  \centering
  \begin{tabular}{lccc}
    \toprule
    Probe
    & Constraint
    & $V$
    & Error scale $\Delta_\phi$ \\
    \midrule
    Uniform momentum
    & $|\ell|\le L$
    & $\displaystyle
       \sim \frac{1}{\sqrt{12E_R}}$
    & $\displaystyle O(E_R^{-1/4})$ \\[1.5ex]
    Cosine-window
    & $|\ell|\le L$
    & $\displaystyle
       \sim \frac{\pi^2-6}{24E_R}$
    & $\displaystyle
       \sim
       \sqrt{\frac{\pi^2-6}{12E_R}}$ \\[1.5ex]
    Mathieu
    & $E_R$ fixed
    & $\displaystyle \sim\frac{1}{8E_R}$
    & $\displaystyle \sim\frac{1}{2\sqrt{E_R}}$ \\
    \bottomrule
  \end{tabular}
  \caption{Asymptotic comparison of direct-angle rotor probes. The cosine-window
  probe minimizes $V$ at fixed momentum support. The Mathieu probe minimizes
  $V$ at fixed mean rotor energy.}
  \label{tab:qpe-probe-comparison}
\end{table}

Both optimized probes have $E_R^{-1/2}$ error scaling under the circular
variance $V$, compared with $E_R^{-1/4}$ for the uniform momentum profile.
The Mathieu probe has the smaller leading variance prefactor at fixed energy,
$1/8<(\pi^2-6)/24$, and the cosine-window probe has strictly finite momentum
support.

\clearpage

\section{Quantum Fourier transforms with rotor registers}
\label{sec:qft-rotor-registers}
A finite rotor code can itself serve as a computational register. 
A \(d\)-dimensional logical register can be represented either by states
localized at \(d\) angles or by \(d\) classes of rotor momentum. We use
these two encodings to construct the same \(d^r\)-point Fourier transform in
two ways.

The first construction maps an angle-code input to a momentum-code output.
It follows the standard base-\(d\) Fourier factorization
\cite{nielsen2000computation,pavlidis2021fourier}, with one local
angle-to-momentum transform on each rotor and mixed angle--momentum phases
between different registers. The second construction stays within a finite
momentum code and compiles the same Fourier matrix using the gates introduced
earlier. In this construction, each cross-register base-\(d\) phase is
implemented by the quadratic rotor Clifford gate $\mathrm{CPHS}_{jk}(\alpha) =
e^{i\alpha\hat\ell_j\hat\ell_k}$.

The two constructions therefore differ mainly in the local Fourier step:
the first takes the angle-to-momentum code map as a logical operation, while
the second resolves the Fourier transform into momentum-code gates.

When the input or output is stored in qubits, a qubit--rotor interface can
be added before or after the Fourier circuit. The construction in
Sec.~\ref{subsec:qft-rotor-codes-qubit-rotor-interfaces} adapts the displacement-and-reset state-transfer method of
Ref.~\cite{bierman2026statetransfer}, and related mixed analog--digital Fourier
methods are developed in Ref.~\cite{liu2025mixed}.

\subsection{Rotor codes and qubit--rotor interfaces}
\label{subsec:qft-rotor-codes-qubit-rotor-interfaces}

The constructions use two complementary \(d\)-dimensional rotor codes. The
angle code represents \(x\in\{0,\ldots,d-1\}\) by a localized wavepacket centered
at an angular grid point, and the momentum code represents
\(p\in\{0,\ldots,d-1\}\) by a superposition over one residue class of the
integer rotor momentum.  The generalized angle states introduced in
Sec.~\ref{subsec:hybrid-register-phase-space} are not normalizable and
therefore do not themselves define codewords.

For the angle code, set
\begin{equation}
  \theta_x := \frac{2\pi x}{d}, \qquad x=0,\ldots,d-1 .
\end{equation}
We begin with the von Mises packets
\begin{equation}
  |x;\kappa\rangle := \mathcal N_\kappa \int_0^{2\pi} e^{\kappa\cos(\theta-\theta_x)} |\theta\rangle_R\,d\theta,
  \qquad \mathcal N_\kappa := \frac{1}{\sqrt{2\pi I_0(2\kappa)}} ,
\end{equation}
where \(\kappa>0\) controls the angular localization and \(I_\ell\) is the
modified Bessel function of the first kind.  These states are translations
of a single packet,
\begin{equation}
  |x;\kappa\rangle
  =
  Z_R(-\theta_x)|0;\kappa\rangle,
\end{equation}
and have the momentum expansion
\begin{equation}
  |x;\kappa\rangle = \sum_{\ell\in\mathbb Z} c_\ell^{(\kappa)} e^{-i\ell\theta_x} |\ell\rangle_R, \qquad
  c_\ell^{(\kappa)} := \frac{I_\ell(\kappa)} {\sqrt{I_0(2\kappa)}} .
\end{equation}

The von Mises packets are normalized and nonorthogonal.  Let
\begin{equation}
  G_{xy}^{(\kappa)} := \langle x;\kappa|y;\kappa\rangle = \frac{ I_0\!\left( 2\kappa
  \cos\!\frac{\theta_x-\theta_y}{2} \right) }{ I_0(2\kappa) } .
\end{equation}
The Gram matrix \(G^{(\kappa)}\) is circulant and positive definite.  We
define the orthonormal angle-code states by
\begin{equation}
  |\overline x\rangle := \sum_{y=0}^{d-1} |y;\kappa\rangle \bigl(G^{(\kappa)}\bigr)^{-1/2}_{yx}, \qquad
  x=0,\ldots,d-1 .
\end{equation}
They satisfy
\begin{equation}
  \langle\overline x|\overline y\rangle = \delta_{xy}, \qquad |\overline x\rangle =
  Z_R(-\theta_x)|\overline 0\rangle .
\end{equation}
We write
\begin{equation}
  \mathcal A_d := \operatorname{span} \bigl\{ |\overline x\rangle: 0\leq x<d \bigr\}.
\end{equation}
The off-diagonal overlaps obey
$ 
  \lVert G^{(\kappa)}-I\rVert_2
  \leq
  (d-1)\max_{x\neq y}|G_{xy}^{(\kappa)}|.
$

For the momentum code, choose a normalized sequence
\(f=(f_n)_{n\in\mathbb Z}\in\ell^2(\mathbb Z)\),
\begin{equation}
  \sum_{n\in\mathbb Z}|f_n|^2=1,
\end{equation}
and define
\begin{equation}
  |\overline p\rangle := \sum_{n\in\mathbb Z} f_n\,|p+nd\rangle_R, \qquad p=0,\ldots,d-1 .
\end{equation}
Thus \( |\overline p\rangle \) is supported on the momentum values satisfying
\begin{equation}
  \ell\equiv p\pmod d .
\end{equation}
Different values of \(p\) correspond to disjoint residue classes, and hence
\begin{equation}
  \langle\overline p|\overline q\rangle
  =
  \delta_{pq}.
\end{equation}
The momentum-code space is
\begin{equation}
  \mathcal M_d(f) := \operatorname{span} \bigl\{ |\overline p\rangle: 0\leq p<d \bigr\}.
\end{equation}
The projector onto the momentum code is
\begin{equation}
  \Pi_M := \sum_{p=0}^{d-1} |\overline p\rangle \langle\overline p|.
\end{equation}
For the CPHS realization and gate-set compilation below, we choose
\begin{equation}
  f_n=\delta_{n0}, \qquad |\overline p\rangle=|p\rangle_R,
  \qquad p=0,\ldots,d-1.
  \label{eq:gate-set-momentum-codewords}
\end{equation}
The corresponding code space is
\begin{equation}
  \mathcal M_d^{(0)}:=\mathcal M_d(\delta_0)
  =\operatorname{span}\{|0\rangle_R,\ldots,|d-1\rangle_R\}.
  \label{eq:gate-set-momentum-code}
\end{equation}

Set
\begin{equation}
  \omega_d:=e^{2\pi i/d}.
\end{equation}
The angle-to-momentum Fourier operation
\(F_d^{A\to M}\) is defined on the angle code by
\begin{equation}
  F_d^{A\to M} |\overline x\rangle = \frac{1}{\sqrt d} \sum_{p=0}^{d-1} \omega_d^{px}
  |\overline p\rangle, \qquad x=0,\ldots,d-1 .
\end{equation}
The states on each side form orthonormal bases, so
$F_d^{A\to M}$ is an active code-space unitary from \(\mathcal A_d\) to
\(\mathcal M_d(f)\); the Fourier relation between the angle and momentum
wavefunctions of a fixed rotor state is a change of representation. The
code-space map extends to a unitary on
\(L^2(\mathbb T)\).

The momentum-code Fourier transform is defined by
\begin{equation}
  F_d^M |\overline p\rangle = \frac{1}{\sqrt d} \sum_{q=0}^{d-1} \omega_d^{qp} |\overline q\rangle, \qquad
  p=0,\ldots,d-1 .
\end{equation}
The two operations use the same \(d\times d\) Fourier matrix.  Their domains
are
\begin{equation}
  F_d^{A\to M} : \mathcal A_d \longrightarrow \mathcal M_d(f), \qquad
  F_d^M : \mathcal M_d(f) \longrightarrow \mathcal M_d(f).
\end{equation}
The first operation is used in the angle--momentum construction.  For the
specialization $f=\delta_0$, the second operation is compiled from the hybrid gate set in
Sec.~\ref{subsec:qft-gate-set-compilation-rotor-momentum-codes}.

For \(r\) rotors, the input and output code spaces are
\begin{equation}
  \mathcal A_d^{\otimes r}
  \qquad\text{and}\qquad
  \mathcal M_d(f)^{\otimes r}.
\end{equation}

If the input amplitudes are initially stored in qubits, they may be
transferred coherently to the rotor codes before the Fourier operation.
For the angle code, one may adapt the displacement-and-reset construction
of Ref.~\cite{bierman2026statetransfer}: qubit-controlled angular
translations write the binary input label into the center of a localized
rotor packet, and rotor-conditioned qubit operations return the source
qubits to $|0\rangle$. The periodic rotor angle removes the padding
and anti-padding steps needed to isolate a finite cyclic region of a
noncompact oscillator.

The inverse sequence transfers the rotor-code amplitudes back to qubits
when a qubit output is required.

\subsection{Angle--momentum construction of the base-\texorpdfstring{$d$}{d} QFT}

The angle and momentum codes implement the base-\(d\) quantum Fourier
transform across \(r\) rotor registers. The
construction treats the active code-space unitary
\(F_d^{A\to M}\) and the mixed code-space phase as
logical operations. The explicit gate-set construction in
Sec.~\ref{subsec:qft-gate-set-compilation-rotor-momentum-codes}
resolves the momentum-code Fourier transform into the hybrid gates
introduced earlier.

To define the mixed phase, introduce the finite-code digit operators
\begin{equation}
  D_A := \sum_{x=0}^{d-1} x |\overline x\rangle \langle\overline x|, \qquad
  D_M := \sum_{p=0}^{d-1} p |\overline p\rangle \langle\overline p|.
\end{equation}
The first operator is defined on the angle code
\(\mathcal A_d\), and the second is defined on the momentum code
\(\mathcal M_d(f)\). These are finite-rank spectral operators on their
respective code spaces, extended by zero on the orthogonal complements.

For two distinct rotor registers \(R_j\) and \(R_k\), define the
mixed angle--momentum phase
\begin{equation}
  \Lambda_{jk}(\beta) := \exp\!\left[ i\beta\, D_{M,j} \otimes
  D_{A,k} \right].
\end{equation}
Its action on the relevant code spaces is
\begin{equation}
  \Lambda_{jk}(\beta) |\overline p\rangle_{R_j} |\overline x\rangle_{R_k} = e^{i\beta px}
  |\overline p\rangle_{R_j} |\overline x\rangle_{R_k}, \qquad 0\leq p,x<d.
\end{equation}
For a fixed \(j\), the operations
\(\Lambda_{jk}(\beta)\) with \(k>j\) commute.

Set \(N=d^r\). Write an input integer in base \(d\) as
\begin{equation}
  X(\mathbf x) := \sum_{k=1}^{r} x_kd^{r-k}, \qquad \mathbf x = \bigl( x_1,\ldots,x_r \bigr), \qquad
  0\leq x_k<d,
\end{equation}
with the most significant digit stored in \(R_1\). The corresponding
angle-code basis state is
\begin{equation}
  |\overline{\mathbf x}\rangle := \bigotimes_{k=1}^{r} |\overline{x_k}\rangle_{R_k} \in
  \mathcal A_d^{\otimes r}.
\end{equation}

For the momentum-code output, write
\begin{equation}
  P(\mathbf p) := \sum_{j=1}^{r} p_jd^{r-j}, \qquad \mathbf p = \bigl( p_1,\ldots,p_r \bigr), \qquad
  0\leq p_j<d,
\end{equation}
and define
\begin{equation}
  |\overline{\mathbf p}\rangle := \bigotimes_{j=1}^{r} |\overline{p_j}\rangle_{R_j} \in
  \mathcal M_d(f)^{\otimes r}.
\end{equation}

The angle-to-momentum \(N\)-point Fourier transform is defined by
\begin{equation}
  F_N^{A\to M} |\overline{\mathbf x}\rangle = \frac{1}{\sqrt{d^r}}
  \sum_{p_1,\ldots,p_r=0}^{d-1} \exp\!\left[ \frac{2\pi i}{d^r} X(\mathbf x)P(\mathbf p) \right]
  |\overline{\mathbf p}\rangle .
\end{equation}
Its matrix from
\(\mathcal A_d^{\otimes r}\) to
\(\mathcal M_d(f)^{\otimes r}\) is the standard positive-exponent
\(d^r\)-point discrete Fourier matrix.

The circuit processes the rotor registers in the order
\[
  R_1,R_2,\ldots,R_r.
\]
When register $R_j$ is processed, it still contains the angle-code state
\(\lvert\overline{x_j}\rangle\). We first apply the local transform
\(F_{d,j}^{A\to M}\), whose action is
\begin{equation}
  F_{d,j}^{A\to M} \lvert\overline{x_j}\rangle_{R_j} = \frac{1}{\sqrt d}
  \sum_{p_j=0}^{d-1} \exp\!\left( \frac{2\pi i}{d} p_jx_j \right)
  \lvert\overline{p_j}\rangle_{R_j}.
\end{equation}
The register \(R_j\) is now in the momentum code, and each register
\(R_k\) with \(k>j\) remains in the angle code. For every \(k>j\), we then
apply
\begin{equation}
  \Lambda_{jk}
  \left(
    \frac{2\pi}{d^{k-j+1}}
  \right).
\end{equation}
On the basis state
\(
\lvert\overline{p_j}\rangle_{R_j}
\lvert\overline{x_k}\rangle_{R_k}
\),
this operation contributes the phase
\begin{equation}
  \exp\!\left(
    \frac{2\pi i}{d^{k-j+1}}
    p_jx_k
  \right).
\end{equation}
The total phase associated with \(p_j\) after processing register
\(R_j\) is therefore
\begin{equation}
  \exp\!\left[ 2\pi i p_j \left( \frac{x_j}{d} + \frac{x_{j+1}}{d^2} + \cdots +
  \frac{x_r}{d^{r-j+1}} \right) \right].
\end{equation}

Define the \(j\)th register step by
\begin{equation}
  V_j := \left[ \prod_{k=j+1}^{r} \Lambda_{jk} \left( \frac{2\pi}{d^{k-j+1}} \right)
  \right] F_{d,j}^{A\to M}, \qquad 1\leq j\leq r.
\end{equation}
The product is the identity for \(j=r\). For fixed \(j\), all mixed
angle--momentum phase operations in the product commute.

The complete circuit is
\begin{equation}
  U_{\mathrm{AM}} := V_r V_{r-1} \cdots V_1,
\end{equation}
with \(V_1\) acting first.

The circuit produces the Fourier output digits in reverse register order.
We therefore define the register-reversal unitary
\(\mathrm{REV}_r\) by
\begin{equation}
  \mathrm{REV}_r \left( \lvert\overline{p_1}\rangle_{R_1} \otimes\cdots\otimes
  \lvert\overline{p_r}\rangle_{R_r} \right) := \lvert\overline{p_r}\rangle_{R_1} \otimes\cdots\otimes
  \lvert\overline{p_1}\rangle_{R_r}.
\end{equation}
This operation only reverses the order of the rotor registers. It may be
implemented physically by rotor SWAP gates or absorbed into the output
register convention.

With this output convention, the complete circuit acts on
\(\mathcal A_d^{\otimes r}\) as
\begin{equation}
  U_{\mathrm{AM}}
  =
  \mathrm{REV}_r
  F_N^{A\to M}.
  \label{eq:angle-momentum-base-d-qft}
\end{equation}
Thus the circuit implements the standard positive-exponent
\(d^r\)-point Fourier transform, with the output rotor registers appearing
in reverse order. Appendix~\ref{app:qft-angle-momentum-factorization} proves
this factorization by matching the accumulated digit phase to the base-$d$
Fourier kernel.

The factorization contains $r$ local angle-to-momentum transforms and
$\binom{r}{2}$ mixed code-space phases, up to output-register relabeling.

We next relate the two logical operations to rotor-space operations. The local
transform \(F_d^{A\to M}\) changes the momentum envelope and therefore requires
controls that modify momentum-amplitude magnitudes.

For the mixed phase \(\Lambda_{jk}\), a periodic multiplication operator
provides an approximate realization on localized angle codes, with the
code-space error and leakage bounds of
Appendix~\ref{app:qft-mixed-angle-momentum-realization}.

\Needspace{10\baselineskip}
For the specialization $f_n=\delta_{n0}$, the mixed phase also admits a
reduction to CPHS. Define the label-preserving code isometry
$T_d:\mathcal A_d\to\mathcal M_d^{(0)}$ by
\begin{equation}
  T_d|\overline x\rangle=|x\rangle_R, \qquad 0\leq x<d,
\end{equation}
and choose a unitary extension to $L^2(\mathbb T)$, also denoted by $T_d$.
Let $T_k$ act as $T_d$ on register $R_k$ and as the identity on the
other registers. On the tensor product of $\mathcal M_d^{(0)}$ in register
$R_j$ and $\mathcal A_d$ in register $R_k$,
\begin{equation}
  \Lambda_{jk}(\beta)
  =T_k^{\dagger}\,\mathrm{CPHS}_{jk}(\beta)\,T_k.
  \label{eq:qft-mixed-angle-momentum-cphs-reduction}
\end{equation}
The map $T_d$ preserves the logical digit and changes its encoding.
The map $F_d^{A\to M}$ applies the finite Fourier matrix. For
the momentum-code specialization in
Eq.~\eqref{eq:gate-set-momentum-codewords},
Eq.~\eqref{eq:qft-mixed-angle-momentum-cphs-reduction} gives a CPHS
realization of the mixed logical phase.

\subsection{Gate-set compilation on rotor momentum codes}
\label{subsec:qft-gate-set-compilation-rotor-momentum-codes}

The angle--momentum construction treats the local
angle-to-momentum Fourier operation and the mixed angle--momentum phase as
logical operations. The momentum-code construction uses MQR, CShift,
conditional-phase, CPHS, and Hadamard gates as logical instructions. We use
the specialization $\mathcal M_d^{(0)}$ in
Eq.~\eqref{eq:gate-set-momentum-code}, with $d=2^s$.
Throughout this subsection and
Appendix~\ref{app:qft-gate-set-compilation}, the momentum code is
$\mathcal M_d^{(0)}$. When a qubit--rotor interface is required, the
binary label can be loaded by qubit-controlled shifts of distances
$1,2,\ldots,2^{s-1}$, after which momentum-conditioned qubit rotations
reset the source register. This gives the code-space action
\begin{equation*}
  \sum_{p=0}^{d-1}
  c_p|p\rangle_Q|0\rangle_R
  \longmapsto
  |0\rangle_Q^{\otimes s}
  \sum_{p=0}^{d-1}
  c_p|p\rangle_R .
\end{equation*}
For one $s$-qubit block, the transfer uses $s$ CShift and $s$ MQR
instructions. The
inverse transfer has the same count. For $r$ registers, a one-way transfer
therefore uses $rs$ CShift and $rs$ MQR instructions.

\subsubsection{Compiled QFT on one rotor momentum code}

The positive-exponent \(d\)-point Fourier transform is compiled on the
momentum code \(\mathcal M_d^{(0)}\). The qubit--rotor interface described
above is used only when the input amplitudes are initially stored in qubits
or a qubit output is required. The Fourier circuit acts entirely within
\(\mathcal M_d^{(0)}\).

Recall that
\begin{equation}
  F_d^M |\overline p\rangle_R = \frac{1}{\sqrt d} \sum_{q=0}^{d-1} \omega_d^{qp} |\overline q\rangle_R,
  \qquad 0\leq p<d,
\end{equation}
with \(d=2^s\) and
\(\omega_d=e^{2\pi i/d}\).

Let \(a\) be a reusable ancilla qubit initialized in
\(|0\rangle_a\). For \(j=0,\ldots,s-1\), define the momentum-conditioned
ancilla rotation
\begin{equation}
  B_j := \sum_{m=0}^{d-1} (iY_a)^{m_j} \otimes |\overline m\rangle \langle\overline m| + I_a\otimes \bigl(
  I_R-\Pi_M \bigr),
\end{equation}
where
\begin{equation}
  m_j := \left\lfloor \frac{m}{2^j} \right\rfloor \bmod 2
\end{equation}
is the \(j\)th binary digit of the current rotor momentum. The gate
\(B_j\) is an MQR operation. In particular,
\begin{equation}
  B_j^\dagger |0\rangle_a |\overline m\rangle_R = |m_j\rangle_a |\overline m\rangle_R .
\end{equation}

Define
\begin{equation}
  S_j := \mathrm{CShift}_{a\to R}(-2^j), \qquad \alpha_j := \frac{2\pi}{2^{j+1}} .
\end{equation}
The conditional phase \(C_j\) is the gate introduced in
Eq.~\eqref{eq:momentum-dependent-conditional-phase}, with the ancilla as
the qubit register and \(P_{c_a}=Z_a\). Its action is
\begin{equation}
  C_j |c\rangle_a |\ell\rangle_R = e^{i\alpha_j c\ell} |c\rangle_a |\ell\rangle_R, \qquad c\in\{0,1\}.
\end{equation}

The \(j\)th Fourier stage is
\begin{equation}
  F_j := B_j S_j^\dagger C_j (H_a\otimes I_R) S_j B_j^\dagger .
\end{equation}
The factors act from right to left in the order
\[
  B_j^\dagger,\quad
  S_j,\quad
  H_a\otimes I_R,\quad
  C_j,\quad
  S_j^\dagger,\quad
  B_j .
\]

\begin{figure}[t]
\centering

\begin{tikzpicture}[
  >=Latex,
  x=1cm,
  y=1cm
]

\coordinate (ain)  at (0.25,1.20);
\coordinate (rin)  at (0.25,0.00);
\coordinate (aout) at (11.25,1.20);
\coordinate (rout) at (11.25,0.00);

\node[
  draw,
  rounded corners=1.5pt,
  fill=white,
  minimum width=10mm,
  minimum height=17mm,
  inner sep=1pt,
  font=\small
] (Bjdag) at (1.50,0.60)
  {$B_j^\dagger$};

\node[
  font=\scriptsize,
  align=center
] at (1.50,1.80)
  {extract\\$p_j$};

\node[
  draw,
  rounded corners=1.5pt,
  fill=white,
  minimum width=16mm,
  minimum height=8mm,
  inner sep=1pt,
  font=\scriptsize
] (Tj) at (3.25,0.00)
  {$X_R(-2^j)$};

\node[font=\scriptsize] at (3.25,-0.62)
  {$S_j$};

\node[
  draw,
  fill=white,
  minimum width=8mm,
  minimum height=8mm,
  inner sep=1pt,
  font=\small
] (Ha) at (4.75,1.20)
  {$H$};

\node[
  font=\scriptsize,
  align=center
] at (4.75,1.88)
  {create\\$c_j$};

\node[
  draw,
  rounded corners=1.5pt,
  fill=white,
  minimum width=17mm,
  minimum height=8mm,
  inner sep=1pt,
  font=\scriptsize
] (Cj) at (6.35,0.00)
  {$Z_R(\alpha_j)$};

\node[font=\scriptsize] at (6.35,-0.62)
  {$C_j$};

\node[
  draw,
  rounded corners=1.5pt,
  fill=white,
  minimum width=16mm,
  minimum height=8mm,
  inner sep=1pt,
  font=\scriptsize
] (Tjdag) at (8.10,0.00)
  {$X_R(2^j)$};

\node[font=\scriptsize] at (8.10,-0.62)
  {$S_j^\dagger$};

\node[
  draw,
  rounded corners=1.5pt,
  fill=white,
  minimum width=10mm,
  minimum height=17mm,
  inner sep=1pt,
  font=\small
] (Bj) at (9.85,0.60)
  {$B_j$};

\node[
  font=\scriptsize,
  align=center
] at (9.85,1.80)
  {uncompute\\ancilla};

\draw[line width=0.45pt]
  (ain)
  --
  (Bjdag.west |- ain);

\draw[line width=0.45pt]
  (Bjdag.east |- ain)
  --
  (Ha.west);

\draw[line width=0.45pt]
  (Ha.east)
  --
  (Bj.west |- ain);

\draw[line width=0.45pt]
  (Bj.east |- ain)
  --
  (aout);

\draw[line width=0.45pt]
  (rin)
  --
  (Bjdag.west |- rin);

\draw[line width=0.45pt]
  (Bjdag.east |- rin)
  --
  (Tj.west);

\draw[line width=0.45pt]
  (Tj.east)
  --
  (Cj.west);

\draw[line width=0.45pt]
  (Cj.east)
  --
  (Tjdag.west);

\draw[line width=0.45pt]
  (Tjdag.east)
  --
  (Bj.west |- rin);

\draw[line width=0.45pt]
  (Bj.east |- rin)
  --
  (rout);

\draw[line width=0.45pt]
  (3.25,1.20)
  --
  (Tj.north);

\fill (3.25,1.20) circle (1.6pt);

\draw[line width=0.45pt]
  (6.35,1.20)
  --
  (Cj.north);

\fill (6.35,1.20) circle (1.6pt);

\draw[line width=0.45pt]
  (8.10,1.20)
  --
  (Tjdag.north);

\fill (8.10,1.20) circle (1.6pt);

\node[
  font=\small,
  anchor=east
] at (0.10,1.20)
  {$|0\rangle_a$};

\node[
  font=\small,
  anchor=east
] at (0.10,0.00)
  {$|\overline{\ell}\rangle_R$};

\node[
  font=\small,
  anchor=west
] at (11.40,1.20)
  {$|0\rangle_a$};

\node[
  font=\small,
  anchor=west
] at (11.40,0.00)
  {$|\psi_j(\ell)\rangle_R$};

\draw[
  decorate,
  decoration={
    brace,
    mirror,
    amplitude=5pt
  },
  line width=0.45pt
]
  (0.97,-1.13)
  --
  (10.35,-1.13)
  node[
    midway,
    below=7pt,
    font=\small
  ]
  {$F_j
    =
    B_jS_j^\dagger C_j
    (H_a\otimes I_R)
    S_jB_j^\dagger$};

\end{tikzpicture}

\caption{
One stage of the gate-set compilation of the rotor momentum-code QFT.
The momentum-dependent gate $B_j^\dagger$ extracts the $j$th input bit
$p_j$ into the ancilla. The controlled shift $S_j$ removes this bit from
the rotor momentum, the Hadamard introduces the output bit $c_j$, and
$C_j$ supplies the corresponding Fourier phase. The inverse shift writes
$c_j$ into the rotor momentum, and $B_j$ returns the ancilla to
$|0\rangle_a$.
}
\label{fig:qft-one-fourier-stage}
\end{figure}

The stages are applied in the order \(j=s-1,s-2,\ldots,0\), with the
rightmost factor acting first. Fig.~\ref{fig:qft-one-fourier-stage}
summarizes their action: stage \(j\) replaces the input bit \(p_j\) by the
Fourier-output bit \(c_j\) and returns the ancilla to \(|0\rangle_a\).
The stages produce the output bits in reversed binary order.

Define the binary-reversal permutation on \(\mathcal M_d^{(0)}\) by
\begin{equation}
  \mathrm{BREV}_s \left| \overline{ \sum_{j=0}^{s-1}2^jc_j } \right\rangle_R := \left| \overline{
  \sum_{j=0}^{s-1} 2^{s-1-j}c_j } \right\rangle_R .
\end{equation}
$\mathrm{BREV}_s$ reverses the \(s\) binary positions within one rotor code.
The operation \(\mathrm{REV}_r\) reverses the order of the \(r\) rotor
registers.

Let
\(\widetilde{\mathrm{BREV}}_s\) denote the ancilla-assisted
implementation satisfying
\begin{equation}
  \widetilde{\mathrm{BREV}}_s |0\rangle_a |\overline m\rangle_R = |0\rangle_a \mathrm{BREV}_s
  |\overline m\rangle_R, \qquad 0\leq m<d .
\end{equation}
Its construction from MQR and CShift operations is given in
Appendix~\ref{app:binary-bit-reversal-one-rotor-momentum-code}.

Combining the \(s\) Fourier stages with the ancilla-assisted binary
reversal, define
\begin{equation}
  \widetilde F_d^M
  :=
  \widetilde{\mathrm{BREV}}_s
  F_0F_1\cdots F_{s-1}.
\end{equation}
Its action on the momentum code is
\begin{equation}
  \widetilde F_d^M |0\rangle_a |\overline p\rangle_R = |0\rangle_a F_d^M |\overline p\rangle_R,
  \qquad 0\leq p<d.
  \label{eq:compiled-one-rotor-qft-action}
\end{equation}
The ancilla returns to \(|0\rangle_a\) after every Fourier stage and after
the binary-reversal circuit. Every rotor momentum visited by the compiled
circuit lies in
\(\{0,\ldots,d-1\}\).

The Fourier-stage calculation and induction are given in
Appendix~\ref{app:verification-one-rotor-momentum-code-qft}.

\subsubsection{Base-\texorpdfstring{$d$}{d} QFT compilation on rotor
momentum codes}

The one-rotor compilation extends to $r$ rotor registers. This construction
uses compiled one-rotor Fourier transforms and rotor--rotor CPHS gates on
$(\mathcal M_d^{(0)})^{\otimes r}$, with both the input and the Fourier output
encoded in this momentum code.

Let $N=d^r$, and write the input integer in base $d$ as
\begin{equation}
  X = \sum_{k=1}^{r} x_kd^{r-k}, \qquad x_k\in\{0,\ldots,d-1\}.
\end{equation}
Here $x_k$ denotes the input digit and $p_j$ the Fourier-output
digit; both are encoded in the momentum code $\mathcal M_d^{(0)}$.
The digit $x_k$ is encoded in register $R_k$ as
$|\overline{x_k}\rangle_{R_k}$.

The registers are processed sequentially in the order
$R_1,R_2,\ldots,R_r$. On register $R_j$, first apply the compiled
implementation of $F_{d,j}^M$. The resulting state is expanded in
momentum-code output labels $p_j$, while every unprocessed register
$R_k$, with $k>j$, still carries its input label $x_k$. Immediately
after processing $R_j$, apply
\begin{equation}
  \mathrm{CPHS}_{jk} \left( \frac{2\pi}{d^{k-j+1}} \right) |\ell_j,\ell_k\rangle = \exp\!\left(
  \frac{2\pi i\ell_j\ell_k}{d^{k-j+1}} \right) |\ell_j,\ell_k\rangle
\end{equation}
for every $k>j$. For fixed $j$, these CPHS gates commute, so their order is
immaterial. On a component labeled by $p_j$ and $x_k$, the gate
contributes
\begin{equation}
  \exp\!\left(
    \frac{2\pi i p_jx_k}{d^{k-j+1}}
  \right),
  \label{eq:base-d-qft-cross-digit-phase}
\end{equation}
which is the corresponding cross-digit factor in the base-$d$ Fourier
kernel. Because $\mathrm{CPHS}_{jk}(\alpha)$ is a pure-rotor quadratic
Clifford gate, each cross-digit factor in
Eq.~\eqref{eq:base-d-qft-cross-digit-phase} is realized by one
rotor--rotor Clifford operation \cite{xu2024multimode}.

Let $F_N^M$ denote the standard positive-exponent $d^r$-point
Fourier transform in the standard base-$d$ basis of
$(\mathcal M_d^{(0)})^{\otimes r}$.

Let $U_{\mathrm{QFT}}$ denote the operation induced on the
rotor registers by the circuit above, with the reusable ancilla initialized
in $|0\rangle_a$ and returned to $|0\rangle_a$ after each local transform.
Its action on $(\mathcal M_d^{(0)})^{\otimes r}$ is
\begin{equation}
  U_{\mathrm{QFT}}
  =
  \mathrm{REV}_r F_N^M.
  \label{eq:compiled-base-d-qft-reversal}
\end{equation}
Here the register-order reversal is defined by
\begin{equation}
  \mathrm{REV}_r \bigotimes_{j=1}^{r} |\overline{p_j}\rangle_{R_j} = \bigotimes_{j=1}^{r}
  |\overline{p_{r-j+1}}\rangle_{R_j}.
\end{equation}
Appendix~\ref{app:verification-compiled-base-d-qft} proves
Eq.~\eqref{eq:compiled-base-d-qft-reversal} from the base-$d$ phase relation
in Eq.~\eqref{eq:app-base-d-phase-identity}.

Interpreting the output registers in reversed order requires no physical
permutation. Restoring the original physical order requires
$\lfloor r/2\rfloor$ rotor SWAP gates.

The qubit--rotor interface described at the beginning of
Sec.~\ref{subsec:qft-gate-set-compilation-rotor-momentum-codes}
may be applied separately to each rotor register when qubit input or output
is required.

\subsection{Logical resource comparison}
\label{subsec:qft-logical-resource-comparison}

To compare the two rotor constructions with the standard qubit circuit, we
specialize to
\begin{equation}
  N=d^r=2^{rs},
  \qquad
  d=2^s.
\end{equation}
All three constructions realize the same positive-exponent $N$-point
Fourier matrix using different encodings and logical operations. The standard
binary construction acts on $rs$ qubits, the
angle--momentum construction maps
$\mathcal A_d^{\otimes r}$ to $\mathcal M_d(f)^{\otimes r}$, and the compiled
momentum-code construction acts within $(\mathcal M_d^{(0)})^{\otimes r}$.

Table~\ref{tab:qft-construction-comparison} compares the operations
internal to the three encoded Fourier transforms; the qubit--rotor interface
is counted separately in
Table~\ref{tab:qft-compiled-qft-instruction-counts}.

\begin{table}[H]
  \centering
  \caption{
  Logical structure of three code-space constructions of the same
  $N=d^r=2^{rs}$ Fourier matrix. The rows list the local and cross-block
  logical operations used by each construction.
  }
  \label{tab:qft-construction-comparison}

  \renewcommand{\arraystretch}{1.2}

  \begin{tabularx}{\textwidth}{
    @{}
    L{0.19\textwidth}
    L{0.23\textwidth}
    L{0.26\textwidth}
    Y
    @{}
  }
    \toprule
    Construction
      & Encoding map
      & Local transform on each block
      & Cross-block phase operations
      \\
    \midrule

    Binary qubit construction
      &
      $\bigl(\mathbb C^2\bigr)^{\otimes rs}
       \longrightarrow
       \bigl(\mathbb C^2\bigr)^{\otimes rs}$
      &
      $r$ $s$-qubit QFTs
      &
      $s^2\binom{r}{2}$ binary controlled-phase gates
      \\

    Angle--momentum construction
      &
      $\mathcal A_d^{\otimes r}
       \longrightarrow
       \mathcal M_d(f)^{\otimes r}$
      &
      $r$ angle-to-momentum Fourier operations
      &
      $\binom{r}{2}$ mixed angle--momentum phase operations
      \\

    Compiled momentum-code construction
      &
      \mbox{$
        (\mathcal M_d^{(0)})^{\otimes r}
        \longrightarrow
        (\mathcal M_d^{(0)})^{\otimes r}
      $}
      &
      $r$ compiled one-rotor $d$-point QFTs
      &
      $\binom{r}{2}$ CPHS gates
      \\

    \bottomrule
  \end{tabularx}
\end{table}

Each QFT on an $s$-qubit block contains $s$ Hadamard gates and
$\binom{s}{2}$ controlled-phase gates. The standard binary construction
therefore uses $rs$ Hadamard gates and
\begin{equation}
  \binom{rs}{2} = r\binom{s}{2} + s^2\binom{r}{2}
\end{equation}
controlled-phase gates. The first term counts phases internal to the
$r$ blocks, and the second counts phases coupling distinct blocks.
Counting each Hadamard and controlled-phase gate once, the binary
construction has total instruction count
\begin{equation}
  N_{\mathrm{bin}}
  =
  rs+\binom{rs}{2}
  =
  \Theta((rs)^2).
\end{equation}

The angle--momentum construction uses one local angle-to-momentum Fourier
operation per rotor and one mixed angle--momentum phase per pair of rotors.
In the compiled momentum-code construction, the cross-block Fourier factor
that requires $s^2$ binary controlled-phase gates is implemented by one CPHS
gate.

Each compiled one-rotor transform uses $s$ sequential Fourier stages, the
internal binary bit reversal constructed in
Appendix~\ref{app:binary-bit-reversal-one-rotor-momentum-code}, and one reusable ancilla qubit.
The same ancilla may be reused for all $r$ local transforms. Detailed counts
are collected in Table~\ref{tab:qft-compiled-qft-instruction-counts}. Using the
logical-instruction counting convention of
Appendix~\ref{app:qft-logical-instruction-counts}, these counts give
\begin{equation}
  N_d
  =
  6s+5\left\lfloor\frac{s}{2}\right\rfloor-1,
  \qquad
  N_{\mathrm{rot}}
  =
  rN_d+\binom{r}{2}
  =
  O(rs+r^2).
  \label{eq:qft-compiled-rotor-instruction-count}
\end{equation}

For $r\geq2$, the smallest cross-register phase angle in both rotor
constructions is
\begin{equation}
  \frac{2\pi}{d^r}
  =
  \frac{2\pi}{N}.
\end{equation}
The binary QFT contains controlled-phase rotations at the same scale.
Approximate QFT circuits can reduce the binary count by truncating small-angle
rotations \cite{coppersmith2002approximate}; the same truncation principle
applies to the small conditional-phase and CPHS angles in the rotor
decomposition. Table~\ref{tab:qft-construction-comparison} compares the
transforms.

\clearpage

\section{Physical implementation considerations}
\label{sec:physical-implementation-considerations}

The logical operations used above require different physical controls in
native rotors and finite-code implementations. A native rotor supplies the
periodic angle and integer momentum directly. A finite-code implementation
reproduces selected rotor operations on a spectral subspace of another
system.

\subsection{Native-rotor and finite-code routes}

\paragraph{Native-rotor route.}
Superconducting phase--charge degrees of freedom directly realize a periodic
phase and integer-valued conjugate charge. A flux-tunable Cooper-pair-box
Hamiltonian
$H_{\mathrm{CPB}}=4E_C(\hat n-n_g)^2-E_J\cos\hat\phi$
contains the drift and cosine terms used in the universal-control construction
\cite{koch2007chargeinsensitive}. After division by $4E_C$ and removal of an
additive constant, its charging term has the form of
Eq.~\eqref{eq:biased-rotor-control-hamiltonian}, with $\hat\ell=\hat n$ and
$\xi=-2n_g$, while tuning $E_J$ supplies the cosine control. A stable offset
bias $n_g\notin\tfrac12\ZZ$ realizes the nonresonance condition $\xi\notin\ZZ$ used in Sec.~\ref{sec:universal-control-clifford-non-clifford-gates}. At a different operating point, $n_g=0$, the rescaled Hamiltonian is the Mathieu operator in Eq.~\eqref{eq:qpe-mathieu-hamiltonian}, with $\lambda=E_J/(4E_C)$. The same phase--charge variables also appear in proposals for rotor codes
\cite{vuillot2024homological}, and in recent superconducting implementations of compact $U(1)$ gauge fields coupled to matter
\cite{alcainecuervo2026compact}.

Trapped planar rotors provide a direct realization of the U(1) degree of freedom, with a periodic angle and integer-valued angular momentum \cite{urban2019coherent,glikin2025rotational,leibscher2025planar}. Molecular ions and ultracold molecules have larger rotational state spaces, but resolved angular-momentum manifolds and coherent internal-state control allow selected levels to be used as finite rotor codes \cite{chou2017preparation,lin2020quantum,hepworth2025longlived}. Coherent translation on a momentum lattice, another operation used in the constructions above, has also been demonstrated in atom-optics kicked rotors and momentum-space quantum walks \cite{moore1995atom,dadras2019momentumspace}.

\paragraph{Finite-code route.}
Oscillator platforms can reproduce selected rotor operations on finite codes. Dispersive cavity and trapped-ion controls provide number-selective qubit operations, conditional displacements, and nonlinear bosonic interactions \cite{heeres2015cavity,eickbusch2022fast,diringer2024conditional}. Trapped-ion implementations of trigonometric continuous-variable gates also include the periodic cosine control used in Sec.~\ref{sec:universal-control-clifford-non-clifford-gates} \cite{rainaldi2026trigonometric}. In this setting, the rotor is represented only on a chosen finite spectral subspace; the full oscillator need not reproduce the rotor algebra outside that code. The additional requirements are preparation of the encoded states, spectrally selective control within the code, and suppression of leakage outside it.

\subsection{Physical requirements for the gate set}

Table~\ref{tab:implementation-capabilities} summarizes the physical
capabilities associated with the logical operations used throughout the
paper.

\begin{table}[H]
  \centering
  \caption{Physical requirements and main uses of the logical operations in
  the paper.}
  \label{tab:implementation-capabilities}
  \small
  \begin{tabularx}{\linewidth}{
    @{}L{0.23\linewidth}L{0.47\linewidth}Y@{}
  }
    \toprule
    Logical operation
      & Required physical capability
      & Main use \\
    \midrule

    Momentum shifts and momentum-diagonal phase gates
      & Integer translations of rotor momentum and evolution generated by
        $\hat\ell$ or $\hat\ell^2$
      & Rotor Clifford circuits; gauge dynamics \\

    Periodic angle potentials
      & Evolution generated by $V(\hat\theta)$, including the cosine gate
        used in the universal-control construction
      & Universal control; gauge dynamics \\

    Momentum-lattice automorphisms $U_A$
      & Integer linear transformations of the rotor momentum coordinates
      & Multirotor Clifford synthesis \\

    $\Gamma_{c,j}$ and MQR
      & Qubit operations conditioned on rotor parity or on selected momentum
        values or residue classes
      & Mixed Clifford gates; coherent transfer; QFT compilation \\

    CShift and $\mathcal O_U$
      & Qubit-conditioned momentum shift or target operations conditioned
        on rotor momentum
      & Gauge hopping; phase estimation; coherent transfer \\

    $\mathrm{CPHS}_{ij}(\alpha)$
      & Evolution generated by an interaction proportional to
        $\hat\ell_i\hat\ell_j$
      & Cross-register Fourier phases \\

    Code preparation and readout
      & Preparation of finite momentum superpositions and measurement of
        rotor momentum or angle
      & Phase estimation; encoded Fourier processing \\

    \bottomrule
  \end{tabularx}
\end{table}

Momentum-resolved qubit control gives an oscillator analogue of MQR
\cite{krastanov2015universal,heeres2015cavity,eickbusch2022fast}, and
conditional displacements implement the qubit-to-mode control used by CShift
\cite{diringer2024conditional}. The two-rotor interaction required by CPHS is
proportional to $\hat\ell_i\hat\ell_j$; its oscillator analogue is a
number--number or cross-Kerr interaction
\cite{kounalakis2018tuneable,holland2015singlephotonresolved}.

\subsection{Control scales and finite-window diagnostics}

For the compiled QFT, the code size $d=2^s$ fixes both the momentum range
and the phase resolution. We use the momentum code
\begin{equation*}
  \mathcal M_d^{(0)}
  =
  \operatorname{span}\{|0\rangle_R,\ldots,|d-1\rangle_R\},
  \qquad
  d=2^s.
\end{equation*}
Every intermediate momentum reached by the circuit remains in
$\{0,\ldots,d-1\}$. The Fourier stages use qubit-controlled shifts with
distances at most $2^{s-1}$ and MQR operations that resolve individual
binary digits of the momentum label. The binary-reversal circuit also uses
MQR operations conditioned on selected two-bit patterns. The smallest local
conditional phase used in the compilation is $2\pi/d$. For $r$ rotor
registers, with $N=d^r$, the smallest cross-register CPHS angle is
\begin{equation*}
  \frac{2\pi}{N}
  =
  \frac{2\pi}{d^r}.
\end{equation*}

Table~\ref{tab:qft-finite-code-resource-sheet} gives an illustrative example
with $r=2$, $s=3$, $d=8$, and $N=64$. The instruction counts follow
Eq.~\eqref{eq:qft-compiled-rotor-instruction-count} and the convention of
Appendix~\ref{app:qft-logical-instruction-counts}.

\begin{table}[H]
  \centering
  \caption{Illustrative control requirements and instruction counts for the
  compiled two-rotor $N=64$ QFT with $r=2$, $s=3$, and $d=8$.
  Counts follow the logical-instruction convention of
  Appendix~\ref{app:qft-logical-instruction-counts}.}
  \label{tab:qft-finite-code-resource-sheet}
  \small
  \begin{tabularx}{0.86\linewidth}{@{}L{0.50\linewidth}Y@{}}
    \toprule
    Requirement & Target \\
    \midrule
    Momentum range per rotor
      & $\ell=0,\ldots,7$ \\
    Maximum CShift distance
      & $4$ \\
    Momentum selectivity (MQR)
      & Individual bits and selected two-bit patterns of
        $\ell\in\{0,\ldots,7\}$ \\
    Smallest local conditional phase used
      & $\pi/4$ \\
    Cross-register CPHS phase
      & $\pi/32$ \\
    Compiled-QFT instruction count
      & $N_d=22$ per rotor; $N_{\mathrm{rot}}=45$ total \\
    \bottomrule
  \end{tabularx}
\end{table}

The phase-estimation protocol imposes a different finite-range requirement.
For a probe with finite momentum support, the largest occupied $|\ell|$
fixes the largest controlled power $U^{-\ell}$ that must be implemented.
Increasing the support therefore increases the range of controlled powers
required by the protocol.

For the angle-code QFT construction, the relevant finite-code errors are
those associated with the realization of the mixed angle--momentum phase.
The code-space error and leakage are bounded by
Eqs.~\eqref{eq:app-qft-mixed-phase-multiplication-error}
and~\eqref{eq:app-qft-mixed-phase-multiplication-leakage}, respectively.
Both bounds are controlled by the weight of the angle-code wavefunctions
outside their assigned angular cores.

The gauge calculation uses the hard-wall electric-flux cutoff
$|E_e|\leq L$. The link-averaged boundary occupation
$\overline p_{\partial L}$ in
Eq.~\eqref{eq:gauge-boundary-occupation-diagnostic}
measures the weight on the flux states where the projected shift differs
from the full rotor shift. Its decrease with $L$ therefore provides a direct
diagnostic of the finite-flux truncation.

\section{Conclusion and outlook}

This paper has treated the $U(1)$ rotor as a computational register in its own
right, coupled directly to qubits. The Clifford theory that results is directional 
but constructive: rotor momentum parity may control qubit Pauli operations while 
the reverse is forbidden, every commutation-preserving phase-space action has an 
explicit finite circuit realization, and the mixed part of that circuit has an 
exact minimal gate count. The same analysis gives complete Clifford criteria for 
the momentum- and angle-dependent gate families used throughout the paper. Adding 
two non-Clifford controls promotes the local Clifford operations to universal
control on the full Hilbert space in the strong operator topology.

The applications use this structure in different ways. For compact $U(1)$
gauge--matter dynamics, we constructed an exact full-space realization of
gauge-covariant hopping and studied the effect of a hard-wall electric-flux
cutoff on ground-state and real-time observables. For rotor phase estimation,
we used the duality $\ZZ\longleftrightarrow\TT$ to convert phase kickback into
a translation of the measured angle distribution; this removes the need for a
coherent inverse QFT on the phase register and turns probe preparation into an
optimization problem under momentum-support or mean-energy constraints. For
Fourier processing, we gave both an angle-to-momentum code construction and an
explicit gate-set compilation within a finite momentum code, in which
quadratic rotor interactions implement the cross-register Fourier phases.

We also identified the physical capabilities these logical constructions
require. A native phase--charge rotor supplies the integer momentum, periodic
angle, quadratic drift, and cosine potential directly. A finite-code
implementation can reproduce the same logical operations through
momentum-selective qubit control, conditional displacements, and nonlinear
interactions on a chosen spectral subspace. In both routes the rotor enters
through its full phase-space structure, not only as a large code space.

Two limitations remain. Strong-operator density is an existence statement and
yields no bound on circuit length or control time. The Fourier resource counts
are logical instruction counts, not physical gate counts; they assign unit
cost to MQR, CShift, conditional-phase, CPHS, and Hadamard operations, and
they exclude state preparation, spectral selectivity, interaction time, phase
resolution, readout, and leakage.
Sec.~\ref{sec:physical-implementation-considerations} identifies where these
requirements enter the two implementation routes, but does not fix their
quantitative cost.

The main remaining problem is therefore quantitative compilation.
Finite-window synthesis bounds for CShift, MQR, and $\mathcal O_U$ would
connect the strong-density result to circuit length and control time, and
would determine whether the logical compression of the cross-register Fourier
phases survives once spectral selectivity and small interaction angles are
assigned physical costs. A second problem is algebraic: hard-wall projection
makes the rotor shift nonunitary while preserving the finite Gauss-law basis,
while cyclic completion restores unitarity at the price of changing the shift
relation at the boundary, so the Clifford and normalizer structures of these
finite models need not coincide with those of the full rotor. Understanding
that change bears directly on gate synthesis and on the boundary between
classically simulable hybrid normalizer circuits and circuits containing
additional non-Clifford operations. Beyond $U(1)$, compact Abelian groups are
the next setting for the structural theory, followed by non-Abelian compact
groups, where continuous group variables are accompanied by discrete
representation data.

\section*{Acknowledgments}

We thank Joel Bierman, Weijian Chen, Raghav Jha, Yuan Liu, and Grant Scotto for 
fruitful discussions. This work was supported by the U.S. Department of Energy, 
Advanced Scientific Computing Research, under contract number DE-SC0025384.

\clearpage
\appendix

\section{Proofs for the hybrid Clifford classification}

This appendix gives the block calculation underlying the hybrid phase-space
classification in Theorem~\ref{thm:hybrid-phase-space-classification} and computes the
one-qubit--one-rotor symplectic group and Clifford quotient. The explicit
unitary realization, the kernel of the induced Clifford action, and the
classification of mixed blocks under separate qubit and rotor Clifford
operations are proved in the main text.

\subsection{Block classification of hybrid phase-space transformations}
\label{app:block-classification-hybrid-phase-space-transformations}

Set $V=\FF_2^{2n}$, and let
$\rho_2:\ZZ^r\to\FF_2^r$ denote reduction modulo two. Let
$S:K_{n,r}\to K_{n,r}$ be a group automorphism preserving $\omega$.
With respect to the decomposition
$
  K_{n,r}=V\oplus\ZZ^r\oplus\TT^r,
$
write $S$ in block form as
\begin{equation}
  \begin{pmatrix}
    q'\\
    m'\\
    \phi'
  \end{pmatrix}
  =
  \begin{pmatrix}
    S_{VV} & S_{V\ZZ} & S_{V\TT}\\
    S_{\ZZ V} & S_{\ZZ\ZZ} & S_{\ZZ\TT}\\
    S_{\TT V} & S_{\TT\ZZ} & S_{\TT\TT}
  \end{pmatrix}
  \begin{pmatrix}
    q\\
    m\\
    \phi
  \end{pmatrix}.
\end{equation}
Each row specifies a target component, and each column specifies a source
component. We first use the group structures of $V$, $\ZZ^r$, and $\TT^r$
to determine the possible nonzero blocks, and then impose preservation of
$\omega$.

The group $\TT^r$ is divisible. The groups $V$ and $\ZZ^r$ contain no
nonzero divisible subgroups. In addition, $V$ is torsion and $\ZZ^r$ is
torsion-free. Hence
\begin{equation}
  \Hom(\TT^r,V)=0, \qquad \Hom(\TT^r,\ZZ^r)=0, \qquad \Hom(V,\ZZ^r)=0.
\end{equation}
Every homomorphism $\ZZ^r\to V$ factors through $\rho_2$; every
homomorphism $V\to\TT^r$ has image in $\{0,\pi\}^r$. The remaining
blocks therefore have the form
\begin{equation}
  \begin{pmatrix}
    q'\\
    m'\\
    \phi'
  \end{pmatrix}
  =
  \begin{pmatrix}
    M & C\rho_2 & 0\\
    0 & A & 0\\
    \pi G & B & \alpha
  \end{pmatrix}
  \begin{pmatrix}
    q\\
    m\\
    \phi
  \end{pmatrix},
\end{equation}
or, in component form,
\begin{equation}
  \begin{split}
    q'&=Mq+C\bar m,\\
    m'&=Am,\\
    \phi'&=\alpha(\phi)+Bm+\pi Gq.
  \end{split}
  \label{eq:hybrid-parameter-component-form}
\end{equation}
Here $M$, $C$, and $G$ are matrices over $\FF_2$, $A$ is an integer
matrix, and $B$ has entries in $\TT$.

The subgroup $\TT^r$ is the unique maximal divisible subgroup of
$K_{n,r}$ and is therefore preserved by $S$. In the quotient
$K_{n,r}/\TT^r$, the subgroup $V$ is precisely the torsion subgroup and
is also preserved. It follows that $S$ induces automorphisms on the
three diagonal components, so
\begin{equation}
  M\in\GL(2n,\FF_2), \qquad A\in\GL(r,\ZZ), \qquad \alpha\in\operatorname{Aut}(\TT^r).
\end{equation}

We now impose preservation of $\omega$ on four types of inputs.

\paragraph{Qubit--qubit inputs.}
For $q,p\in V$, preservation of $\omega$ on
$(q,0,0)$ and $(p,0,0)$ gives
\begin{equation}
  M^{\transpose}J_nM=J_n.
\end{equation}
Thus $M\in\Sp(2n,\FF_2)$.

\paragraph{Torus--integer inputs.}
For $\phi\in\TT^r$ and $k\in\ZZ^r$, preservation of $\omega$ on
$(0,0,\phi)$ and $(0,k,0)$ gives
$\exp\!\left(
    i\alpha(\phi)^{\transpose}Ak
  \right)
  =
  \exp\!\left(
    i\phi^{\transpose}k
  \right) $ for all $k\in\ZZ^r$.
Taking $k$ to be the standard basis vectors of $\ZZ^r$ gives
$A^{\transpose}\alpha(\phi)=\phi$ in $\TT^r$.
Therefore
\begin{equation}
  \alpha(\phi)=A^{-\transpose}\phi.
  \label{eq:hybrid-block-torus-condition}
\end{equation}

\paragraph{Qubit--integer inputs.}
For $q\in V$ and $k\in\ZZ^r$,
$
  \omega\bigl((q,0,0),(0,k,0)\bigr)=1.
$
Preservation of $\omega$ therefore requires
$
  M^{\transpose}J_nC+G^{\transpose}\bar A=0.
$
Since $\bar A$ is invertible over $\FF_2$, this gives
\begin{equation}
  G = \bar A^{-\transpose} C^{\transpose}J_nM.
  \label{eq:hybrid-block-reciprocal-condition}
\end{equation}

\paragraph{Integer--integer inputs.}
For $m,k\in\ZZ^r$, preservation of $\omega$ on
$(0,m,0)$ and $(0,k,0)$ gives
\begin{equation}
  A^{\transpose}B-B^{\transpose}A
  =\pi C^{\transpose}J_nC.
  \label{eq:hybrid-block-shear-condition}
\end{equation}
Define $\Theta=A^{\transpose}B$.  Then $B=A^{-\transpose}\Theta$ and
Eq.~\eqref{eq:hybrid-block-shear-condition} becomes
Eq.~\eqref{eq:theta-constraint}.

For every $v\in\FF_2^r$, $ A^{-\transpose}(\pi v) = \pi\bar A^{-\transpose}v$ in $\TT^r.$
Substitution of
Eqs.~\eqref{eq:hybrid-block-torus-condition},
\eqref{eq:hybrid-block-reciprocal-condition}, and
\eqref{eq:hybrid-block-shear-condition} into
Eq.~\eqref{eq:hybrid-parameter-component-form} gives
Eq.~\eqref{eq:general-hybrid-map}.

The pairing is already trivial on
$V\times\TT^r$ and $\TT^r\times\TT^r$, both before and after applying
$S$. Reversing the order of the two inputs only inverts the resulting
phase. The four cases above therefore cover all pairs of components.  
Conversely, direct substitution shows that $M$ preserves the qubit--qubit term,
$\alpha=A^{-T}$ preserves the torus--integer term, the block $G$ cancels the
qubit--integer terms, and Eq.~\eqref{eq:theta-constraint} cancels the extra 
integer--integer term. Hence every map in Eq.~\eqref{eq:general-hybrid-map} 
satisfying Eq.~\eqref{eq:theta-constraint} preserves $\omega$.

Such a map is invertible. From $(q',m',\phi')$, the input is recovered
successively as
\begin{equation}
  m=A^{-1}m', \qquad q=M^{-1}(q'-C\bar m), \qquad
  \phi=A^{\transpose}\phi'-\Theta m -\pi C^{\transpose}J_nMq.
\end{equation}
All maps whose domains are $V$ or $\ZZ^r$ are continuous because these
groups are discrete. The formula $\alpha=A^{-\transpose}$ and the inverse
above show that $S$ and $S^{-1}$ are continuous.

Finally, the action of $S$ on inputs supported in each of the three
components determines $M$, $C$, $A$, $B$, $G$, and $\alpha$ uniquely,
and $\Theta=A^{\transpose}B$. This proves
Theorem~\ref{thm:hybrid-phase-space-classification}.

\paragraph{Composition and inverse.}
With the convention $S_1S_2=S_1\circ S_2$, direct substitution gives,
for $S_i=S_{M_i,A_i,C_i,\Theta_i}$,
\begin{equation}
  \begin{array}{r@{}l@{\qquad}r@{}l}
    M_{12}&=M_1M_2,
    &A_{12}&=A_1A_2,\\
    C_{12}&=M_1C_2+C_1\bar A_2,
    &\Theta_{12}
      &=\Theta_2+A_2^{\transpose}\Theta_1A_2
        +\pi\bar A_2^{\transpose}
        C_1^{\transpose}J_nM_1C_2.
  \end{array}
  \label{eq:hybrid-symplectic-map-composition}
\end{equation}
The inverse has parameters
\begin{equation}
  \begin{array}{r@{}l@{\qquad}r@{}l}
    \widetilde M&=M^{-1},
    &\widetilde A&=A^{-1},\\
    \widetilde C&=M^{-1}C\bar A^{-1},
    &\widetilde\Theta
      &=-A^{-\transpose}\Theta A^{-1}
        +\pi\widetilde C^{\transpose}J_n\widetilde C.
  \end{array}
\end{equation}

\subsection{The one-qubit--one-rotor quotient}
\label{app:one-qubit-one-rotor-quotient}

\begin{proposition}[One-qubit--one-rotor symplectic group]
For one qubit and one rotor,
\begin{equation}
  \Sp(K_{1,1},\omega)
  \cong
  \TT\times S_4\times\ZZ_2.
  \label{eq:one-qubit-one-rotor-symplectic-group}
\end{equation}
Consequently,
$
  \Cl_{1,1}/\Pauli_{1,1}
  \cong
  \TT\times S_4\times\ZZ_2.
$
\end{proposition}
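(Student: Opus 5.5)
The plan is to specialize Theorem~\ref{thm:hybrid-phase-space-classification} and the composition law Eq.~\eqref{eq:hybrid-symplectic-map-composition} to $n=r=1$, and then to find an explicit splitting of the resulting group.

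\textbf{Parameters.} For $n=r=1$ the parameters are as follows.
\begin{itemize}
\item $M\in\Sp(2,\FF_2)=\GL(2,\FF_2)\cong S_3$.
\item $A\in\GL(1,\ZZ)=\{\pm1\}$.
\item $C=c\in\FF_2^2$, a single column.
\item $\Theta\in\TT$, a scalar.
\end{itemize}
Since $c^{\transpose}J_1c=2c_1c_2=0$ over $\FF_2$, the constraint Eq.~\eqref{eq:theta-constraint} is vacuous, so $\Theta$ ranges over all of $\TT$. By the uniqueness statement of Theorem~\ref{thm:hybrid-phase-space-classification}, $\Sp(K_{1,1},\omega)$ is in bijection with the set of tuples $(M,A,c,\Theta)$.

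\textbf{Composition law.} Since $\bar A=1$ and $A_2^{\transpose}\Theta_1A_2=A_2^2\Theta_1=\Theta_1$, Eq.~\eqref{eq:hybrid-symplectic-map-composition} reduces to
\[
M_{12}=M_1M_2,\qquad A_{12}=A_1A_2,\qquad c_{12}=M_1c_2+c_1,\qquad
\Theta_{12}=\Theta_1+\Theta_2+\pi\,c_1^{\transpose}J_1M_1c_2 .
\]
This displays three pieces.
\begin{itemize}
\item The factor $A$ decouples completely and gives a direct factor $\ZZ_2$.
\item The pairs $(M,c)$ multiply as affine maps $v\mapsto Mv+c$ of $\FF_2^2$. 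They therefore form $\operatorname{AGL}(2,\FF_2)=\FF_2^2\rtimes\GL(2,\FF_2)$.
\item $\Theta$ gives a central extension of $\operatorname{AGL}(2,\FF_2)$ by $\TT$, twisted by the $\{0,\pi\}$-valued cocycle $\pi\,c_1^{\transpose}J_1M_1c_2$.
\end{itemize}

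\textbf{Splitting the extension (main obstacle).} I expect this to be the one nontrivial step. The cocycle is nonzero, for example for $c_1\neq c_2$ both nonzero with $M_1=I$, so the splitting is not automatic. I will look for a coboundary $g:\FF_2^2\to\TT$ depending only on $c$ and satisfying
\[
g(M_1c_2+c_1)=g(c_1)+g(c_2)+\pi\,c_1^{\transpose}J_1M_1c_2 .
\]
Taking $c_1=0$ forces $g(Mc)=g(c)$. Since $\Sp(2,\FF_2)$ permutes the three nonzero vectors transitively, $g$ must take a constant value $\gamma$ on them, with $g(0)=0$.

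Now check the remaining cases. For $c_1,c_2$ distinct and nonzero, $\beta_J(c_1,c_2)=1$, so the equation becomes $\gamma=2\gamma+\pi$, giving $\gamma=\pi$. The case $c_1=c_2$ gives $0=2\gamma$, which $\gamma=\pi$ satisfies. The general case with $M_1$ follows because the condition depends only on $c_1$ and $M_1c_2$.

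So set $\Theta':=\Theta-\pi[c\neq0]$. Then $\Theta'_{12}=\Theta'_1+\Theta'_2$, and the map
\[
S_{M,A,c,\Theta}\longmapsto\bigl(\Theta',\,(M,c),\,A\bigr)
\]
is a group isomorphism onto $\TT\times\operatorname{AGL}(2,\FF_2)\times\ZZ_2$.

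\textbf{Identifying the affine group.} $\operatorname{AGL}(2,\FF_2)$ has order $4\cdot6=24$. It acts faithfully on the four points of $\FF_2^2$, hence embeds in $S_4$, and by cardinality it equals $S_4$. This proves Eq.~\eqref{eq:one-qubit-one-rotor-symplectic-group}.

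\textbf{Clifford quotient.} The statement $\Cl_{1,1}/\Pauli_{1,1}\cong\TT\times S_4\times\ZZ_2$ then follows directly from Corollary~\ref{cor:clifford-symplectic-exact-sequence}.
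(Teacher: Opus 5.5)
Your proposal is correct and follows the paper's proof essentially step for step. You specialize the composition law, untwist the $\{0,\pi\}$-valued cocycle by shifting $\Theta$ by $\pi[c\neq0]$ (the paper's $\widetilde\theta=\theta+\pi\nu(c)$), and identify $\operatorname{AGL}(2,\FF_2)\cong S_4$ by faithfulness and counting. The only cosmetic difference is that you derive this shift as forced by the cocycle equation, whereas the paper simply posits $\nu$ and checks that $\nu(c)+\nu(d)+\nu(c+d)=c^{\transpose}J_1d$.
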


\begin{proof}
For $n=r=1$, the parameters in
Theorem~\ref{thm:hybrid-phase-space-classification} reduce to
$
  M\in\Sp(2,\FF_2)=\GL(2,\FF_2),
  c\in\FF_2^2,
  \epsilon\in\{\pm1\},
  \theta\in\TT.
$
Since the form defined by $J_1$ is alternating,
$c^{\transpose}J_1c=0$. The constraint on $\theta$ is therefore automatically
satisfied for every $\theta\in\TT$. Using the convention
$S_1S_2=S_1\circ S_2$, the composition law is
\begin{equation}
  (M,c,\epsilon,\theta) (N,d,\delta,\eta)
  = \left( MN, c+Md, \epsilon\delta,
  \theta+\eta+\pi c^{\transpose}J_1Md
  \right).
  \label{eq:one-qubit-one-rotor-product}
\end{equation}

Define $\nu:\FF_2^2\to\FF_2$ by
$
  \nu(0)=0,
  \nu(c)=1 (c\ne0).
$
Every $M\in\GL(2,\FF_2)$ permutes the three nonzero vectors, and direct
calculation gives
\begin{equation}
  \nu(Mc)=\nu(c),
  \qquad
  \nu(c)+\nu(d)+\nu(c+d)=c^{\transpose}J_1d
  \quad\text{in }\FF_2.
\end{equation}
Define the shifted torus coordinate
\nopagebreak[4]
$
  \widetilde\theta=\theta+\pi\nu(c).
$
With this redefined parameter, the final phase term in
Eq.~\eqref{eq:one-qubit-one-rotor-product} is absorbed, and the product
becomes
\begin{equation}
  (M,c,\epsilon,\widetilde\theta)(N,d,\delta,\widetilde\eta)= \left( MN, c+Md, \epsilon\delta,
  \widetilde\theta+\widetilde\eta \right).
  \label{eq:one-qubit-one-rotor-untwisted-product}
\end{equation}
The pair $(M,c)$ has the product law of
$\operatorname{AGL}(2,\FF_2)$.  Its action on the four points of $\FF_2^2$
is faithful, and
$
  |\operatorname{AGL}(2,\FF_2)|
  =4|\GL(2,\FF_2)|=24.
$
Hence $\operatorname{AGL}(2,\FF_2)\cong S_4$.  Equation
\eqref{eq:one-qubit-one-rotor-untwisted-product} proves
Eq.~\eqref{eq:one-qubit-one-rotor-symplectic-group}, and
Corollary~\ref{cor:clifford-symplectic-exact-sequence} gives the Clifford quotient.
\end{proof}

\section{Proofs of the gate-family criteria}

This appendix proves the classification results for momentum-diagonal phase
gates and momentum-dependent qubit gates. Both proofs use conjugation of the
elementary momentum shift. In the first case, this gives a scalar finite-difference
equation; in the second, it gives an operator-valued finite-difference
equation. The remaining Clifford criteria in
Sec.~\ref{sec:gate-families} are proved directly in the main text.

\subsection{Momentum-diagonal phase gates}
\label{app:momentum-diagonal-phase-gate-proof}

\begin{proof}[Proof of Proposition~\ref{prop:momentum-diagonal-phases}]
Let $e_j$ be the $j$th standard basis vector of $\ZZ^r$. Conjugating the
corresponding momentum shift gives
\begin{equation}
  G_uX_R(e_j)G_u^\dagger|\ell\rangle
  =
  \frac{u(\ell+e_j)}{u(\ell)}
  |\ell+e_j\rangle.
\end{equation}
If $G_u$ is Clifford, the coefficient multiplying the shift must, up to a
constant phase, be a character of $\ZZ^r$. Hence there exist
$\rho_j\in\TT$ and $r_j\in\TT^r$ such that
\begin{equation}
  \frac{u(\ell+e_j)}{u(\ell)}
  =
  e^{i\rho_j+ir_j^{\transpose}\ell}.
  \label{eq:momentum-diagonal-phase-gate-recurrence}
\end{equation}

We now compare the value of $u(\ell+e_i+e_j)$ obtained by applying
Eq.~\eqref{eq:momentum-diagonal-phase-gate-recurrence} first in the
$e_i$-direction and then in the $e_j$-direction with the value obtained in
the reverse order. This gives
$
  (r_j)_i=(r_i)_j
$ in $\TT$.
Thus the matrix
$
  R=(r_1,\ldots,r_r)
$
is symmetric.

From the definition of $f_R$ in
Eq.~\eqref{eq:rotor-quadratic-phase},
\begin{equation}
  f_R(\ell+e_j)-f_R(\ell)
  =
  r_j^{\transpose}\ell+R_{jj}.
  \label{eq:quadratic-phase-finite-difference}
\end{equation}
Set
$
  \lambda_j=\rho_j-R_{jj}
$
and define
\begin{equation}
  h(\ell)
  =
  u(\ell)
  e^{-i\lambda^{\transpose}\ell-if_R(\ell)}.
\end{equation}
Eqs.~\eqref{eq:momentum-diagonal-phase-gate-recurrence} and
\eqref{eq:quadratic-phase-finite-difference} imply
$ h(\ell+e_j)=h(\ell)$ for every $\ell\in\ZZ^r $ and every $j.$
Replacing $\ell$ by $\ell-e_j$ gives the same equality for a shift by
$-e_j$. Since any two points of $\ZZ^r$ are connected by such unit shifts,
$h$ is constant on $\ZZ^r$. Writing $h(0)=e^{i\gamma}$ gives
\begin{equation}
  u(\ell)
  =
  e^{i\gamma+i\lambda^{\transpose}\ell+if_R(\ell)}.
  \label{eq:momentum-diagonal-phase-gate-lattice-solution}
\end{equation}

Conversely, if $u$ has the form in
Eq.~\eqref{eq:momentum-diagonal-phase-gate-lattice-solution}, then
$
  G_u
  =
  e^{i\gamma}Z_R(\lambda)Q_R.
$
The gates $Z_R(\lambda)$ and $Q_R$ are pure-rotor Clifford gates by
Sec.~\ref{subsec:pure-rotor-mixed-clifford-gates}. Therefore
$G_u\in\Cl_{0,r}$.
\end{proof}

\subsection{Momentum-dependent qubit gates}
\label{app:momentum-dependent-qubit-gate-proof}

\begin{proof}[Proof of Proposition~\ref{prop:momentum-dependent-qubit-gates}]
For brevity, write $U=\mathcal V$. Conjugating the elementary rotor
shift gives
\begin{equation}
  U(I\otimes X_R(1))U^\dagger = \sum_{\ell\in\ZZ} V_{\ell+1}V_\ell^\dagger \otimes|\ell+1\rangle\langle\ell|.
\end{equation}
If $U$ is Clifford, the right-hand side must be proportional to a hybrid
Weyl operator with momentum shift $1$. Its qubit part must therefore be a fixed
Pauli operator. Its scalar dependence on $\ell$ must be a character of $\ZZ$.
Hence there exist
$c\in\FF_2^{2n}$ and $\mu,\alpha\in\TT$ such that
\begin{equation}
  V_{\ell+1}V_\ell^\dagger
  =
  e^{i\mu+i\alpha\ell}P_c.
  \label{eq:momentum-dependent-qubit-relative-recurrence}
\end{equation}

Applying Eq.~\eqref{eq:momentum-dependent-qubit-relative-recurrence}
successively gives
\begin{equation}
  V_\ell
  =
  e^{i\mu\ell+i\alpha\ell(\ell-1)/2}
  P_c^{\,\ell}V_0.
\end{equation}
For negative $\ell$, the same formula follows by solving
Eq.~\eqref{eq:momentum-dependent-qubit-relative-recurrence} for
$V_\ell$ in terms of $V_{\ell+1}$.

Set $\lambda=\mu-\alpha$.
Then
\begin{equation}
  V_\ell
  =
  e^{i[\lambda\ell+\alpha\ell(\ell+1)/2]}
  P_c^{\,\ell}V_0.
  \label{eq:momentum-dependent-qubit-gates-solution}
\end{equation}
It remains to show that $V_0$ is a qubit Clifford operator.

For every qubit
Pauli operator $P_b$,
\begin{equation}
  U(P_b\otimes I_R)U^\dagger = \sum_{\ell\in\ZZ} V_\ell P_bV_\ell^\dagger \otimes|\ell\rangle\langle\ell|.
\end{equation}
Since $U$ is Clifford, this operator must be proportional to a hybrid Weyl
operator. It preserves each momentum subspace, so its rotor-shift component
must be zero. Evaluating the qubit block at $\ell=0$ shows that
$
  V_0P_bV_0^\dagger
$
is proportional to a Pauli operator. This holds for every $P_b$, and hence
$V_0$ normalizes the qubit Pauli group. Therefore
$V_0\in\Cl_{n,0}$.

Conversely, suppose that
Eq.~\eqref{eq:momentum-dependent-qubit-gates-solution} holds with
$V_0\in\Cl_{n,0}$. Let $Q_{(\alpha)}$ be the one-rotor quadratic gate
corresponding to the $1\times1$ matrix $R=(\alpha)$. Acting on each momentum
subspace gives
\begin{equation}
  \mathcal V = \bigl[I_Q\otimes Z_R(\lambda)Q_{(\alpha)}\bigr]
  \Gamma_{c,1}(V_0\otimes I_R).
\end{equation}
Here
$Z_R(\lambda)Q_{(\alpha)}$ is a pure-rotor Clifford gate,
$\Gamma_{c,1}$ is a mixed Clifford gate, and
$V_0$ is a qubit Clifford gate. Therefore
$\mathcal V\in\Cl_{n,1}$.
\end{proof}

\section{Proofs for universal control}
\label{app:proofs-universal-control}

As in Sec.~\ref{subsec:strong-topology-finite-window}, all strong
closures below are taken in $\UU(\bH)$.  Multiplication and inversion are
continuous in the strong operator topology on $\UU(\bH)$, so the strong
closure of a unitary subgroup is again a subgroup.

\subsection{Finite-window approximation and strong density}

Let $P_L$ and $\bH_L$ be as in
Lemma~\ref{lem:finite-window-strong-density}.

\begin{lemma}[Finite-window unitary approximation]
\label{lem:finite-window-interpolation}
For every $V\in\UU(\bH)$, every finite set of vectors
$\psi_1,\ldots,\psi_s\in\bH$, and every $\delta>0$, there exist
$L$ and $V_L\in\UU(\bH_L)$ such that
\begin{equation}
  \|V_LP_L\psi_j-P_LV\psi_j\|<\delta,
  \qquad
  j=1,\ldots,s,
\end{equation}
and
\begin{equation}
  \|(I-P_L)\psi_j\|<\delta,
  \qquad
  \|(I-P_L)V\psi_j\|<\delta.
  \label{eq:finite-window-tail-bounds}
\end{equation}
\end{lemma}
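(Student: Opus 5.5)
We take $L$ large enough that all the projected tails are small, and then build $V_L$ as a unitary on $\bH_L$ that moves the projected inputs $P_L\psi_j$ close to the projected outputs $P_LV\psi_j$.

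\textbf{Choice of $L$.} Since $P_L\to I$ strongly and the set $\{\psi_j\}\cup\{V\psi_j\}$ is finite, we can pick $L$ such that $\|(I-P_L)\psi_j\|<\epsilon$ and $\|(I-P_L)V\psi_j\|<\epsilon$ for all $j$. Here $\epsilon\le\delta$ will be fixed below. Enlarging $L$ preserves these bounds, because the $P_L$ increase. This gives Eq.~\eqref{eq:finite-window-tail-bounds}.

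\textbf{Reduction to a Gram-matrix comparison.} Set $u_j=P_L\psi_j$ and $w_j=P_LV\psi_j$, both in $\bH_L$. Since $V$ is unitary, $\langle V\psi_i,V\psi_j\rangle=\langle\psi_i,\psi_j\rangle$. Expanding $\psi=P_L\psi+(I-P_L)\psi$ gives
\[
|\langle u_i,u_j\rangle-\langle\psi_i,\psi_j\rangle|\le 2\epsilon\max_k\|\psi_k\|+\epsilon^2 ,
\]
and the same bound holds for the $w$'s. Hence the Gram matrices $G_u$ and $G_w$ differ entrywise by $O(\epsilon)$. The task is therefore: given two finite families in a finite-dimensional space with nearly equal Gram matrices, find a unitary mapping one family approximately onto the other.

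\textbf{Constructing the unitary.} Pass to a linearly independent subfamily of $\{\psi_j\}$ and write the others as fixed linear combinations of it. The same combinations relate the corresponding $V\psi_j$, so approximating on the basis subfamily suffices, at the cost of a constant depending only on the coefficients. For an independent family, the Gram matrix $G=(\langle\psi_i,\psi_j\rangle)$ is positive definite. For small $\epsilon$, $G_u$ and $G_w$ are therefore positive definite and close to $G$.

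Form the orthonormal systems $\tilde u_i=\sum_k u_k (G_u^{-1/2})_{ki}$ and $\tilde w_i=\sum_k w_k (G_w^{-1/2})_{ki}$ in $\bH_L$. Define $V_L$ by $V_L\tilde u_i=\tilde w_i$, and extend it unitarily from $\operatorname{span}\{\tilde u_i\}^\perp$ onto $\operatorname{span}\{\tilde w_i\}^\perp$ inside $\bH_L$. These complements have equal dimension because $\dim\bH_L<\infty$.

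This map gives
\[
V_Lu_j=\sum_i\tilde w_i (G_u^{1/2})_{ij},
\qquad
w_j=\sum_i\tilde w_i (G_w^{1/2})_{ij},
\]
so
\[
\|V_Lu_j-w_j\|\le\|G_u^{1/2}-G_w^{1/2}\| .
\]
The right-hand side tends to $0$ as $\epsilon\to0$ by continuity of the matrix square root on positive definite matrices.

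\textbf{Main obstacle.} The hardest point is linear dependence and near-degeneracy among the $\psi_j$, where the square roots are poorly conditioned. Reducing to an independent subfamily handles this. The lowest eigenvalue of $G$ is fixed and positive, so we can choose $\epsilon$ small relative to it, which keeps the estimates uniform. Finally we shrink $\epsilon$ until every bound is below $\delta$ and enlarge $L$ accordingly.
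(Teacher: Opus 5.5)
Your proof is correct. It follows the same overall plan as the paper: make the tails small, correct a near-isometry on $\bH_L$ to a unitary, and extend by a dimension count. The correction step, however, is implemented differently.

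The paper works basis-free on $\mathcal S=\operatorname{span}\{\psi_1,\ldots,\psi_s\}$. Uniform convergence of $P_L$ on the unit spheres of $\mathcal S$ and $V\mathcal S$ makes $P_L$ injective on both. Hence $T_L(P_Lx)=P_LVx$ is a well-defined map $P_L\mathcal S\to P_LV\mathcal S$ with $\|T_L^\dagger T_L-I\|\to0$. The unitary factor $Q_L$ of its polar decomposition then satisfies $\|Q_L-T_L\|=\|\,|T_L|-I\,\|\to0$.

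You instead orthonormalize the two projected families separately by the symmetric (L\"owdin) procedure and send one frame onto the other. In the frames $\tilde u_i,\tilde w_i$, your $V_L$ has the identity matrix, while $T_L$ has matrix $G_w^{1/2}G_u^{-1/2}$, which need not be positive. So your unitary generally differs from the paper's $Q_L$, but both are legitimate unitary corrections of the same near-isometry.

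The paper's route buys brevity and basis independence. Because $T_L$ is defined on the whole projected span, the bound $\|(Q_L-T_L)P_L\psi_j\|\le\|Q_L-T_L\|\,\|\psi_j\|$ covers linearly dependent $\psi_j$ directly. It needs no reduction to an independent subfamily and no coefficient-dependent constants.

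Your route buys an explicit rate. Since $\langle P_L\psi_i,P_L\psi_j\rangle-\langle\psi_i,\psi_j\rangle=-\langle(I-P_L)\psi_i,(I-P_L)\psi_j\rangle$, your Gram matrices are in fact $O(\epsilon^2)$-close to $G$, sharper than your stated $O(\epsilon)$. Combined with $\|A^{1/2}-B^{1/2}\|\le\|A-B\|^{1/2}$ for positive semidefinite $A,B$, this gives $\|V_Lu_j-w_j\|=O(\epsilon)$ on the independent subfamily. There $\lambda_{\min}(G)$ enters only to ensure that $G_u^{-1/2}$ and $G_w^{-1/2}$ exist.
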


\begin{proof}
Let
$
  \mathcal S=\operatorname{span}\{\psi_1,\ldots,\psi_s\}.
$
Since $\mathcal S$ and $V\mathcal S$ are finite-dimensional and $P_L\to I$ strongly, the
convergence is uniform on the unit spheres of both subspaces. Hence, for
sufficiently large $L$, the restrictions of $P_L$ to $\mathcal S$ and
$V\mathcal S$ are
injective.

Define
\begin{equation}
  T_L:P_L\mathcal S\longrightarrow P_LV\mathcal S,
  \qquad
  T_L(P_Lx)=P_LVx.
\end{equation}
This map is well defined because the restriction of $P_L$ to $\mathcal S$ is
injective. The uniform convergence on $\mathcal S$ and $V\mathcal S$ implies
\begin{equation}
  \left\|
    T_L^\dagger T_L-I_{P_L\mathcal S}
  \right\|
  \longrightarrow0.
\end{equation}

Let
$
  T_L=Q_L|T_L|
$
be the polar decomposition of $T_L$. Since $T_L$ is an invertible map
from $P_L\mathcal S$ onto $P_LV\mathcal S$, the operator $Q_L$ is unitary between these
two subspaces. Moreover,
\begin{equation}
  \|Q_L-T_L\|
  =
  \bigl\||T_L|-I_{P_L\mathcal S}\bigr\|
  \longrightarrow0.
\end{equation}

The subspaces $P_L\mathcal S$ and $P_LV\mathcal S$ have the same dimension. Their
orthogonal complements in $\bH_L$ therefore also have the same dimension,
so $Q_L$ extends to a unitary
$
  V_L\in\UU(\bH_L).
$
For each $j$,
$
  \|V_LP_L\psi_j-P_LV\psi_j\|
  =
  \|(Q_L-T_L)P_L\psi_j\|,
$
which is smaller than $\delta$ for sufficiently large $L$. Increasing
$L$ further if necessary also gives
Eq.~\eqref{eq:finite-window-tail-bounds}.
\end{proof}

\begin{proof}[Proof of Lemma~\ref{lem:finite-window-strong-density}]
Fix $V\in\UU(\bH)$, vectors $\psi_1,\ldots,\psi_s\in\bH$, and
$\epsilon>0$. Set
\begin{equation}
  M=\max\{1,\|\psi_1\|,\ldots,\|\psi_s\|\},
  \qquad
  \delta=\frac{\epsilon}{8M}.
\end{equation}
Choose $L$ and $V_L\in\UU(\bH_L)$ from
Lemma~\ref{lem:finite-window-interpolation}. By the hypothesis of
Lemma~\ref{lem:finite-window-strong-density}, there exists
$W\in\overline G^{\,\mathrm s}$ preserving $\bH_L$ and satisfying
$\|W|_{\bH_L}-V_L\|<\delta$. Since $W$ and $V$ are unitary,
\begin{equation}
\begin{aligned}
  \|(W-V)\psi_j\|
  &\leq \|(I-P_L)\psi_j\|
  +\|WP_L\psi_j-V_LP_L\psi_j\|
  +\|V_LP_L\psi_j-P_LV\psi_j\|
  +\|(I-P_L)V\psi_j\|\\
  &<3\delta+\delta\|\psi_j\|
  \leq4\delta M
  =\frac{\epsilon}{2}.
\end{aligned}
\end{equation}
Because $W\in\overline G^{\,\mathrm s}$, there exists $g\in G$ such that
$\|(g-W)\psi_j\|<\epsilon/2$ for every $j$. Therefore
$\|(g-V)\psi_j\|<\epsilon$ for every $j$, proving that $G$ is strongly
dense in $\UU(\bH)$.
\end{proof}

\subsection{Spectral conditions for single-rotor universal control}

For $H_{\mathrm b}$ in
Eq.~\eqref{eq:biased-rotor-control-hamiltonian}, the domain in the
momentum basis is
\begin{equation}
  \mathcal D(H_{\mathrm b}) = \left\{ \psi=\sum_{\ell\in\ZZ}\psi_\ell|\ell\rangle: \sum_{\ell\in\ZZ}
  |\ell^2+\xi\ell|^2|\psi_\ell|^2<\infty \right\}.
\end{equation}
The control operator $\cos\hat\theta$ is bounded and self-adjoint, with matrix
elements
\begin{equation}
  \langle k|\cos\hat\theta|\ell\rangle
  =\frac{1}{2}\left(\delta_{k,\ell+1}+\delta_{k,\ell-1}\right).
\end{equation}
Thus the momentum states form a complete eigenbasis of the drift, and
the control couples only neighboring momentum states.

To match the notation of
Ref.~\cite{boscain2012spectral}, set
$
  A_{\mathrm c}=-iH_{\mathrm b},
  B_{\mathrm c}=-i\cos\hat\theta,
$
and take the control values in a fixed interval $[0,\delta]$.
Since $\cos\hat\theta$ is bounded and self-adjoint,
$H_{\mathrm b}+u\cos\hat\theta$ is self-adjoint on
$\mathcal D(H_{\mathrm b})$ for every $u\in[0,\delta]$.
Its restriction to finite linear combinations of momentum states has
this operator as its self-adjoint closure. These facts verify the basis
and domain assumptions of the cited controllability theorem
\cite{reed1980functional,boscain2012spectral}.

The drift eigenvalues are
$
  \varepsilon_\ell=\ell^2+\xi\ell.
$
If $\varepsilon_\ell=\varepsilon_k$, then
$
  (\ell-k)(\ell+k+\xi)=0.
$
For distinct integers $\ell$ and $k$, the second factor cannot vanish
because $\xi\notin\ZZ$. Hence the spectrum is nondegenerate.

The control couples the transitions
$
  |\ell\rangle
  \longleftrightarrow
  |\ell+1\rangle,
$
whose transition frequencies are
$
  |\varepsilon_{\ell+1}-\varepsilon_\ell|
  =
  |2\ell+1+\xi|.
$
Suppose that two such frequencies are equal:
$
  |2\ell+1+\xi|
  =
  |2k+1+\xi|.
$
Either
$
  2\ell+1+\xi
  =
  2k+1+\xi,
$
which gives $\ell=k$, or
$
  2\ell+1+\xi
  =
  -(2k+1+\xi),
$
which requires
$
  \ell+k+1+\xi=0.
$
The latter is impossible because $\xi\notin\ZZ$. Therefore the
nearest-neighbor transition frequencies are pairwise distinct.

These nearest-neighbor transitions connect all momentum states
$|\ell\rangle$, $\ell\in\ZZ$, and hence form a nonresonant connectedness
chain in the sense of
Ref.~\cite[Definition~2.5]{boscain2012spectral}. The nondegenerate
spectrum also makes the condition on degenerate energy levels automatic.
Thus the hypotheses of Theorem~2.11 in
Ref.~\cite{boscain2012spectral} are satisfied. The conclusion of that
theorem is approximate simultaneous controllability in the sense of
Ref.~\cite[Definition~2.10]{boscain2012spectral}, namely approximation of
a prescribed unitary on any finite set of states. The phase-adjustment
argument uses the unbounded-spectrum case of
Ref.~\cite[Lemma~6.3]{boscain2012spectral}, and the present drift spectrum
$\varepsilon_\ell=\ell^2+\xi\ell$ is unbounded.

\section{Gauge-model calculations}

\subsection{Gauge invariance and the Pauli form of hopping}
\label{app:gauge-invariance-pauli-hopping}

\paragraph{Gauge invariance.}
For the oriented link $e=(a,b)$, temporarily omit the factor
$\zeta_{ab}$ and define the forward hopping operator
\begin{equation}
  T_{ab}
  =
  \sigma_a^+S_{ab}U_{ab}\sigma_b^-.
\end{equation}
The required commutators are
\begin{equation}
  [\hat n_a,\sigma_a^+]=\sigma_a^+, \qquad [\hat n_b,\sigma_b^-]=-\sigma_b^-, \qquad [E_{ab},U_{ab}]=U_{ab}.
\end{equation}
By construction, the Jordan--Wigner string $S_{ab}$ contains neither
endpoint and therefore commutes with $\hat n_a$ and $\hat n_b$. The
background charge $b_n$ and external charge $q_n^{\mathrm{ext}}$ are scalar
terms and do not contribute to the commutators.

The link $e=(a,b)$ enters the Gauss-law generator with sign $+1$ at $a$
and sign $-1$ at $b$. Hence
\begin{equation}
  \begin{split}
    [G_a,T_{ab}]
    &=
    [E_{ab}-\hat n_a,T_{ab}]
    =
    T_{ab}-T_{ab}
    =
    0,
    \\
    [G_b,T_{ab}]
    &=
    [-E_{ab}-\hat n_b,T_{ab}]
    =
    -T_{ab}+T_{ab}
    =
    0.
  \end{split}
\end{equation}
Every other Gauss-law generator commutes with $T_{ab}$, and the same
calculation applies to $T_{ab}^\dagger$. Therefore $[G_n,H_{K,ab}]=0$ for every $n$.

The electric and mass terms are functions of the mutually commuting
operators $E_e$ and $\hat n_n$, and hence commute with every $G_n$. For
the magnetic term, a plaquette operator changes the link fluxes by the
oriented boundary of the plaquette. The incidence and boundary matrices
satisfy $DB^{\transpose}=0$,
so the plaquette operators also commute with every Gauss-law generator.
Consequently,
$
  [G_n,H]=0
$
for every $n.$

\paragraph{Pauli form of the hopping term.}
With the occupation convention in
Eq.~\eqref{eq:gauge-matter-occupation},
$
  \sigma^+
  =
  \frac{X+iY}{2},
  \sigma^-
  =
  \frac{X-iY}{2}.
$
Substituting these expressions together with
$\zeta_{ab}=e^{i\delta_{ab}}$ and
$U_{ab}=e^{i\hat\theta_{ab}}$ gives
\begin{equation}
  \zeta_{ab}T_{ab} +\zeta_{ab}^*T_{ab}^\dagger = \frac{1}{2}\Big[ \cos(\hat\theta_{ab}+\delta_{ab})
  (X_aS_{ab}X_b+Y_aS_{ab}Y_b) + \sin(\hat\theta_{ab}+\delta_{ab}) (X_aS_{ab}Y_b-Y_aS_{ab}X_b) \Big].
\end{equation}
Multiplying by $\kappa/2$ gives
Eq.~\eqref{eq:gauge-hopping-pauli-form}.

\subsection{Full-space hopping realization}
\label{app:gauge-exact-full-space-hopping-realization}

Consider first the controlled shift using the qubit at site $a$ as the
control:
$
  V_{ab}^{(a)}
  =
  |0\rangle\langle0|_a\otimes I_e
  +
  |1\rangle\langle1|_a\otimes U_e^\dagger.
$
With the occupation convention used here,
$
  \sigma_a^+=|0\rangle\langle1|_a,
  \sigma_a^-=|1\rangle\langle0|_a.
$
Direct multiplication gives
\begin{equation}
  V_{ab}^{(a)} \sigma_a^+ V_{ab}^{(a)\dagger} = \sigma_a^+\otimes U_e, \qquad V_{ab}^{(a)} \sigma_a^-
  V_{ab}^{(a)\dagger} = \sigma_a^-\otimes U_e^\dagger.
\end{equation}

Alternatively, using the qubit at site $b$ as the control gives
$
  V_{ab}^{(b)}
  =
  |0\rangle\langle0|_b\otimes I_e
  +
  |1\rangle\langle1|_b\otimes U_e,
$
and
\begin{equation}
  V_{ab}^{(b)} \sigma_b^- V_{ab}^{(b)\dagger} = \sigma_b^-\otimes U_e, \qquad V_{ab}^{(b)} \sigma_b^+
  V_{ab}^{(b)\dagger} = \sigma_b^+\otimes U_e^\dagger.
\end{equation}

The Jordan--Wigner string $S_{ab}$ acts only on the sites between $a$ and
$b$ in the chosen ordering. It therefore commutes with both
$V_{ab}^{(a)}$ and $V_{ab}^{(b)}$. Substitution into
Eq.~\eqref{eq:gauge-qubit-exchange-hamiltonian} gives
Eq.~\eqref{eq:gauge-full-space-hopping-conjugation} for either choice of
control qubit.

For $\delta_{ab}=0$, define
$
  P_1=X_aS_{ab}X_b
$
and
$  
  P_2=Y_aS_{ab}Y_b.
$
At sites $a$ and $b$, the Pauli operators $X$ and $Y$ anticommute. The
two minus signs cancel, and the Pauli factors on all other sites are
identical. Hence
$
  [P_1,P_2]=0.
$
The qubit exchange Hamiltonian therefore becomes
$
  H_{ab}^{(q)}(0)
  =
  \frac{\kappa}{4}(P_1+P_2).
$

Let
$
  R_a(\delta)
  =
  e^{i\delta Z_a/2}.
$
Its action on the qubit raising and lowering operators is
\begin{equation}
  R_a(\delta) \sigma_a^+ R_a(\delta)^\dagger = e^{i\delta}\sigma_a^+, \qquad R_a(\delta) \sigma_a^-
  R_a(\delta)^\dagger = e^{-i\delta}\sigma_a^-.
\end{equation}
Consequently,
$
  H_{ab}^{(q)}(\delta)
  =
  R_a(\delta)
  H_{ab}^{(q)}(0)
  R_a(\delta)^\dagger.
$
Since $P_1$ and $P_2$ commute,
\begin{equation}
  e^{-i\tau H_{ab}^{(q)}(\delta_{ab})} = R_a(\delta_{ab}) e^{-i\tau\kappa P_1/4} e^{-i\tau\kappa P_2/4}
  R_a(\delta_{ab})^\dagger,
\end{equation}
which proves
Eq.~\eqref{eq:gauge-qubit-exchange-decomposition}.

\subsection{Projected and cyclic shifts}
\label{app:gauge-projected-and-cyclic-shifts}

For the projector $\Pi_L$ in
Eq.~\eqref{eq:gauge-flux-projector}, the projected shift acts as
\begin{equation}
  U_{e,L}|\ell\rangle
  =
  \begin{cases}
    |\ell+1\rangle,
      & -L\leq\ell<L,\\
    0,
      & \ell=L.
  \end{cases}
\end{equation}
Its adjoint acts as
\begin{equation}
  U_{e,L}^\dagger|\ell\rangle
  =
  \begin{cases}
    |\ell-1\rangle,
      & -L<\ell\leq L,\\
    0,
      & \ell=-L.
  \end{cases}
\end{equation}
Computing
$U_{e,L}^\dagger U_{e,L}$ and
$U_{e,L}U_{e,L}^\dagger$
gives Eq.~\eqref{eq:gauge-projected-shift-defect}.

The operator $\widetilde U_{e,L}$ in
Eq.~\eqref{eq:gauge-cyclic-completion} cyclically permutes the
$2L+1$ momentum states and is therefore unitary. For every nonboundary
momentum state,
\begin{equation}
  [E_{e,L},\widetilde U_{e,L}]|\ell\rangle = \widetilde U_{e,L}|\ell\rangle, \qquad -L\leq\ell<L.
\end{equation}
For the boundary state $|L\rangle$,
\begin{equation}
  [E_{e,L},\widetilde U_{e,L}]|L\rangle
  =
  -2L|-L\rangle.
\end{equation}
Since
$
  \left(
    \widetilde U_{e,L}
    -(2L+1)|-L\rangle\langle L|
  \right)|L\rangle
  =
  -2L|-L\rangle,
$
these relations give
Eq.~\eqref{eq:gauge-cyclic-commutator-defect}.

A finite-dimensional controlled shift defined using
$\widetilde U_{e,L}$ agrees with the full rotor shift for
$-L\leq\ell<L$, but maps the boundary state according to
$
  |L\rangle\longmapsto|-L\rangle.
$

\subsection{Finite Gauss-law basis and numerical details}
\label{app:gauge-finite-gauss-law-basis-numerical-details}

For the geometry in Fig.~\ref{fig:gauge-two-plaquette-geometry}, the sites are
numbered as $1=(0,0)$, $2=(1,0)$, $3=(2,0)$, $4=(0,1)$, $5=(1,1)$, and
$6=(2,1)$. The oriented links, listed from tail to head, are
$e_1=(1,2)$, $e_2=(2,3)$, $e_3=(4,5)$, $e_4=(5,6)$, $e_5=(1,4)$,
$e_6=(2,5)$, and $e_7=(3,6)$. The Jordan--Wigner snake order is
$1\prec2\prec3\prec6\prec5\prec4$. With sites and links in these orders,
the incidence matrix is
\begin{equation}
  D=
  \begin{pmatrix}
     1& 0& 0& 0& 1& 0& 0\\
    -1& 1& 0& 0& 0& 1& 0\\
     0&-1& 0& 0& 0& 0& 1\\
     0& 0& 1& 0&-1& 0& 0\\
     0& 0&-1& 1& 0&-1& 0\\
     0& 0& 0&-1& 0& 0&-1
  \end{pmatrix}.
\end{equation}
Ordering the plaquettes as $(p_1,p_2)$ gives
\begin{equation}
  B=
  \begin{pmatrix}
    1&0&-1&0&-1&1&0\\
    0&1&0&-1&0&-1&1
  \end{pmatrix}
  =
  \begin{pmatrix}
    b_{p_1}^{\transpose}\\
    b_{p_2}^{\transpose}
  \end{pmatrix},
  \qquad
  DB^{\transpose}=0.
\end{equation}

Set $q_n^{\mathrm{ext}}=0$. For occupation eigenvalues
$\mathbf o=(o_n)_n$, defined by
$
  \hat n_n|\mathbf o;E\rangle
  =
  o_n|\mathbf o;E\rangle,
$
the finite physical basis is
\begin{equation}
  \mathcal B_L = \left\{ |\mathbf o;E\rangle: o_n\in\{0,1\},\quad |E_e|\leq L,\quad DE=\mathbf o-\mathbf b
  \right\}.
\end{equation}
For each allowed $\mathbf o$, choose one integer solution
$E_0(\mathbf o)$ of $DE_0=\mathbf o-\mathbf b$.  On the open
$2\times1$ strip, every solution can be written as
\begin{equation}
  E=E_0(\mathbf o)+B^{\transpose}h, \qquad h\in\ZZ^2, \qquad |E_e|\leq L.
\end{equation}
The two integer components of $h$ are enumerated subject to the final flux
bounds.

The diagonal terms act by
\begin{equation}
  H_E|\mathbf o;E\rangle
  =\frac{g^2}{2}\left(\sum_eE_e^2\right)|\mathbf o;E\rangle,
\end{equation}
\begin{equation}
  H_M|\mathbf o;E\rangle
  =m_0\left(\sum_n\eta_no_n\right)|\mathbf o;E\rangle.
\end{equation}
Let $b_p=(B_{pe})_e$ denote the oriented flux vector around plaquette $p$.
The magnetic term acts as
\begin{equation}
  H_B|\mathbf o;E\rangle ={}\frac{N_p}{g^2a_{\mathrm{lat}}^2}|\mathbf o;E\rangle -\frac{1}{2g^2a_{\mathrm{lat}}^2}\sum_p \left(
  |\mathbf o;E+b_p\rangle +|\mathbf o;E-b_p\rangle \right),
  \label{eq:gauge-magnetic-matrix-action}
\end{equation}
where a state is omitted when any final flux lies outside the window.

For $e=(a,b)$, let $I_{ab}$ be the sites strictly between the two endpoints
in the Jordan--Wigner ordering and set
\begin{equation}
  s_{ab}(\mathbf o)
  =(-1)^{\sum_{m\in I_{ab}}o_m}.
\end{equation}
When $o_a=0$ and $o_b=1$, the forward hopping rule at
$\zeta_{ab}=1$ is
\begin{equation}
  H_{K,ab}|\mathbf o;E\rangle
  \supset
  \frac{\kappa}{2}s_{ab}(\mathbf o)
  |\mathbf o^{a\leftarrow b};E+\delta_e\rangle.
\end{equation}
Here $\delta_e$ is the unit flux vector on link $e$, and
$\mathbf o^{a\leftarrow b}$ moves one matter occupation from $b$ to $a$.
For $o_a=1$ and $o_b=0$, the reverse rule is
\begin{equation}
  H_{K,ab}|\mathbf o;E\rangle
  \supset
  \frac{\kappa}{2}s_{ab}(\mathbf o)
  |\mathbf o^{b\leftarrow a};E-\delta_e\rangle.
\end{equation}
Terms whose final flux leaves the window are omitted.  The two rules are
adjoints and preserve the finite Gauss-law basis.

The finite-flux cutoff calculation uses the parameters and cutoffs listed in
Sec.~\ref{subsec:finite-flux-benchmarks}.  For
$L=1,2,3,4,6$, the dimensions of the physical basis are
$90,302,634,1086,$ and $2350$.  The $L=1$ matrix was diagonalized with
\texttt{scipy.linalg.eigh}; for $L\geq2$ we used the shift-invert solver
\texttt{scipy.sparse.linalg.eigsh} with tolerance $2\times10^{-11}$.  Over
the full scan, the largest residual of the two computed eigenpairs satisfies
\begin{equation}
  \max_k\lVert H\psi_k-\mathcal E_k\psi_k\rVert_2
  \leq 6.51\times10^{-10}.
\end{equation}
The numerical environment used Python 3.11.4, NumPy 2.4.6, SciPy 1.17.1,
and Matplotlib 3.10.9.

The processed data also record the link-averaged boundary occupation
\begin{equation}
  \overline p_{\partial L}
  =\frac{1}{N_\ell}
   \sum_e\left\langle\mathbf 1_{\{|E_e|=L\}}\right\rangle.
  \label{eq:gauge-boundary-occupation-diagnostic}
\end{equation}
Its maximum over the scan decreases from $1.05\times10^{-2}$ at $L=4$ to
$6.72\times10^{-4}$ at $L=6$.  Taking $L=6$ as the largest available cutoff,
the maximum absolute differences between $L=4$ and $L=6$ are
$2.25\times10^{-2}$ for $\overline W$, $1.57\times10^{-1}$ for $\Delta$,
and $5.79\times10^{-1}$ for
$\langle E^2\rangle_{\mathrm{link}}$; all three maxima occur at
  $g^{-2}=10$.  On the sampled interval $g^{-2}\leq1$, the corresponding
  maxima are $1.34\times10^{-8}$, $4.69\times10^{-10}$, and
  $7.23\times10^{-9}$.

The plaquette expectation values use the same projected shifts as
Eq.~\eqref{eq:gauge-magnetic-matrix-action}; the remaining observables are
diagonal in the flux basis.  These data give the three panels in
Fig.~\ref{fig:gauge-ground-state-cutoff}.

For Fig.~\ref{fig:gauge-matter-dynamics}, the initial state is
$|\mathbf o=\mathbf b;E=\mathbf0\rangle$. The projected Hamiltonian was
exponentiated on the finite Gauss-law basis for $L=1,2,3,4$, $g^{-2}=1$,
and $0\leq t\leq8$ using $241$ equally spaced time points. Across all
reported calculations, the maximum norm error was $5.67\times10^{-13}$ and
the maximum total-matter-number error was $1.70\times10^{-12}$.

\begin{table}[H]
  \centering
  \caption{Real-time cutoff comparison on the $241$ equally spaced sample
  times in $0\leq t\leq8$. The first six rows compare the successive cutoffs
  $L=2,3,4$; the final two rows report the largest link-averaged boundary
  occupation at $L=3$ and $L=4$.}
  \begin{tabular}{lc}
    \toprule
    Diagnostic & Maximum over the sampled times \\
    \midrule
    $|\overline W_{2}(t)-\overline W_{3}(t)|$
      & $1.33\times10^{-2}$ \\
    $|C_{12,2}(t)-C_{12,3}(t)|$
      & $1.18\times10^{-2}$ \\
    $|I_{M,2}(t)-I_{M,3}(t)|$
      & $4.72\times10^{-3}$ \\
    $|\overline W_{3}(t)-\overline W_{4}(t)|$
      & $9.99\times10^{-5}$ \\
    $|C_{12,3}(t)-C_{12,4}(t)|$
      & $2.28\times10^{-4}$ \\
    $|I_{M,3}(t)-I_{M,4}(t)|$
      & $3.54\times10^{-5}$ \\
    $\overline p_{\partial L}(t)$ at $L=3$
      & $7.60\times10^{-5}$ \\
    $\overline p_{\partial L}(t)$ at $L=4$
      & $2.84\times10^{-7}$ \\
    \bottomrule
  \end{tabular}
\end{table}

\clearpage
\section{Calculations for rotor quantum phase estimation}

\subsection{Fej\'er kernel and weak convergence for direct-angle readout}
\label{app:qpe-fejer-kernel-weak-convergence-direct-angle-readout}

For $\Delta\notin2\pi\ZZ$, the finite geometric series satisfies
\begin{equation}
  \sum_{\ell=0}^{d-1}e^{-i\ell\Delta} = e^{-i(d-1)\Delta/2} \frac{\sin(d\Delta/2)} {\sin(\Delta/2)}.
  \label{eq:app-qpe-finite-geometric-sum}
\end{equation}

The angle-state convention in
Sec.~\ref{subsec:hybrid-register-phase-space} gives
\begin{equation}
  \langle\theta|\chi_d(\varphi)\rangle_R = \frac{1}{\sqrt{2\pi d}} \sum_{\ell=0}^{d-1}
  e^{i\ell(\theta-\varphi)}.
\end{equation}
Taking the squared modulus and using
Eq.~\eqref{eq:app-qpe-finite-geometric-sum} gives
Eq.~\eqref{eq:qpe-direct-angle-readout-distribution}.

The density is normalized because
\begin{equation}
  \int_0^{2\pi}
  e^{i(\ell-k)\theta}\,d\theta
  =
  2\pi\delta_{\ell k}.
\end{equation}
Therefore
\begin{equation}
  \int_0^{2\pi} p_d(\theta\mid\varphi)\,d\theta = \frac{1}{d} \sum_{\ell,k=0}^{d-1}
  \delta_{\ell k} = 1.
\end{equation}

The numerator of
Eq.~\eqref{eq:qpe-direct-angle-readout-distribution} vanishes when
$
  \theta-\varphi
  =
  \frac{2\pi m}{d}
  \pmod{2\pi},
  m\in\ZZ.
$
The closest nonzero choices are $m=\pm1$, so the first zeros occur at
circular distance
$
  \frac{2\pi}{d}
$
from the central maximum at $\theta=\varphi$.

Moreover,
\begin{equation}
  2\pi
  p_d(\theta\mid\varphi)
  =
  K_{d-1}(\theta-\varphi),
\end{equation}
where $K_{d-1}$ is the periodic Fej\'er kernel of order $d-1$.
A standard property of the Fej\'er kernels gives, for every continuous
$2\pi$-periodic function $f$,
\begin{equation}
  \lim_{d\to\infty} \int_0^{2\pi} f(\theta) p_d(\theta\mid\varphi)\,d\theta = f(\varphi).
\end{equation}
Thus the probability distribution becomes concentrated at
$\theta=\varphi$ as $d$ increases.

\subsection{Finite-support probe optimization}
\label{app:qpe-finite-support-probe-optimization}

Let
\begin{equation}
  \Pi_L:=\sum_{\ell=-L}^{L}|\ell\rangle\langle\ell|,
  \qquad
  X_{R,L}:=\Pi_LX_R(1)\Pi_L.
\end{equation}
Conjugation by $Z_R(\phi)$ gives
\begin{equation}
  Z_R(\phi)^\dagger X_{R,L}Z_R(\phi)=e^{-i\phi}X_{R,L}.
\end{equation}
The phase of $\langle\eta|X_{R,L}|\eta\rangle$ can therefore be removed
without changing the momentum support. With both maxima taken over normalized
states in $\Pi_L\bH_{0,1}$,
\begin{equation}
  \max_{\|\eta\|=1}
  \left|\langle\eta|X_{R,L}|\eta\rangle\right|
  =
  \max_{\|\eta\|=1}
  \left\langle\eta\left|
    \frac{X_{R,L}+X_{R,L}^\dagger}{2}
  \right|\eta\right\rangle.
\end{equation}
In the ordered basis $|-L\rangle,\ldots,|L\rangle$, the Hermitian operator
on the right is the path matrix with $1/2$ on its first off-diagonals. Its
largest eigenvalue and a normalized eigenvector are
\begin{equation}
  \cos\!\left(\frac{\pi}{2(L+1)}\right),
  \qquad
  \frac{1}{\sqrt{L+1}}
  \sum_{\ell=-L}^{L}
  \cos\!\left(\frac{\pi\ell}{2(L+1)}\right)|\ell\rangle.
\end{equation}
This proves the fixed-support optimality of
Eq.~\eqref{eq:qpe-cosine-window-probe} and the value of
$R_{\mathrm{cos}}$ used in Eq.~\eqref{eq:qpe-cosine-circular-variance}.

Set $\alpha=\pi/[2(L+1)]$. The energy of the cosine-window probe is
\begin{equation}
  E_{\mathrm{cos}}
  =
  \frac{1}{L+1}
  \sum_{\ell=-L}^{L}\ell^2\cos^2(\alpha\ell).
\end{equation}
Using $\cos^2t=(1+\cos2t)/2$ and differentiating the Dirichlet kernel gives
\begin{equation}
  \sum_{\ell=-L}^{L}\ell^2
  =
  \frac{L(L+1)(2L+1)}{3},
  \qquad
  \sum_{\ell=-L}^{L}\ell^2\cos(2\alpha\ell)
  =
  (L+1)\bigl[L+1-\csc^2\alpha\bigr].
\end{equation}
Substitution gives Eq.~\eqref{eq:qpe-cosine-energy-exact}. The expansion
\begin{equation}
  \csc^2\alpha
  =
  \alpha^{-2}+\frac{1}{3}+\frac{\alpha^2}{15}+O(\alpha^4)
\end{equation}
then gives Eq.~\eqref{eq:qpe-cosine-energy-asymptotic}, and combining this
result with Eq.~\eqref{eq:qpe-cosine-circular-variance} proves
Eq.~\eqref{eq:qpe-cosine-energy-scaling}.

\subsection{Fixed-energy probe asymptotics}
\label{app:qpe-fixed-energy-probe-asymptotics}

For large $\lambda$, the ground state of
Eq.~\eqref{eq:qpe-mathieu-hamiltonian} is localized near $\theta=0$. Expanding
the potential gives
\begin{equation}
  -\lambda\cos\theta
  =
  -\lambda+\frac{\lambda}{2}\theta^2+O(\lambda\theta^4),
\end{equation}
so the leading local Hamiltonian is
\begin{equation}
  H_\lambda+\lambda
  \simeq
  \hat\ell^2+\frac{\lambda}{2}\theta^2.
\end{equation}
The ground-state variances in this local quadratic approximation satisfy
\begin{equation}
  \langle\hat\ell^2\rangle
  \sim
  \sqrt{\frac{\lambda}{8}},
  \qquad
  \langle\theta^2\rangle
  \sim
  \frac{1}{\sqrt{2\lambda}}.
\end{equation}
Writing $E_R=\langle\hat\ell^2\rangle$ gives
$\lambda\sim8E_R^2$ and $\langle\theta^2\rangle\sim1/(4E_R)$. Since
$1-\cos\theta\sim\theta^2/2$ on the localized ground state,
\begin{equation}
  V_\star(E_R)
  \sim
  \frac{1}{8E_R},
  \qquad
  \Delta_\star
  \sim
  \frac{1}{2\sqrt{E_R}}.
\end{equation}
The localization width tends to zero as $\lambda$ grows, so periodic-boundary
corrections do not change the leading terms.

\clearpage

\section{Angle--momentum construction of the rotor QFT}

\subsection{Code-space realization of the mixed angle--momentum phase}
\label{app:qft-mixed-angle-momentum-realization}

Let
$\overline\psi_x(\theta):=\langle\theta|\overline x\rangle$ and
$\Pi_A:=\sum_{x=0}^{d-1}|\overline x\rangle
\langle\overline x|$. Choose pairwise disjoint measurable angular cores
$C_x$ and a bounded real periodic function $g$ satisfying
$g(\theta)=x$ on $C_x$. Write $M_g$ for multiplication by $g$ and define
\begin{equation}
  \eta_x:=\int_{\mathbb T\setminus C_x}|\overline\psi_x(\theta)|^2\,d\theta,
  \qquad
  \Delta_x:=\operatorname*{ess\,sup}_{\theta\notin C_x}|g(\theta)-x|.
\end{equation}
The $x$th column of
$(M_g-D_A)\Pi_A$ is
$(g-x)\overline\psi_x$. The operator norm is bounded by the
Hilbert--Schmidt norm, and hence
\begin{equation}
  \delta_g
  :=\bigl\|(M_g-D_A)\Pi_A\bigr\|
  \leq\left(\sum_{x=0}^{d-1}\Delta_x^2\eta_x\right)^{1/2}.
\end{equation}
Thus the error is controlled by the weight of the orthonormalized
codewords outside their assigned cores. If $0\leq g\leq d-1$, then
$\delta_g\leq(d-1)(\sum_x\eta_x)^{1/2}$.

Let
$\Pi:=\Pi_{M,j}\otimes\Pi_{A,k}$.
The Duhamel formula and
$\|D_M\|=d-1$ give
\begin{equation}
  \bigl\|[e^{i\beta D_{M,j}\otimes M_{g,k}}
  -\Lambda_{jk}(\beta)]\Pi\bigr\|
  \leq |\beta|(d-1)\delta_g.
  \label{eq:app-qft-mixed-phase-multiplication-error}
\end{equation}
Since the logical operation preserves the code space, the same estimate
bounds leakage:
\begin{equation}
  \bigl\|(I-\Pi)e^{i\beta D_{M,j}\otimes M_{g,k}}\Pi\bigr\|
  \leq |\beta|(d-1)\delta_g.
  \label{eq:app-qft-mixed-phase-multiplication-leakage}
\end{equation}

For the CPHS realization in
Eq.~\eqref{eq:qft-mixed-angle-momentum-cphs-reduction}, take
$f_n=\delta_{n0}$. Then
\begin{equation}
  \mathrm{CPHS}_{jk}(\beta)|p\rangle_{R_j}|x\rangle_{R_k}
  =e^{i\beta px}|p\rangle_{R_j}|x\rangle_{R_k}.
\end{equation}
Conjugating by $T_k$ proves
Eq.~\eqref{eq:qft-mixed-angle-momentum-cphs-reduction}. For the general
residue-class code,
\begin{equation}
  \mathrm{CPHS}_{jk}(\beta)|\overline p\rangle_{R_j}|\overline x\rangle_{R_k}
  =\sum_{n,m\in\mathbb Z}f_nf_m e^{i\beta(p+nd)(x+md)}
  |p+nd\rangle_{R_j}|x+md\rangle_{R_k}.
\end{equation}
The $n,m$-dependent phases generally prevent the codeword from acquiring a
uniform logical phase $e^{i\beta px}$.

\subsection{Base-\texorpdfstring{$d$}{d} Fourier phase identity}

For digits $x_k,p_j\in\{0,\ldots,d-1\}$, write
$\mathbf x=(x_1,\ldots,x_r)$ and
$\mathbf p=(p_1,\ldots,p_r)$, and define
\begin{equation}
  X(\mathbf x) = \sum_{k=1}^{r}x_kd^{r-k}, \qquad P_{\mathrm{rev}}(\mathbf p) =
  \sum_{j=1}^{r}p_jd^{j-1}.
  \label{eq:app-base-d-integers}
\end{equation}
Expanding the product in
Eq.~\eqref{eq:app-base-d-integers} gives
\begin{equation}
  \frac{X(\mathbf x)P_{\mathrm{rev}}(\mathbf p)}{d^r} - \sum_{1\leq j\leq k\leq r}
  \frac{p_jx_k}{d^{k-j+1}} = \sum_{j>k}p_jx_kd^{j-k-1} \in \ZZ.
\end{equation}
Consequently,
\begin{equation}
  \exp\!\left[ \frac{2\pi iX(\mathbf x)P_{\mathrm{rev}}(\mathbf p)}{d^r} \right] = \exp\!\left[ 2\pi i
  \sum_{1\leq j\leq k\leq r} \frac{p_jx_k}{d^{k-j+1}} \right].
  \label{eq:app-base-d-phase-identity}
\end{equation}
If
$y_j=p_{r-j+1}$, then
$P_{\mathrm{rev}}(\mathbf p)=\sum_{j=1}^{r}y_jd^{r-j}$, and
\begin{equation}
  \bigotimes_{j=1}^{r} |\overline{p_j}\rangle_{R_j} = \mathrm{REV}_r \bigotimes_{j=1}^{r}
  |\overline{y_j}\rangle_{R_j}.
\end{equation}
Hence $\mathrm{REV}_r$ relates the physical and standard digit orders.

\subsection{Proof of the angle--momentum factorization}
\label{app:qft-angle-momentum-factorization}

Consider the angle-code basis state
\begin{equation}
  |\overline{\mathbf x}\rangle = \bigotimes_{k=1}^{r} |\overline{x_k}\rangle_{R_k}, \qquad \mathbf x =
  \bigl( x_1,\ldots,x_r \bigr).
\end{equation}
Write $X(\mathbf x)$ for the associated input integer in
Eq.~\eqref{eq:app-base-d-integers}.

When processing register \(R_j\), the local angle-to-momentum transform
gives
\begin{equation}
  F_{d,j}^{A\to M} |\overline{x_j}\rangle_{R_j} = \frac{1}{\sqrt d} \sum_{p_j=0}^{d-1}
  \exp\!\left( \frac{2\pi i}{d} p_jx_j \right) |\overline{p_j}\rangle_{R_j}.
\end{equation}
For each \(k>j\), the operation
$\Lambda_{jk}(2\pi/d^{k-j+1})$ contributes
\begin{equation}
  \exp\!\left(
    \frac{2\pi i}{d^{k-j+1}}
    p_jx_k
  \right).
\end{equation}
The total contribution associated with \(p_j\) is therefore
\begin{equation}
  \exp\!\left[ 2\pi i p_j \sum_{k=j}^{r} \frac{x_k}{d^{k-j+1}} \right].
\end{equation}

Multiplying the contributions from all \(r\) register steps gives
\begin{equation}
  U_{\mathrm{AM}} |\overline{\mathbf x}\rangle
  = \frac{1}{\sqrt{d^r}} \sum_{p_1,\ldots,p_r=0}^{d-1}
  \exp\!\left[ 2\pi i \sum_{1\leq j\leq k\leq r}
  \frac{p_jx_k}{d^{k-j+1}} \right]\times
  \bigotimes_{j=1}^{r} |\overline{p_j}\rangle_{R_j}.
  \label{eq:app-qft-am-circuit-output}
\end{equation}

The local transforms give the terms $j=k$ in the exponent of
Eq.~\eqref{eq:app-qft-am-circuit-output}, and the mixed
angle--momentum phases give the terms $j<k$.
Eq.~\eqref{eq:app-base-d-phase-identity} identifies their product
with the standard $d^r$-point Fourier phase and gives the register-order
reversal. Comparing Eq.~\eqref{eq:app-qft-am-circuit-output} with the
definition of
\(F_N^{A\to M}\) now gives
\begin{equation}
  U_{\mathrm{AM}} |\overline{\mathbf x}\rangle = \mathrm{REV}_r
  F_N^{A\to M} |\overline{\mathbf x}\rangle.
  \label{eq:app-qft-am-basis-state-identity}
\end{equation}
Since the states
\(
|\overline{\mathbf x}\rangle
\)
form a basis of
\(\mathcal A_d^{\otimes r}\),
Eq.~\eqref{eq:app-qft-am-basis-state-identity} proves
Eq.~\eqref{eq:angle-momentum-base-d-qft}.

\clearpage
\section{Gate-set compilation of rotor QFTs}
\label{app:qft-gate-set-compilation}

\subsection{Verification of the one-rotor momentum-code QFT}
\label{app:verification-one-rotor-momentum-code-qft}

To verify Eq.~\eqref{eq:compiled-one-rotor-qft-action}, write
$p=\sum_{k=0}^{s-1}2^kp_k$, with $p_k\in\{0,1\}$, and consider the
$j$th stage
$F_j=B_jS_j^\dagger C_j(H_a\otimes I_R)S_jB_j^\dagger$.
Immediately before this stage, the downward-induction invariant is
\begin{equation}
  \ell_j = \sum_{k=0}^{j}2^kp_k + \sum_{k=j+1}^{s-1}2^kc_k.
\end{equation}
The positions $0,\ldots,j$ retain the corresponding input bits, while the
higher positions contain the output bits already generated. Removing bit
$p_j$ gives $u_j=\ell_j-2^jp_j$, whose $j$th binary digit is zero.

The gates $B_j^\dagger$ and $S_j$ extract $p_j$ into the ancilla and remove
that bit from the rotor momentum. The Hadamard and the conditional phase
produce the superposition over $c_j$, after which $S_j^\dagger$ writes
$c_j$ into the vacant position and $B_j$ resets the ancilla. Their combined
action is
\begin{equation}
  \begin{split}
    F_j
    \bigl(
      |0\rangle_a\otimes|\overline{\ell_j}\rangle_R
    \bigr)
    &=
    \frac{|0\rangle_a}{\sqrt2}
    \otimes
    \sum_{c_j=0}^{1}
    (-1)^{c_jp_j}
    e^{i\alpha_jc_ju_j}
    |\overline{u_j+2^jc_j}\rangle_R .
  \end{split}
  \label{eq:rotor-code-one-stage-action}
\end{equation}
For $j=0$, the integer $u_0$ is even and $\alpha_0=\pi$, so $C_0$ acts
trivially on every state reached in that stage and may be omitted.

The invariant holds initially because $\ell_{s-1}=p$. Equation
\eqref{eq:rotor-code-one-stage-action} replaces $p_j$ by $c_j$ and leaves
all other binary positions unchanged, producing the invariant for stage
$j-1$. Moreover,
\begin{equation}
  (-1)^{c_jp_j}e^{i\alpha_jc_ju_j} = \exp\!\left[ 2\pi i c_j \sum_{k=0}^{j} \frac{p_k}{2^{j-k+1}} \right],
\end{equation}
because the terms containing previously generated bits are integer
multiples of $2\pi$ in the exponent. Downward induction therefore gives
\begin{equation}
  F_0F_1\cdots F_{s-1} \bigl( |0\rangle_a\otimes|\overline p\rangle_R \bigr)
  = \frac{|0\rangle_a}{\sqrt d} \otimes \sum_{c_0,\ldots,c_{s-1}=0}^{1}
  \exp\!\left[ 2\pi i \sum_{j=0}^{s-1} c_j
  \sum_{k=0}^{j} \frac{p_k}{2^{j-k+1}} \right]
  \left| \overline{\sum_{j=0}^{s-1}2^jc_j} \right\rangle_R .
\end{equation}

After the internal binary reversal, set
$y=\sum_{j=0}^{s-1}2^{s-1-j}c_j$. Expanding $py/d$ shows that the omitted
terms are integers and hence
\begin{equation}
  e^{2\pi i py/d} = \exp\!\left[ 2\pi i \sum_{j=0}^{s-1} c_j \sum_{k=0}^{j} \frac{p_k}{2^{j-k+1}} \right].
\end{equation}
The map from $(c_0,\ldots,c_{s-1})$ to $y$ is a bijection, so
\begin{equation}
  \widetilde{\mathrm{BREV}}_s F_0F_1\cdots F_{s-1} \bigl( |0\rangle_a\otimes|\overline p\rangle_R \bigr) =
  \frac{|0\rangle_a}{\sqrt d} \otimes \sum_{y=0}^{d-1} e^{2\pi i py/d} |\overline y\rangle_R =
  |0\rangle_a\otimes F_d^M|\overline p\rangle_R .
\end{equation}
This proves Eq.~\eqref{eq:compiled-one-rotor-qft-action}.

The Fourier-stage sequence remains inside $\mathcal M_d^{(0)}$. Each $S_j$
removes one occupied binary digit and each $S_j^\dagger$ writes into the
vacant position, so no borrow or carry occurs.

\subsection{Binary bit reversal within one rotor momentum code}
\label{app:binary-bit-reversal-one-rotor-momentum-code}

For $0\leq\mu<\nu<s$, let
\begin{equation}
  P_{10}^{\mu\nu}
  =
  \sum_{\substack{0\leq p<d\\p_\mu=1,\;p_\nu=0}}
  |\overline p\rangle_R\langle\overline p|_R,
  \qquad
  P_{01}^{\mu\nu}
  =
  \sum_{\substack{0\leq p<d\\p_\mu=0,\;p_\nu=1}}
  |\overline p\rangle_R\langle\overline p|_R .
\end{equation}
Define the momentum-dependent ancilla rotations
\begin{equation}
  R_{10}^{\mu\nu}
  =(-iY_a)\otimes P_{10}^{\mu\nu}
  +I_a\otimes(I_R-P_{10}^{\mu\nu}),
  \qquad
  R_{01}^{\mu\nu}
  =(iY_a)\otimes P_{01}^{\mu\nu}
  +I_a\otimes(I_R-P_{01}^{\mu\nu}).
\end{equation}
These are MQR operations used in the gate-set compilation. With
$\Delta_{\mu\nu}=2^\nu-2^\mu$, set
\begin{equation}
  S_{\mu\nu} = R_{10}^{\mu\nu} \mathrm{CShift}_{a\to R}(-\Delta_{\mu\nu}) R_{01}^{\mu\nu}
  \mathrm{CShift}_{a\to R}(\Delta_{\mu\nu}) R_{10}^{\mu\nu}.
\end{equation}
A direct check of the four patterns $00,01,10,11$ shows that
$S_{\mu\nu}$ exchanges bits $\mu$ and $\nu$, returns the ancilla to
$|0\rangle_a$, and never leaves the code.

The complete binary reversal exchanges the disjoint pairs
$(0,s-1),(1,s-2),\ldots$. Define
\begin{equation}
  \widetilde{\mathrm{BREV}}_s
  =
  \prod_{\mu=0}^{\lfloor s/2\rfloor-1}
  S_{\mu,s-1-\mu}.
\end{equation}
If
$z=\sum_{\mu=0}^{s-1}2^\mu z_\mu$, then
$\operatorname{rev}_s(z)=
\sum_{\mu=0}^{s-1}2^{s-1-\mu}z_\mu$, and
\begin{equation}
  \widetilde{\mathrm{BREV}}_s \bigl( |0\rangle_a\otimes|\overline z\rangle_R \bigr) = |0\rangle_a \otimes
  |\overline{\operatorname{rev}_s(z)}\rangle_R = |0\rangle_a \otimes
  \mathrm{BREV}_s|\overline z\rangle_R .
\end{equation}
For $s=1$, the product is empty and acts as the identity on
$|0\rangle_a\otimes\mathcal M_2^{(0)}$. The operation count is recorded in
Appendix~\ref{app:qft-logical-instruction-counts}.

\subsection{Verification of the compiled base-\texorpdfstring{$d$}{d} QFT}
\label{app:verification-compiled-base-d-qft}

We suppress the reusable ancilla from the notation, since it is initialized
in $|0\rangle_a$ and returned to $|0\rangle_a$ after each compiled
one-rotor transform.

For $
  \mathbf x
  =
  \bigl(x_1,\ldots,x_r\bigr),
  \qquad
  \mathbf p
  =
  \bigl(p_1,\ldots,p_r\bigr),$
write
\begin{equation}
  |\overline{\mathbf x}\rangle_R := \bigotimes_{k=1}^{r} |\overline{x_k}\rangle_{R_k}, \quad
  |\overline{\mathbf p}\rangle_R := \bigotimes_{j=1}^{r} |\overline{p_j}\rangle_{R_j}.
\end{equation}

When register $R_j$ is processed, the local transform
$F_{d,j}^M$ contributes the phase $\exp\!\left(
    \frac{2\pi i p_jx_j}{d}
  \right).$
At that point, every register $R_k$ with $k>j$ still carries its input
digit $x_k$. Hence $
  \mathrm{CPHS}_{jk}
  \left(
    \frac{2\pi}{d^{k-j+1}}
  \right)$
contributes
$
  \exp\!\left(
    \frac{2\pi i p_jx_k}{d^{k-j+1}}
  \right).$

  Multiplying all local and cross-register phases gives
\begin{equation}
  U_{\mathrm{QFT}} |\overline{\mathbf x}\rangle_R = \frac{1}{\sqrt{d^r}}
  \sum_{\mathbf p\in\{0,\ldots,d-1\}^r} \exp\!\left[ 2\pi i \sum_{1\leq j\leq k\leq r}
  \frac{p_jx_k}{d^{k-j+1}} \right] |\overline{\mathbf p}\rangle_R.
  \label{eq:base-d-qft-output}
\end{equation}
The terms with $j=k$ are generated by the local transforms
$F_{d,j}^M$, and the terms with $j<k$ are generated by the CPHS gates.

By Eq.~\eqref{eq:app-base-d-phase-identity}, the phase in
Eq.~\eqref{eq:base-d-qft-output} is
\begin{equation}
  \exp\!\left[ 2\pi i \sum_{1\leq j\leq k\leq r} \frac{p_jx_k}{d^{k-j+1}} \right] = \exp\!\left[
  \frac{2\pi i X(\mathbf x)P_{\mathrm{rev}}(\mathbf p)}{d^r} \right].
\end{equation}
The same derivation shows that the physical ordering of the digits
$p_1,\ldots,p_r$ differs from the standard base-$d$ output
ordering by $\mathrm{REV}_r$. Since
\(\mathbf p\longmapsto P_{\mathrm{rev}}(\mathbf p)\) is a bijection from
$\{0,\ldots,d-1\}^r$ to $\{0,\ldots,d^r-1\}$, it follows that
\begin{equation}
  U_{\mathrm{QFT}} |\overline{\mathbf x}\rangle_R = \mathrm{REV}_rF_N^M
  |\overline{\mathbf x}\rangle_R.
\end{equation}
The states $|\overline{\mathbf x}\rangle_R$ form a basis of
$(\mathcal M_d^{(0)})^{\otimes r}$. Linearity therefore gives
Eq.~\eqref{eq:compiled-base-d-qft-reversal}.

\subsection{Logical instruction counts}
\label{app:qft-logical-instruction-counts}

Each CShift by an arbitrary integer amount, MQR, conditional-phase, CPHS,
and Hadamard gate is counted as one logical instruction.
Set $f_s=\lfloor s/2\rfloor$.

At stage $j=0$, the reduced momentum is even and $\alpha_0=\pi$.
The gate $C_0=C_{c_a,1}(\pi)$ therefore acts trivially on every state
reached in that stage, leaving $s-1$ nontrivial conditional-phase gates.

\begin{table}[H]
  \centering
  \caption{Logical instruction counts for the explicit momentum-code QFT
  compilation. Each CShift of arbitrary size, MQR, conditional-phase, CPHS,
  and Hadamard gate counts as one instruction.}
  \label{tab:qft-compiled-qft-instruction-counts}
  \renewcommand{\arraystretch}{1.2}
  \begin{tabularx}{\textwidth}{@{}L{0.37\textwidth}Y@{}}
    \toprule
    Operation
      & Logical instruction count
      \\
    \midrule
    One compiled $d$-point rotor QFT $F_d^M$
      & $(2s+3f_s)$ MQR $+(2s+2f_s)$ CShift
        $+s$ Hadamard gates $+(s-1)$ conditional-phase gates
        $C_j$
      \\
    Compiled base-$d$ QFT on $r$ rotor momentum codes
      & $r$ applications of $F_d^M$
        $+\binom{r}{2}$ CPHS
      \\
    Qubit--momentum-code transfer, one direction, $r$ registers
      & $rs$ MQR $+rs$ CShift
      \\
    \bottomrule
  \end{tabularx}
\end{table}

\printbibliography[heading=bibintoc]
\end{document}